\newcommand{\old}[1]{{}}
\newcommand{\later}[1]{{}}
\def\defn#1{\textit{\textbf{\boldmath #1}}}
\renewcommand{\emph}[1]{\defn{#1}} 
\newtheorem{theorem}{Theorem}
\newtheorem{lemma}{Lemma}
\newtheorem{corollary}{Corollary}
\newtheorem{definition}{Definition}
\newtheorem{observation}{Observation}
\def\C{{\mathcal C}}
\newcommand{\IR}{\mathbb{R}}
\newcommand{\eps}{\varepsilon}
\newcommand{\RR}{\mathbb{R}}
\newcommand\diam{\ensuremath{\mathrm{diam}}}
\newcommand\conv{\ensuremath{\mathrm{conv}}}
\newcommand\hull{\ensuremath{\mathrm{hull}}}
\newcommand\spn{\ensuremath{\mathrm{span}}}
\newcommand{\alg}{\textsf{ALG}}
\newcommand{\opt}{\textsf{OPT}}
\begin{document}
\title{Online Hitting Sets for Disks of Bounded Radii} 
\author{Minati De\thanks{Department of Mathematics, Indian Institute of Technology Delhi, New Delhi, India. Email: \texttt{Minati.De@maths.iitd.ac.in}.
}
\and
Satyam Singh\thanks{Department of Computer Science, Aalto University, Espoo, Finland. Email: \texttt{satyam.singh@aalto.fi}}
\and
Csaba D. T\'oth\thanks{Department of Mathematics, California State University Northridge, Los Angeles, CA; and Department of Computer Science, Tufts University, Medford, MA, USA. Email: \texttt{csaba.toth@csun.edu}. 
}
}
\date{}

\maketitle              

\begin{abstract}
We present algorithms for the online minimum hitting set problem in geometric range spaces: given a set $P$ of $n$ points in the plane and a sequence of geometric objects that arrive one-by-one, we need to maintain a hitting set at all times by making irrevocable decisions. For disks of radii in the interval $[1,M]$, we present an $O(\log M \log n)$-competitive algorithm. This result generalizes from disks to positive homothets of any convex body in the plane with scaling factors in the interval $[1,M]$. As a main technical tool, we reduce the problem to the online hitting set problem for a finite subset of integer points and geometric objects with the lowest point property, introduced in this paper, which behave similarly to bottomless rectangles.  Specifically, for a given $N>1$, we present an $O(\log N)$-competitive algorithm for the variant where $P$ is a subset of an $N\times N$ section of the integer lattice, and the geometric objects have the lowest point property. 
\end{abstract}

\textbf{Keywords:} Disks, Homothets, Geometric Hitting Set, Online Algorithm.

\section{Introduction}  \label{sec:intro}
\pagenumbering{arabic}
In the general form of the \textsl{Hitting Set} problem, we are given a set $P$ of elements and a collection of sets $\mathcal{C}=\{S_1,\ldots, S_m\}$, where $S_i \subseteq P$ for every $i\in[m]$, and the goal is to find a set $H\subset P$ (\emph{hitting set}) of minimal size such that every set $S_i\in \mathcal{C}$ contains some point in $H$. 
The \textsl{Hitting Set} problem is known to be dual to the \textsl{Set Cover} problem. 
In the \textsl{\emph{Online} Hitting Set} problem, the set $P$ is known in advance, but the subsets $S_1,S_2,\ldots$ in $\mathcal{C}$ arrive one at a time (without advance knowledge). We need to maintain a hitting set $H_i\subseteq P$ for the first $i$ sets $\{S_1,\ldots, S_i\}$ such that $H_i\subseteq H_{i+1}$ for all $i\geq 1$ (i.e., we can add new points to the hitting set, but we cannot delete any point).

Online algorithms make irrevocable decisions without knowledge of future inputs. Their performance is measured by the \emph{competitive ratio}, which compares the output of the online algorithm with the offline optimum for the same input set. The \emph{computational complexity} of the online algorithm is generally regarded as a secondary measure.
Let $\alg$ be an online algorithm for the \textsl{Online Hitting Set} problem on the instance $(P,\mathcal C)$. The \emph{competitive ratio} of $\alg$, denoted by $\rho (\alg)$, is the supremum, over all possible input sequences $\sigma$, of the ratio between the size $\alg(\sigma)$ of the hitting set constructed by $\alg$ and the minimum size $\opt(\sigma)$ of a hitting set for the input sequence $\sigma$:
\[\rho (\alg) = \sup_{\sigma} \left[ \frac{\alg(\sigma)}{\opt(\sigma)}\right].\]


The study of the \textsl{Online Hitting Set} problem (which is dual to the \textsl{Online  Set Cover} problem) was initiated by Alon et al.~\cite{AlonAABN09}. 
They designed a deterministic algorithm with a competitive ratio of $O(\log |P| \log |{\C}|)$ and obtained almost matching lower bound of $\Omega\left(\frac{\log |P| \log |\C|}{\log\log |P| +\log\log |\C|}\right)$.

\smallskip\noindent
\textbf{Geometric Hitting Set.}
In the \emph{geometric} \textsl{Hitting Set} problem, we have $P\subseteq \mathbb{R}^d$ for some constant dimension $d$, and the sets in $\mathcal{C}$ are geometric objects of some type: for example, balls, unit balls, simplices, axis-aligned cubes, or hyper-rectangles. Depending on whether $P$ is finite or infinite, there are different versions of the problem. 

\subsection{Related Previous Work} 

\smallskip\noindent
\textbf{Offline Hitting Set Problem.}
As noted above, the \textsl{Hitting Set}  problem is equivalent to the \textsl{Set Cover} problem in the abstract setting. The greedy algorithm provides an $O(\log n)$-approximation for $n=|P|$, and the minimum hitting set cannot be approximated within a factor $(1-o(1))\log n$ unless $\text{P}=\text{NP}$~\cite{DinurS14}.
Tighter approximation algorithms are available in the geometric setting. 
There are polynomial-time algorithms for intervals in the real line, or for axis-aligned rectangles that intersect an $x$-monotone curve in the plane~\cite{ChepoiF13}.
Liu and Wang~\cite{LiuW25} recently considered unit disks in the plane in the line-separated setting, where the point set $P$ and the centers of the disks lie on opposite sides of a line: they gave an $O(n^{3/2}\log^2 n)$-time exact algorithm in this special case. However, the \textsl{Hitting Set} problem remains NP-hard for simple geometric objects in the plane, such as unit disks, axis-aligned unit squares~\cite{FowlerPT81}, axis-parallel line segments~\cite{FeketeHMPP18}, or lines of three different slopes~\cite{FeketeHMPP18}; and it is APX-hard for axis-aligned rectangles~\cite{MadireddyM19a}.

Br{\"{o}}nnimann and Goodrich~\cite{BronnimannG95} gave an $O(\log \opt)$-approximation algorithm running in near-linear time for the \textsl{Hitting Set} problem when the set system $(P,\mathcal{C})$ has bounded VC-dimension; see also~\cite{AgarwalES12,EvenRS05}. Agarwal and Pan~\cite{AgarwalP20} further improved the approximation ratio to $O(\log \log \opt)$ for axis-aligned boxes in dimensions $d\in \{2,3\}$, and to $O(1)$ for translates of a convex polytope in 3-space. Mustafa and Ray~\cite{MustafaR10} gave a PTAS, using the local search paradigm, for pseudo-disks in the plane and halfspaces in dimensions $d\in \{2,3\}$. 
%

\smallskip\noindent
\textbf{Online Hitting Set Problem.}
When $P$ is finite, Even and Smorodinsky~\cite{EvenS14} initiated the study of the geometric online \textsl{Hitting Set} problem for various geometric objects. They established an optimal competitive ratio of $\Theta(\log |P|)$ when the objects are intervals in $\mathbb{R}$, or half-planes or congruent disks in the plane. 
Later, Khan et al.~\cite{KhanLRSW23} investigated this problem for integer points $P \subseteq [0,N)^2 \cap \mathbb{Z}^2$ and a collection $\C$ of axis-aligned squares $S \subseteq [0, N)^2$ with integer coordinates for $N>0$. 
They developed an $O(\log N)$-competitive algorithm for this variant.
They also established a randomized lower bound of $\Omega(\log |P|)$, where $P\subset\mathbb{R}^2$ is a finite point set and $\C$ consists of translates of an axis-aligned square. 
Recently, De et al.~\cite{DeMS24} considered the variant when $P$ is set of $n$ points in $\IR^2$ and $\C$ consists of homothetic copies of a regular $k$-gon (for $k\geq 4$) with scaling factors in the interval $[1,M]$, and designed an $O(k^2\log M\log n)$-competitive randomized algorithm. Although a disk can be approximated by a regular $k$-gon as $k\to \infty$, this does not imply any competitive algorithm for disks with radii in the interval $[1,M]$.

When the point set $P$ is infinite, one may further distinguish between the \emph{continuous} setting where $P=\IR^d$ (also known as the \emph{piercing problem}) and the \emph{discrete} setting where $P$ is a discrete subset of $\RR^d$ (for example, $P=\mathbb{Z}^d$).

\smallskip\noindent
\textbf{Continuous Setting.}
In the geometric setting, the duality between the \textsl{Hitting Set} problem and the \textsl{Set Cover} problem only holds when the objects are translates of a convex body~\cite[Theorem~2]{DeJKS24}. Hence the results obtained for the \textsl{Set Cover} problem for translates of a convex body also hold for the \textsl{Hitting Set} problem. Charikar et al.~\cite{CharikarCFM04} studied the \textsl{Online Set Cover} problem for translates of a ball. They proposed an algorithm with a competitive ratio of $O(2^dd\log d)$. They also proved $\Omega(\log d/\log\log \log d)$ as the deterministic lower bound of the competitive ratio for this problem. 
Dumitrescu et al.~\cite{DumitrescuGT20} improved the bounds on the competitive ratio for translates of a ball, establishing an upper bound of $O({1.321}^d)$ and a lower bound of $\Omega(d+1)$.
For translates of a centrally symmetric convex body, they proved that the competitive ratio of every deterministic algorithm is at least $I(s)$, where $I(s)$ is the illumination number of the object $s$\footnote{The \emph{illumination number} of an object $s$, denoted by $I(s)$, is the minimum number of smaller homothetic copies of $s$ (i.e., $\lambda s$, where $\lambda\in(0,1)$) whose union contains $s$.}. 
For translates of an axis-aligned hypercube in $\mathbb{R}^d$, Dumitrescu and T{\'{o}}th~\cite{DumitrescuT22} proved that the competitive ratio of any deterministic algorithm for \textsl{Online Set Cover} is at least $2^d$. 
Later, De et al.~\cite{DeJKS24} studied the \textsl{Online Hitting Set} problem for $\alpha$-fat objects in $\mathbb{R}^d$ with diameters in $[1,M]$ and designed a deterministic algorithm with a competitive ratio of $O\left((2+\frac{2}{\alpha})^d\log M\right)$. 
For hitting axis-aligned homothetic hypercubes with side lengths in $[1,M]$, they gave a deterministic algorithm with a competitive ratio of at most~{$3^d{\lceil}\log_2 M{\rceil}+2^d$}. They also proved a $\Omega(d\log M+2^d)$ lower bound for the problem of hitting homothetic hypercubes in $\mathbb{R}^d$, with side lengths in $[1,M]$.

\smallskip\noindent
\textbf{Discrete Setting.} 
De and Singh~\cite{DeS24} 
studied a variant of the \textsl{Online Hitting Set}  problem where $P=\mathbb{Z}^d$ and $\C$ consists of translates of a ball or an axis-aligned hypercube in $\mathbb{R}^d$. For translates of an axis-aligned hypercube, they showed that there is a randomized algorithm with an expected competitive ratio 
$O(d^2)$ and also proved that every deterministic algorithm has a competitive ratio 
at least~$d+1$. For translates of a ball in $\mathbb{R}^d$, they proposed a deterministic $O(d^4)$-competitive algorithm and proved that the competitive ratio of every deterministic algorithm is at least~$d+1$, for $d\leq 3$. Recently, Alefkhani et al.~\cite{AlefkhaniKM23} considered the variant where $P=(0, N)^d\cap\mathbb{Z}^d$ and ${\C}$ is a family of $\alpha$-fat objects in $(0, N)^d$, for some constant $\alpha>0$. They proposed a deterministic algorithm with a competitive ratio 
at most $(\frac{4}{\alpha} +1)^{2d} \log N$, and proved that the competitive ratio of every deterministic algorithm is $\Omega\left(\frac{\log N}{1+\log \alpha}\right)$. 
Very recently, De et al.~\cite{DeMS24} improved both the upper and lower bounds of Alefkhani et al.~\cite{AlefkhaniKM23}. They considered the case where $P=\mathbb{Z}^d$ and ${\C}$ is a family of $\alpha$-fat objects with diameters in $[1, M]$, for some constant $\alpha>0$. They presented a deterministic algorithm with a competitive ratio 
$O((\frac{4}{\alpha})^{d} \log M)$, and established that the competitive ratio of any randomized algorithm is $\Omega(d\log M)$.

\subsection{Our Results and Technical Contribution} 
We study the \textsl{Online Hitting Set} problem when $P$ is a set of $n$ points in $\IR^2$. 
\Cref{table_1} summarizes the existing results and the results of this paper.
\begin{table}[htbp]
    \centering
    \begin{tabular}{||p{3cm}|p{5.5cm}|p{2.25 cm}|p{3.5 cm}||} 
 \hline
 Points & Objects & Lower Bound & Upper Bound \\ [0.5ex] 
 \hline\hline
 $P\subset \IR$ & Intervals in $\IR$ & $\Omega(\log n)$~\cite{EvenS14} & $O(\log n)$~\cite{EvenS14} \\ 
 \hline
 $P\subset \IR^2$ & Half-planes in $\IR^2$ & $\Omega(\log n)$~\cite{EvenS14} & $O(\log n)$~\cite{EvenS14} \\
  \hline
 $P\subset \IR^2$ & Congruent disks in $\IR^2$ & $\Omega(\log n)$~\cite{EvenS14} & $O(\log n)$~\cite{EvenS14} \\
 \hline
 $P\subseteq [0, N)^2\cap\mathbb{Z}^2$ & Axis-aligned squares in $[0, N)^2$ with integer vertices &  $\Omega(\log n)$~\cite{KhanLRSW23} $(\#)$ & $O(\log N)$~\cite{KhanLRSW23}          \\
 \hline
 $P\subset \IR^2$ & Homothetic copies of a regular $k$-gon ($k\geq 4)$ with scaling factors in  $[1, M]$ & $\Omega(\log n)$~\cite{KhanLRSW23} $(\#)$ & $O(k^2\log M \log n)$~\cite{DeMS24} $(\#)$\\
 \hline
 \hline \hline
 $P\subseteq [0, N)^2\cap\mathbb{Z}^2$ & Bottomless rectangles (for definition, see \Cref{ssec_bottmless}) & $\Omega(\log n)~\cite{EvenS14}$ & $O(\log N)$\hspace{1cm} [\Cref{thm:bottomless}]          \\
 \hline
 $P\subset \IR^2$ & Disks with radii in $[1,M]$ & $\Omega(\log n)$~\cite{EvenS14} & $O(\log M\log n)$\hspace{1cm} [\Cref{thm:disks}] \\ 
 \hline
  $P\subset \IR^2$ & Positive homothets of an arbitrary convex body 
  with scaling factors in 
  $[1, M]$ & $\Omega(\log n)$~\cite{KhanLRSW23} $(\#)$ & $O(\log M\log n)$\hspace{1cm} [\Cref{thm:homothets}] \\ 
 [1ex] 
 \hline
\end{tabular}
    \caption{Summary of known and new results for the geometric \textsl{Online Hitting Set} problem where $|P|=n$ is finite; $(\#)$ indicates randomized results. Our results are listed in the last three lines.}
    \label{table_1}
\end{table}
We now present our contributions and briefly discuss the technical ideas involved.

\smallskip\noindent
\textbf{Bottomless Rectangles in $[0,N]^2$.} We present an $O(\log N)$-competitive deterministic algorithm for the geometric \textsl{Online Hitting Set} problem, where $P \subset [0, N)^2 \cap \mathbb{Z}^2$, and $\C$ is a sequence of bottomless rectangles of the form $[a, b) \times [0, c)$, where $0 \leq a < b \leq N$ and $0 \leq c \leq N$, arriving one by one (\Cref{thm:bottomless} in \Cref{sec:bottomless}). When a bottomless rectangle $[a,b)\times[0,c)$ arrives, our algorithm chooses hitting points guided by the \emph{canonical partition} of the interval $[a,b)$ (see \Cref{sec:bottomless} for a definition). For each point $p$ in an offline optimum, this structured canonical partition ensures that $O(\log N)$ points are sufficient to hit all 
incoming bottomless rectangles 
that are hit by $p$. We also prove that our algorithm is $O(\log N)$-competitive for a broader class of objects: sets $S \subset [a, b) \times \IR$ with the \emph{lowest-point property} (see \Cref{ssec_lowestpoint} for a definition).

\noindent
\textbf{Disks with Radii in $[1,M]$.} Our main result is a deterministic $O(\log M \log n)$-competitive \textsl{Online Hitting Set} algorithm for an arbitrary set $P$ of $n$ points in the plane, and a sequence of disks with radii in $[1,M]$  (\Cref{thm:disks} in \Cref{sec:disks}). Previously, an $O(\log n)$-competitive algorithm was known only for congruent disks~\cite{EvenS14}. In particular, our result is the first $O(\log n)$-competitive algorithm that works for disks with radii in $[1,1+\varepsilon]$ for any constant $\varepsilon>0$ (\Cref{corollary:disk}).

However, a finite set of disks in the plane do not necessarily have the lowest-point property. We reduce the problem to objects with the lowest-point property in two steps. First, we consider a restricted version, the \emph{line-separated setting} (\Cref{sec:separated}), where the centers of disks in $\mathcal{C}$ lie on one side of a line (w.l.o.g.,\ the $x$- or $y$-axis), while $P$ lies on the other side. We use the concept of \emph{disk hull} for a point set (introduced by Dumitrescu et al.~\cite{DumitrescuGT22}), which generalizes the notion of convex hulls and $\alpha$-hulls. Among other important properties, the boundary of the disk hull is monotone w.r.t.\ the separating line. Using these properties, we reduce the \textsl{Online Hitting Set} problem in the line-separated setting to objects with the lowest-point property, and obtain an $O(\log n)$-competitive algorithm in the line-separated setting (\Cref{thm:gbottomless} in \Cref{sec:separated}).

In general, there is no restriction on the location of the points in $P$ and the centers of disks. We reduce the general problem to the line-separated setting as follows: we partition the disks of radii in the interval $[1, M]$ into $O(\log M)$ layers, ensuring that the ratio of radii of disks in each layer is bounded by at most 2. For each layer, our algorithm maintains a tiling of the plane into axis-aligned squares such that (a) any disk of a given layer contains the entire tile that contains the disk center, and (b) each disk intersects only $O(1)$ tiles. Our algorithm simultaneously runs several invocations of the line-separating algorithm (one for each directed grid line). When a disk arrives, our algorithm inserts it into all relevant invocations of the line-separating algorithms; we show that only $O(1)$ invocations are relevant. In the competitive analysis, we show that for each point $p$ in an offline optimum solution, our algorithm uses $O(\log n)$ hitting points for the disks in each layer that contain $p$. Since there are $O(\log M )$ layers, our algorithm is $O(\log M \log n)$-competitive.

\smallskip\noindent
\textbf{Homothets of a Convex Body with Diameters in $[1,M]$.} 
We generalize our main result from disks to positive homothets of any convex body in the plane, where the radii in the interval $[1,M]$ are replaced by scaling factors in the interval $[1,M]$ (\Cref{thm:homothets} in \Cref{sec:homothet}).  
Our online algorithm is based on a two-stage approach, similar to the case of disks, and it is $O(\log M \log n)$-competitive. The key technical difficulty arises from the geometric differences between a disk and a general convex body. It is easy to extend the concept of a disk hull to hulls for homothetic convex bodies. However, unlike for disks, the boundary of the hull is not necessarily $x$- or $y$-monotone: we show that it is monotone w.r.t.\ some carefully chosen directions. To generalize a layered decomposition of axis-parallel lines, we need \emph{two} directions in which the hull is monotone, the two directions must be far apart (in the space of directions), to create a tiling with properties (a) and (b) above. We call a pair of directions satisfying these requirements a \emph{good pair} of directions. We use a careful geometric argument, which heavily relies on convexity, a suitable affine transformation, and the variational method (i.e., the intermediate value theorem) to prove that every convex body in the plane admits a good pair of directions (\Cref{thm:body} in \Cref{sec:homothet}).

\section{Bottomless Rectangles and Integer Points}
\label{sec:bottomless}

We present an $O(\log N)$-competitive algorithm for the \textsl{Online Hitting Set} problem where $P$ is a subset of an $N\times N$ section of the integer lattice, and the objects are \textit{bottomless rectangles} (\Cref{ssec_bottmless}); and then generalize the algorithm for the same point set but with objects that have the \textit{lowest-point property} (\Cref{ssec_lowestpoint}).   

\subsection{Bottomless Rectangles}
\label{ssec_bottmless}

In this section we present an $O(\log N)$-competitive algorithm for the \textsl{Online Hitting Set} problem where $P$ is a subset of the integer lattice with nonnegative coordinates less than $N$, that is, $P\subseteq [0,N)^2\cap\mathbb{Z}^2$; and the objects are bottomless rectangles.
{\emph{Bottomless rectangles}} are of the form $r_i=[a_i,b_i)\times [0,c_i)$, where $0\leq a_i<b_i\leq N$ and $0\leq c_i\leq N$.
Note that there are only $O(N^3)$ combinatorially different rectangles w.r.t.\ $P$, so the general result by Alon et al.~\cite{AlonAABN09} gives an algorithm for the \textsl{Online Hitting Set} problem with a competitive ratio of $O(\log^2N)$. In this section, we present an $O(\log N)$-competitive algorithm, which is the best possible (a matching lower bound follows from the lower bound for the \textsl{Online Hitting Set} problem for intervals in one-dimension~\cite{EvenS14}).

\smallskip\noindent
\textbf{Preliminaries.}
We need some preparation before we can present the online algorithm. We may assume w.l.o.g.\ that $N$ is a power of 2, and every bottomless rectangle $r_i=[a_i,b_i)\times [0,c_i)$ is given with integer parameters $a_i$, $b_i$, and $c_i$. 
An interval $I$ is \emph{canonical} if it is of the form $I=\left[q{2^j},(q+1){2^j}\right)$ for some integers $q,j\geq 0$. For a canonical interval $I=\left[q2^j,(q+1) 2^j\right)$, we also define the \emph{left neighbor} 
$L(I)=\left[ (q-1) 2^j,q2^j\right)$ and the \emph{right neighbor} $R(I)=\left[ (q+1)2^j,(q+2)2^j\right)$.
For every canonical interval $I$, if $(I\times [0,N))\cap P\neq \emptyset$, then let $p(I)$ denote a \emph{lowest-point} in $(I\times [0,N))\cap P\neq \emptyset$ (that is, a point with minimum $y$-coordinate; ties are broken arbitrarily). If $(I\times [0,N))\cap P= \emptyset$, then $p(I)$ is undefined.

For every interval $[a,b)$ with nonnegative integer endpoints, 
we define a \emph{canonical partition}, i.e., a partition of $[a,b)$ into canonical intervals. This partition is standard---we walk through some of the technical details because we need them for our algorithm and its analysis.  
Let $j\geq 0$ be the largest integer such that $q 2^{j}\in (a,b)$, for some $q\in \mathbb{Z}$. (Note that $q\in \mathbb{Z}$ is unique. Indeed, suppose that $q$ is not unique, say $q 2^{j}, (q+1) 2^{j}\in (a,b)$. Since $q$ or $q+1$ is even, then $q/2$ or $(q+1)/2$ is an integer. Now, we have $\frac{q}{2} 2^{j+1} \text{ or } \frac{q+1}{2}\,2^{j+1}\in (a,b)$, which  contradicts the maximality of $j$.) We call the integer $s_{[a,b)}:=q 2^{j}$ the \emph{splitting point} of $[a,b)$. We can partition a given interval $[a,b)$ into canonical intervals as follows. If $[a,b)$ is not canonical, find its splitting point $s=s_{[a,b)}$, partition it into two intervals $[a,b)=[a,s)\cup [s,b)$, and recurse on $[a,s)$ and $[s,b)$. 
For example, the splitting point of an interval $[5,11)$ is 8, and its canonical partition is $[5,11)=[5,6)\cup [6,8)\cup [8,10)\cup [10,11)$;
see~\Cref{fig:alg_1} for an illustration. 

Note also that in the canonical partition of $[a,s)$ (resp., $[s,b)$), there is at most one interval of each size, where the possible sizes are powers of 2 between 1 and $s-a$ (resp., $b-s$). Specifically, if $I$ is in the canonical partition of $[a,s)$, then its left neighbor $L(I)$ is not contained in $[a,b)$, consequently $a\in\overline{L(I)}$, where $\overline{L(I)}$ is the closure of $L(I)$. Similarly, if $I$ is in the canonical partition of $[s,b)$, then $b\in R(I)$.

\smallskip\noindent
\textbf{Online algorithm $\alg$ for bottomless rectangles.}
We can now present our online algorithm. We maintain a hitting set $H_i\subseteq P$, which is initially empty: $H_0=\emptyset$. When the $i$-th bottomless rectangle $r_i=[a_i,b_i)\times [0,c_i)$ arrives, initialize $H_i:=H_{i-1}$. If $r_i\cap H_i\neq \emptyset$, then do not add any new points to $H_i$. Otherwise, we may assume that $r_i\cap H_i = \emptyset$. Compute the splitting point $s_i$ of $[a_i,b_i)$, and the canonical partitions $\mathcal{A}_i$ and $\mathcal{B}_i$ of $[a_i,s_i)$ and $[s_i,b_i)$, respectively. If $([a_i,s_i)\times [0,c_i))\cap P\neq \emptyset$, then find the largest canonical interval $I\in \mathcal{A}_i$ such that $p(I)\in r_i$, and set $H_i:=H_i\cup \{p(I)\}$. Similarly, if $([s_i,b_i)\times [0,c_i))\cap P\neq \emptyset$, then find the largest interval $I\in \mathcal{B}_i$ such that $p(I)\in r_i$, and set $H_i:=H_i\cup \{p(I)\}$. Overall, we add at most two new points to $H_i$ in step $i$. 

\begin{figure}[!ht]
  \centering
        \includegraphics[page=1,scale=0.5]{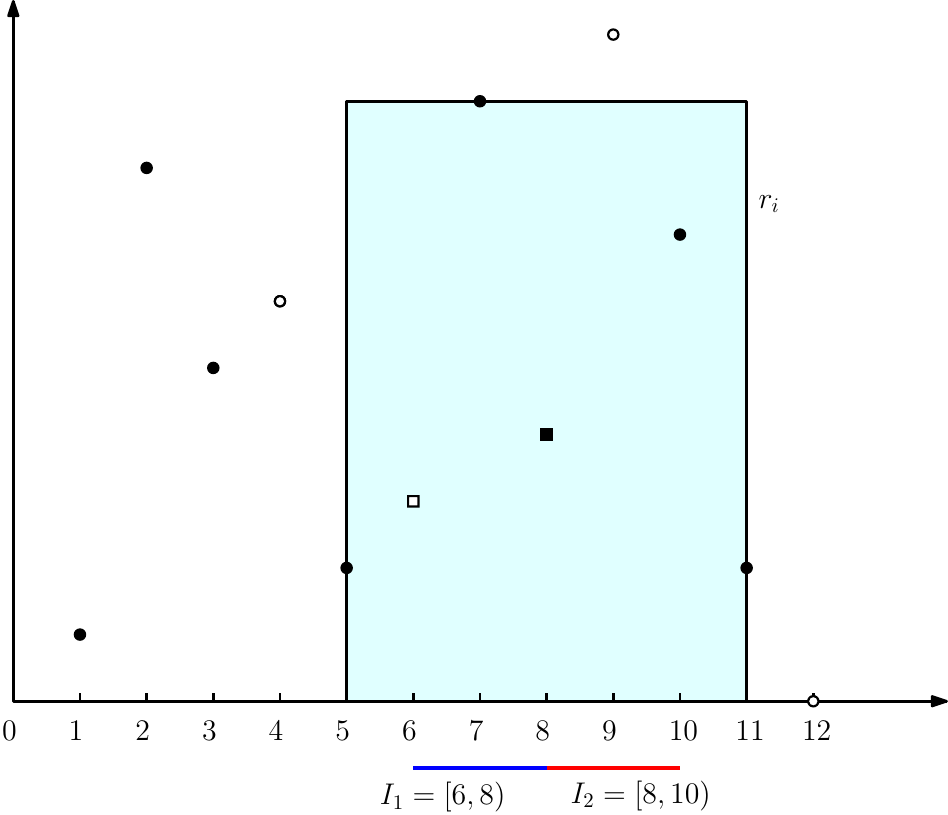}
       \caption{When the $i$th bottomless rectangle $r_i=[5,11)\times [0,c_i)$ arrives, suppose that the hitting set $H_i$ contains the points marked with hollow dots, and $r_i\cap H_i = \emptyset$. The splitting point of $[5,11)$ is 8, with canonical partitions $\mathcal{A}_i=[5,6)\cup[6,8)$ and $\mathcal{B}_i=[8,10)\cup[10,11)$, respectively. Here, $I_1=[6,8)\subset \mathcal{A}_i$ and $I_2=[8,10) \subset \mathcal{B}_i$ are the largest canonical intervals in $\mathcal{A}_i$ and $\mathcal{B}_i$, respectively. The points marked with hollow square (resp., solid square) is the lowest-point in $(I_1\times[0,N))\cap P$ (resp., $(I_2\times[0,N))\cap P$). We add both points to $H_i$. } 
       \label{fig:alg_1}
       \end{figure}

\smallskip\noindent
\textbf{Competitive analysis.} We now prove that $\alg$ is $O(\log N)$-competitive. 
\begin{theorem}\label{thm:bottomless}
For the \textsl{Online Hitting Set} problem for a point set $P\subseteq [0,N)^2\cap \mathbb{Z}^2$ and a sequence of bottomless rectangles, the online algorithm $\alg$ has a competitive ratio of $O(\log N)$.
\end{theorem}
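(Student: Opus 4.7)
I plan to show $|H_n|\le O(\log N)\cdot|\opt|$ by a charging argument against a fixed offline optimum $\opt$. Every point added to $H_n$ has the form $p(I)$ for some canonical interval $I$, introduced in response to a rectangle $r_i$ that was not hit by $H_{i-1}$; since $\opt\cap r_i\ne\emptyset$, choose a witness $p^*\in\opt\cap r_i$ and charge the pick to $p^*$. The goal is to show every $p^*\in\opt$ absorbs $O(\log N)$ charges.

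The structural observation behind the charging is this: if $\alg$ picks $I\in\mathcal{A}_i$ as the largest element of $\mathcal{A}_i$ with $p(I)\in r_i$, then for every strictly larger $I'\in\mathcal{A}_i$ we have $P\cap(I'\times[0,c_i))=\emptyset$, and hence $\opt\cap(I'\times[0,c_i))=\emptyset$. Because the sizes in $\mathcal{A}_i$ are the distinct set bits of $s_i-a_i$ and are arranged in strictly increasing order from $a_i$ toward $s_i$, no witness $p^*\in\opt\cap r_i$ can have $x^*\in[\ell+2^j,s_i)$, where $\ell$ is $I$'s left endpoint and $j=\log_2|I|$. The witness must therefore lie in one of three zones: \textbf{Inside} ($x^*\in I$), \textbf{Left} ($x^*\in[a_i,\ell)$), or \textbf{Cross} ($x^*\ge s_i$). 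Analogous \textbf{Inside}, \textbf{Right}, \textbf{Cross} zones apply to $\mathcal{B}$-side picks by symmetry. The Inside case fixes the level-$j$ canonical interval containing $x^*$ uniquely, so Inside-$\mathcal{A}$ and Inside-$\mathcal{B}$ together contribute at most $2(\log N+1)$ charges to a fixed $p^*$. The Left case pins $\ell$ down to the unique multiple of $2^j$ in $(x^*,x^*+2^j)$ via the inequalities $\ell-2^j<a_i\le x^*<\ell$; so Left-$\mathcal{A}$ and Right-$\mathcal{B}$ together contribute another $2(\log N+1)$ charges. Thus the four non-Cross types already sum to $O(\log N)$ charges per $p^*$.

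The main obstacle is the Cross case, where a priori many different level-$j$ intervals $I$ could be triggered by distinct rectangles all sharing the same $\mathcal{B}$-side witness. I resolve it by pairing. Whenever a Cross-$\mathcal{A}$ pick occurs for $r_i$, the Cross condition $x^*\ge s_i$ together with $p^*\in\opt\cap r_i$ yields $\opt\cap([s_i,b_i)\times[0,c_i))\ne\emptyset$, and since $\opt\subseteq P$ also $P\cap([s_i,b_i)\times[0,c_i))\ne\emptyset$; so $\alg$ simultaneously adds a $\mathcal{B}$-pick for $r_i$. I assign this companion $\mathcal{B}$-pick the same witness $p^*$ (permissible because $p^*\in\opt\cap r_i$). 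Since $x^*\ge s_i$, the companion cannot be of type Cross-$\mathcal{B}$ and must be Inside-$\mathcal{B}$ or Right-$\mathcal{B}$. Moreover, each canonical interval is added to $H$ at most once, so different Cross-$\mathcal{A}$ picks correspond to different rectangles and hence to different companion $\mathcal{B}$-picks; the pairing is injective. Consequently the number of Cross-$\mathcal{A}$ picks witnessed by $p^*$ is bounded by the number of Inside-$\mathcal{B}$ and Right-$\mathcal{B}$ picks witnessed by $p^*$, which is $O(\log N)$. A symmetric argument bounds Cross-$\mathcal{B}$ picks via Inside-$\mathcal{A}$/Left-$\mathcal{A}$ companions. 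Summing all six types yields $O(\log N)$ charges per $p^*\in\opt$, and hence $|H_n|\le O(\log N)\cdot|\opt|$.
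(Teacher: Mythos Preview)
Your proof is correct and rests on the same key observation as the paper: when the witness $p^*$ lies on the $\mathcal{A}$-side of the splitting point $s_i$, the chosen interval $I\in\mathcal{A}_i$ must satisfy $p^*_x\in L(I)\cup I$ (your Inside and Left zones), giving only $2(\log N{+}1)$ candidates for $I$, and no interval is selected twice.

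Where you diverge is bookkeeping. The paper counts \emph{rectangles}, not picks: for a fixed $p\in\opt$ it classifies each rectangle $r_i\in\mathcal{C}_p$ by whether $p_x<s_i$ or $p_x\ge s_i$, and examines only the pick on $p$'s side---which always exists, since $p$ itself certifies that side is nonempty. This immediately lands the analysis in your Inside/Left (resp.\ Inside/Right) regime and never encounters a Cross case. The bound $m\le 4\log N$ on rectangles then yields at most $8\log N$ picks via the trivial factor of~$2$. Your pairing argument is a valid but roundabout way of recovering this: it matches each Cross-$\mathcal{A}$ pick to the non-Cross $\mathcal{B}$-pick of the same rectangle, which is exactly the pick the paper would have looked at directly. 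Once you commit to a single witness per rectangle (as your ``assign the companion the same witness'' step does), you may as well count rectangles and skip the Cross machinery entirely.
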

\begin{proof}
Let $\mathcal{C}$ be a sequence of bottomless rectangles. Let $H$ and $\opt$ be the hitting set returned by the online algorithm $\alg$ and an (offline) minimum hitting set of $\mathcal{C}$, respectively. For a point $p\in \opt$, let $\mathcal{C}_p$ be
the subsequence of bottomless rectangles that contain $p$. It is enough to show that for every $p\in \opt$, our algorithm adds $O(\log N)$ points to $H$ in response to the objects in $\mathcal{C}_p$.

Let $p\in \opt$, with coordinates $p=(p_x,p_y)$; and let $r_1,\ldots , r_m$ be a sequence of bottomless rectangles in $\mathcal{C}_p$ for which our algorithm adds new points to the hitting set. We show that $m=O(\log N)$. 
We can distinguish between two types of rectangles $r_i=[a_i,b_i)\times [0,c_i)$ depending on whether the $x$-coordinate $p_x$ of $p$ is on the left or right of the splitting point $s_i$ of $[a_i,b_i)$: namely, $p_x<s_i$ or $s_i\leq p_x$. We analyze the two types separately (the two cases are analogous). 

Assume  w.l.o.g.\ that $p_x< s_i$ for $i=1,\ldots  ,m$. This means that $p\in [a_i,s_i)\times [0,c_i)$, and so $\alg$ adds the hitting point $p(I)$ for exactly one interval $I\in \mathcal{A}_i$. Suppose that the algorithm adds the hitting point $p(I)$ for $I\in \mathcal{A}_i$. Then $I$ is the largest (hence rightmost) canonical interval in $\mathcal{A}_i$ such that $(I\times [0,c_i))\cap P\neq \emptyset$. Recall that $a_i\in \overline{L(I)}$, where $L(I)$ is the left neighbor of the canonical interval $I$. This implies that $p_x\in L(I)\cup I$; that is, either $I$ or its left neighbor $L(I)$ contains $p_x$. Note that $p_x$ is contained in $\log N$ canonical intervals (one for each possible size), and each of these canonical intervals has a unique right neighbor. Consequently, $I$ is one of at most $2\log N$ canonical intervals under the assumption that $p_x< s_i$ for all $i=1,\ldots , m$. This proves that $m\leq 4\log N$.
\end{proof}

\subsection{Objects with the lowest-point property}
\label{ssec_lowestpoint}

In this section, we generalize \Cref{thm:bottomless} to a broader class of objects. Similarly to \Cref{ssec_bottmless}, let $P\subseteq [0,N)^2\cap \mathbb{Z}^2$. 
For a set $Q\subseteq P$, the span of $Q$, denoted $\spn(Q)$, is the smallest interval $[a,b)$ with integer endpoints $a,b\in \mathbb{Z}$ such that $Q\subset [a,b)\times \mathbb{R}$. 
An object $S$ has the \emph{lowest-point property} if for every point $s=(s_x,s_y)$ in $S\cap P$ and every interval $I\subset \spn(S\cap P)$ that contains $s_x$, the object $S$ contains all points in $(I\times \mathbb{R})\cap P$ with the minimum $y$-coordinate.
See \Cref{fig_setS} for an example of a convex polygon with the lowest-point property, and \Cref{fig:lpalgo} for a disk that does not have this property. Note, in particular, that every bottomless rectangle $r_i=[a_i,b_i)\times [0,c_i)$ has the lowest-point property: indeed, if $s_x\in I\subset [a_i,b_i)$, then $I\times [0,s_y]\subset r_i$.

\begin{figure}[!ht]
     \begin{subfigure}[b]{0.48\textwidth}
          \centering
        \includegraphics[page=6,scale=0.5]{Algorithm.pdf}
    \subcaption{}
    \label{fig_setS}
     \end{subfigure}
      \hfill
     \centering
     \begin{subfigure}[b]{0.48\textwidth}
          \centering
        \includegraphics[page=3,scale=0.5]{Algorithm.pdf}
    \subcaption{}
    \label{fig:lpalgo}
     \end{subfigure}
       \caption{(a) Object $S$ has the lowest-point property. For example, in the yellow (shaded) strip $I\times \IR$, two points have the minimum $y$-coordinate, and $S$ contains both. (b) 
       Disk $S$, with center below the $x$-axis, does not have the lowest-point property: we have $s\in S$, and the interval $I\subset \spn(S)$ contains $s_x$, but $S$ does not contain the point $p\in P$ which has the minimum $y$-coordinate in $I\times \mathbb{R}$.}
       \label{fig_1}
       \end{figure}

Our online hitting set algorithm and its analysis readily generalize when the objects have the lowest-point property.
Let $\mathcal{C}=(S_1,\ldots, S_m)$ be a sequence of objects with the lowest-point property.

\smallskip\noindent
\textbf{Online algorithm $\alg_0$ for objects with the lowest-point property.}
We maintain a hitting set $H_i\subseteq P$, which is initially empty: $H_0=\emptyset$. When $S_i$ arrives, initialize $H_i:=H_{i-1}$. If $S_i\cap H_i\neq \emptyset$, then do not add any new points to $H_i$. Suppose that $S_i\cap H_i = \emptyset$. Let $[a_i,b_i)=\spn(S_i\cap P)$. Compute the splitting point $s_i$ of $[a_i,b_i)$, and the canonical partitions $\mathcal{A}_i$ and $\mathcal{B}_i$ of $[a_i,s_i)$ and $[s_i,b_i)$, respectively. 
If $([a_i,s_i)\times \IR)\cap S_i\cap P\neq \emptyset$, then find the largest interval $I\in \mathcal{A}_i$ such that $p(I)\in S_i$, and set $H_i:=H_i\cup \{p(I)\}$. Similarly, if $([s_i,b_i)\times \IR)\cap S_i\cap P\neq \emptyset$, then find the largest interval $I\in \mathcal{B}_i$ such that $p(I)\in S_i$, and set $H_i:=H_i\cup \{p(I)\}$. Overall, we add at most two new points to $H_i$ in step $i$.

\smallskip\noindent
\textbf{Correctness and competitive analysis.} When $\alg_0$ adds a points $p(I)$ to $H_i$ in step $i$, the lowest-point property ensures that $p(I)\in S_i$. Therefore, $\alg_0$ maintains that $H_i$ is a hitting set for $\{S_1,\ldots ,S_i\}$, proving the correctness of $\alg_0$. We now show that $\alg_0$ is $O(\log N)$-competitive.

\begin{theorem}\label{thm:lowestpoint}
For the \textsl{Online Hitting Set} problem for a point set $P\subset [0,N]^2\cap \mathbb{Z}^2$ and a sequence $\mathcal{C}=(S_1,\ldots , S_m)$ of objects with the lowest-point property, algorithm $\alg_0$ has a competitive ratio of $O(\log N)$.
\end{theorem}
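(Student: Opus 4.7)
The proof will closely mirror that of Theorem~\ref{thm:bottomless}, since algorithm $\alg_0$ differs from $\alg$ only in that the condition ``$(I\times[0,c_i))\cap P\neq\emptyset$'' is replaced by the condition ``$p(I)\in S_i$''. My plan is to fix a minimum hitting set $\opt$ for $\mathcal{C}$ and argue that for every $p=(p_x,p_y)\in\opt$, $\alg_0$ inserts at most $O(\log N)$ points in response to the subsequence $\mathcal{C}_p$ of sets in $\mathcal{C}$ that contain $p$.

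Restrict attention to the sets $S_{i_1},\ldots,S_{i_m}$ in $\mathcal{C}_p$ that actually trigger an insertion, and split the analysis into two symmetric cases according to whether $p_x<s_{i_k}$ or $p_x\ge s_{i_k}$; assume without loss of generality that $p_x<s_i$ throughout. Then $p_x\in[a_i,s_i)$, so $p_x$ lies in a unique canonical interval $J\in\mathcal{A}_i$.

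The one step that genuinely uses the new hypothesis is to show that $J$ is a valid candidate for $\alg_0$, i.e., that $p(J)\in S_i$. Since $J\subseteq[a_i,s_i)\subseteq[a_i,b_i)=\spn(S_i)$ and the point $p\in S_i$ has $p_x\in J$, the lowest-point property applied to $p$ and the interval $J$ asserts that $S_i$ contains every point of $P\cap(J\times\mathbb{R})$ with minimum $y$-coordinate; in particular $p(J)\in S_i$. Consequently $J$ is a candidate, and the interval $I$ actually selected by $\alg_0$ (the largest, hence rightmost, canonical interval in $\mathcal{A}_i$ with $p(I)\in S_i$) satisfies either $I=J$ or $I$ strictly to the right of $J$ in $\mathcal{A}_i$.

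From this point on, the argument is verbatim that of Theorem~\ref{thm:bottomless}: because $a_i\in\overline{L(I)}$ and $J$ lies weakly to the left of $I$, one concludes $p_x\in L(I)\cup I$; this restricts $I$ to one of at most $2\log N$ canonical intervals (the $\log N$ canonical intervals containing $p_x$, each with its right neighbor), giving $m\le 4\log N$. I do not foresee a real obstacle; the lowest-point property was designed precisely so that this verbatim transfer works, and once the single identity $p(J)\in S_i$ is established, the remainder is bookkeeping identical to the bottomless-rectangle case.
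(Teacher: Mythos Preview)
Your proposal is correct and follows essentially the same approach as the paper's proof, which is itself a near-verbatim copy of the proof of Theorem~\ref{thm:bottomless}. In fact, your write-up is more careful: the paper's proof retains the phrases ``$p\in[a_i,s_i)\times[0,c_i)$'' and ``$(I\times[0,c_i))\cap P\neq\emptyset$'' from the bottomless-rectangle case (where $c_i$ has no meaning for general sets $S_i$), whereas you correctly isolate the one place the lowest-point property is needed---showing that the canonical interval $J\ni p_x$ satisfies $p(J)\in S_i$---and then reuse the counting argument verbatim.
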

\begin{proof}
Let $\mathcal{C}$ be a sequence of objects with the lowest-point property. Let $H$ and $\opt$ be the hitting set returned by the online algorithm $\alg_0$ and an (offline) minimum hitting set of $\mathcal{C}$, respectively. For each $p\in \opt$, let $\mathcal{C}_p$
the subsequence of sets in $\mathcal{C}$ that contain $p$. It is enough to show that for every $p\in \opt$, our algorithm adds $O(\log N)$ points to $H$ in response to the objects in $\mathcal{C}_p$.

Let $p\in \opt$, with coordinates $p=(p_x,p_y)$; and let $S_1,\ldots , S_m$ be a sequence of sets in $\mathcal{C}_p$ for which our algorithm adds new points to the hitting set. 
We show that $m=O(\log N)$. 
We can distinguish between two types of sets $S_i$ depending on whether the $x$-coordinate $p_x$ of $p$ is to the left or right of the splitting point $s_i$: namely, $p_x<s_i$ or $s_i\leq p_x$. We analyze the two types separately (the two cases are analogous). 

Assume  w.l.o.g.\ that $p_x< s_i$ for $i=1,\ldots  ,m$. This means that $p\in [a_i,s_i)\times \IR$, and so $\alg$ adds the hitting point $p(I)$ for exactly one interval $I\in \mathcal{A}_i$. Suppose that the algorithm adds the hitting point $p(I)$ for $I\in \mathcal{A}_i$. Then $I$ is the largest (hence rightmost) canonical interval in $\mathcal{A}_i$ such that $(I\times \IR)\cap P\neq \emptyset$. Recall that $a_i\in \overline{L(I)}$, where $L(I)$ is the left neighbor of the canonical interval $I$. This implies that $p_x\in L(I)\cup I$; that is, either $I$ or its left neighbor $L(I)$ contains $p_x$. Note that $p_x$ is contained in $\log N$ canonical intervals (one for each possible size), and each of these canonical intervals has a unique right neighbor. Consequently, $I$ is one of at most $2\log N$ canonical intervals under the assumption that $p_x< s_i$ for all $i=1,\ldots , m$. This proves that $m\leq 4\log N$.
\end{proof}

\section{Disks in the Plane: Separated Setting}
\label{sec:separated}
In this section, we consider the \textsl{Online Hitting Set} problem in the plane, where $P$ is a finite set of points above the $x$-axis (given in advance); and $\mathcal{C}$ consists of disks of arbitrary radii with centers located on or below the $x$-axis (arriving one-by-one). Note that 
the disks in $\mathcal{C}$ do not necessarily have the lowest-point property; see \Cref{fig:lpalgo}. 


\smallskip\noindent
\textbf{Disk hulls for a point set w.r.t.\ disks and its properties.} 
The unit disk hull of a point set was introduced by Dumitrescu et al.~\cite{DumitrescuGT22} as an analogue of the convex hull. Recall that the convex hull $\conv(P)$ of a point set $P\subset\mathbb{R}^2$ is the smallest convex set in the plane that contains $P$. Equivalently, it is the intersection of all closed half-planes that contain $P$; it can be computed by the classical ``rotating calipers'' algorithm, where we continuously rotate a line $\ell$ around $P$ while $P$ remains in one closed half-plane bounded by $\ell$. Intuitively, we obtain the unit disk hull of $P$ by rolling a unit disk, with center on or below the $x$-axis, around $P$. We generalize this notion to disks of any fixed radius $t>0$.

\begin{figure}[htbp]
 \centering
 \includegraphics[width=.85\textwidth]{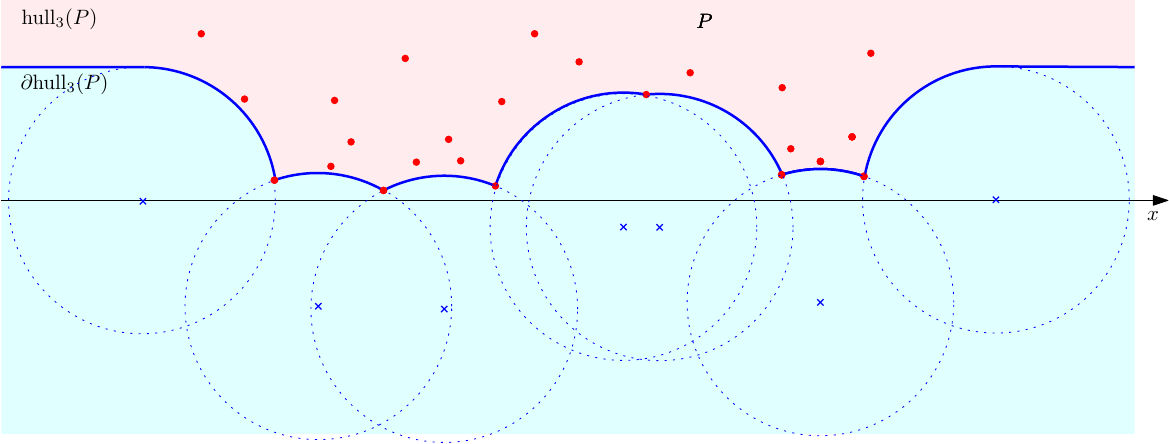}
 \caption{A point set $P$ (red) and region $\hull_3(P)$ (pink). The boundary $\partial \hull_3(P)$ is composed of horizontal lines and circular arcs.}
    \label{fig:hull1}
\end{figure}

\begin{definition}
Let $P \subset \RR^2$ be a finite set of points above the $x$-axis and let $t>0$. Let $\mathcal{D}_t$ be the set of all disks of radius $t$ with centers on or below the $x$-axis. 
Let $M_t(P)$ be the union of all disks $D \in \mathcal{D}_t$ such that $P \cap \mathrm{int}(D) = \emptyset$. Now, we define the \emph{$t$-hull} of $P$ as $\hull_t(P) = \RR^2 \setminus \mathrm{int}(M_t(P))$. The boundary of $\hull_t(P)$ is denoted by $\partial \hull_t(P)$; see \Cref{fig:hull1} for an illustration. 
\end{definition}

Dumitrescu et al.~\cite[Lemma~4]{DumitrescuGT22} proved that $\partial \hull_t(P)$ is $x$-monotone\footnote{A curve in the plane is $x$-monotone if every vertical line intersects it at most once.} for any $t>0$, and established other properties, which were used by Conroy and T\'oth~\cite{ConroyT23}, as well.

\begin{lemma}[Dumitrescu et al.~\cite{DumitrescuGT22}]
\label{lem:hull}
{For a finite set $P \subset \RR^2$ above the $x$-axis and $t>0$}, the following holds:
\begin{enumerate}\itemsep0pt
\item $\partial \hull_t(P)$ lies above the $x$-axis;
\item every vertical line intersects $\partial \hull_t(P)$ in one
  point, thus $\partial \hull_t(P)$ is an $x$-monotone curve;
\item for every disk $D\in \mathcal{D}_t$, the intersection $D\cap (\partial \hull_t(P))$ is connected (possibly empty);
\item for every disk $D\in \mathcal{D}_t$, if $P\cap D\neq \emptyset$,
  then $P\cap D$ contains a point in $\partial \hull_t(P)$.
\end{enumerate}
\end{lemma}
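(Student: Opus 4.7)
The plan is to prove the four assertions in order, each building on the previous. For (1), I give an explicit construction: set $y_0 = \min_{p \in P} p_y > 0$, let $q$ be any point with $q_y \leq 0$, pick $\epsilon > 0$ with $\epsilon < \min(t, y_0 - q_y)$, and take $D = B((q_x, q_y - t + \epsilon), t)$. Its center lies at height $q_y - t + \epsilon < 0$, the disk contains $q$ in its interior (distance $t - \epsilon$ from the center), and its topmost point lies at height $q_y + \epsilon < y_0$, so $\mathrm{int}(D) \cap P = \emptyset$. Thus $q \in \mathrm{int}(M_t(P))$ and $q \notin \partial \hull_t(P)$.

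For (2), I would show each vertical slice $\hull_t(P) \cap (\{x_0\} \times \mathbb{R})$ is an upward-closed ray. Suppose $(x_0, y_2) \in \mathrm{int}(M_t(P))$, witnessed by a disk $D = B((a,b), t) \in \mathcal{D}_t$ with $b \leq 0$, and let $y_1 < y_2$. If $y_1 \leq 0$, part (1) gives $(x_0, y_1) \in \mathrm{int}(M_t(P))$; otherwise $0 < y_1 < y_2$ together with $b \leq 0$ yield $|y_1 - b| < |y_2 - b|$, so $(x_0, y_1) \in \mathrm{int}(D) \subset \mathrm{int}(M_t(P))$. Hence $\hull_t(P) = \{(x,y): y \geq f(x)\}$ for some function $f$ whose graph is $\partial \hull_t(P)$, and this graph is $x$-monotone.

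Part (3) is what I expect to be the main obstacle. By (2), $D \cap \partial \hull_t(P)$ is the graph of $f$ restricted to $X = \{x \in [c_x - t, c_x + t] : f(x) \leq u(x)\}$, where $u(x) = c_y + \sqrt{t^2 - (x - c_x)^2}$ is the upper arc of $D$; the reduction uses $f(x) > 0 \geq c_y$. To prove $X$ is an interval I would argue by contradiction: if $x_1 < x_2 < x_3$ satisfy $f(x_1) \leq u(x_1)$ and $f(x_3) \leq u(x_3)$ while $f(x_2) > u(x_2)$, I would exploit the characterization of $f$ as the upper envelope of the tops of disks in $\mathcal{D}_t$ whose interiors miss $P$, and perturb $D$ into a disk $D' \in \mathcal{D}_t$ with $\mathrm{int}(D') \cap P = \emptyset$ whose upper arc exceeds $f(x_2)$ at $x_2$ --- contradicting $(x_2, f(x_2)) \in \partial \hull_t(P)$. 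The technical hurdle is producing such a perturbation while preserving interior-disjointness from $P$, which I expect to require a case analysis on whether the relevant piece of $\partial \hull_t(P)$ is a circular arc supported by a $P$-point or a horizontal segment at height $t$.

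For (4), I would slide $D = B(c, t)$ straight downward, setting $D(s) = D - (0, s)$ for $s \geq 0$; the center $(c_x, c_y - s)$ stays on or below the $x$-axis. For each $p \in P \cap \mathrm{int}(D)$, the squared distance $|p - (c_x, c_y - s)|^2$ is strictly increasing in $s$, so $p$ has a unique exit time $s_p > 0$ with $p \in \partial D(s_p)$. Setting $s^* = \max_{p \in P \cap D} s_p$ (or $s^* = 0$ if $\mathrm{int}(D) \cap P = \emptyset$), the disk $D^* = D(s^*) \in \mathcal{D}_t$ carries some $p_* \in P \cap D$ on its boundary; every other $p \in P \cap D$ has already exited and no $p \in P \setminus D$ can enter $D(s)$ for $s > 0$ by monotonicity. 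Hence $\mathrm{int}(D^*) \cap P = \emptyset$, giving $p_* \in \overline{\mathrm{int}(D^*)} \subset \overline{\mathrm{int}(M_t(P))}$; combined with $p_* \in P$ (which excludes $p_*$ from $\mathrm{int}(M_t(P))$), this yields $p_* \in \partial \hull_t(P)$.
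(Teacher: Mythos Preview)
The paper does not prove this lemma itself; it quotes it from \cite{DumitrescuGT22}. Your arguments for (1), (2), and (4) are sound. (In (2), a point of $\interior(M_t(P))$ is not automatically in the \emph{interior} of a single witnessing disk; but your inequality actually establishes the stronger implication that $(x_0,y_2)\in M_t(P)$, $y_2>0$, $0<y_1<y_2$ forces $(x_0,y_1)\in\interior(M_t(P))$, which is all you need.)

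The real gap is in (3), and the ``perturb $D$'' idea points in the wrong direction: you would have to keep the perturbed disk's interior disjoint from $P$ while raising its upper arc strictly above $f(x_2)$, and when $(x_2,f(x_2))\in P$ this is impossible by the very definition of $\hull_t$. The cleaner route---which the paper carries out almost verbatim for the strict generalization in Lemma~\ref{lem:hulls}(1)---works with the gap point rather than with $D$. Choose $q_1,q_2\in\partial D$ in distinct components of $D\cap\partial\hull_t(P)$ and a point $q'\in\partial\hull_t(P)$ strictly between them (in $x$-coordinate) with $q'\notin D$; since $q'\in\partial\hull_t(P)$ there is some $D'\in\mathcal D_t$ with $\interior(D')\cap P=\emptyset$ and $q'\in\partial D'$. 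Because $q_1,q_2\in\partial\hull_t(P)$ lie outside $\interior(D')$, comparing the two upper arcs forces an intersection point of $\partial D\cap\partial D'$ above the $x$-axis on each side of $q'$. But two distinct circles of the \emph{same} radius with centers on or below the $x$-axis can meet in at most one point strictly above it: for equal radii the two intersection points are symmetric about the midpoint of the two centers, whose $y$-coordinate is nonpositive, so the $y$-coordinates of the intersections sum to at most zero. That is the contradiction, and it obviates your perturbation and case analysis entirely.
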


\begin{figure}[htbp]
 \centering
 \includegraphics[width=.85\textwidth]{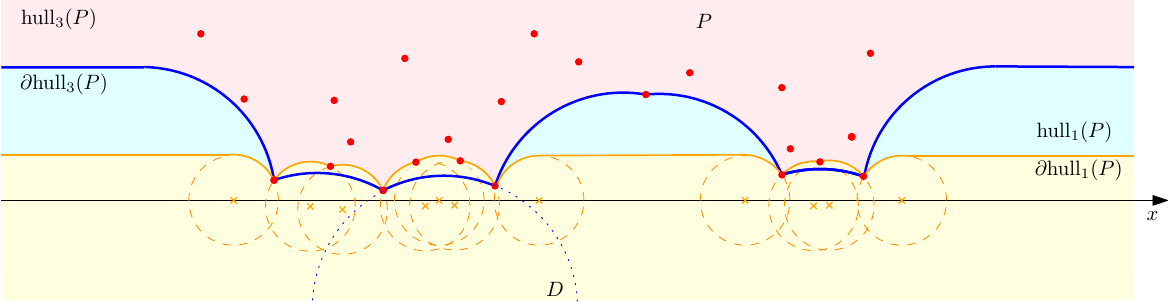}
 \caption{A point set $P$ (red), $\hull_3(P)$ (pink), and $\hull_1(P)$ (light blue or pink).
 A disk $D\in \mathcal{D}_3$ of radius 3 (dotted blue), where the intersection $D\cap (\partial \hull_1(P))$ has two components.}
    \label{fig:hull2}
\end{figure}

Since we consider the case of disks with bounded radii, for our purposes, we need to compare two disk hulls for the same point set $P$ w.r.t.\ different radii; see~\Cref{fig:hull2}. We start with an easy observation. 

\begin{lemma}\label{lem:circles}
Let $\gamma_1$ and $\gamma_2$ be circular arcs lying entirely above the $x$-axis,
such that $\gamma_1$ and $\gamma_2$ are arcs of circles $C_1$ and $C_2$, resp.,
of radii $r_1$ and $r_2$, with centers on or below the $x$-axis.
\begin{enumerate}
    \item Then both $\gamma_1$ and $\gamma_2$ are $x$-monotone and concave curves.
    \item Furthermore, if points $p_1,p_2\in \mathbb{R}^2$ are contained in both $\gamma_1$ and $\gamma_2$, and $r_1<r_2$, then $\gamma_1$ lies above $\gamma_2$ (i.e., for every vertical line $L$ that separates $p_1$ and $p_2$, point $\gamma_1\cap L$ lies above point $\gamma_2\cap L$).
\end{enumerate}
\end{lemma}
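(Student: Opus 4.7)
The plan is to establish the two claims in turn, exploiting the fact that the centers lie below the $x$-axis while the arcs lie above.

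For part (1), write $C_i$ as $(x-a_i)^2 + (y-b_i)^2 = r_i^2$ with $b_i \leq 0$. Every point of $\gamma_i$ has strictly positive $y$-coordinate while the center of $C_i$ has $y$-coordinate $b_i \leq 0$, so $\gamma_i$ lies in the open upper half of $C_i$, namely $\{(x,y) \in C_i : y > b_i\}$. This set is the graph of $f_i(x) = b_i + \sqrt{r_i^2 - (x-a_i)^2}$ over $(a_i - r_i, a_i + r_i)$, which is already $x$-monotone; differentiating yields $f_i''(x) = -r_i^2\,(r_i^2 - (x-a_i)^2)^{-3/2} < 0$, so $\gamma_i$ is strictly concave.

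For part (2), write $p_j = (x_j, y_j)$ and assume without loss of generality $x_1 < x_2$. Since $r_1 \neq r_2$, the circles $C_1, C_2$ are distinct, and two distinct circles meet in at most two points, so $C_1 \cap C_2 = \{p_1, p_2\}$. By part (1), each $\gamma_i$ is the graph of a continuous function $f_i$ on $[x_1, x_2]$; the difference $f_1 - f_2$ vanishes only at the endpoints (otherwise $C_1 \cap C_2$ would contain a third point), so by continuity it has a constant sign on $(x_1, x_2)$. It thus suffices to determine this sign at a single interior point $x_0$.

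To pin down the direction, I view the height of the arc above $x_0$ as a function of the radius. For each admissible $r$, let $\gamma_r$ denote the upper arc through $p_1, p_2$ of radius $r$ with center below the $x$-axis; the admissible $r$ form an interval $(r^*, \infty)$, where $r^*$ is the radius of the unique such circle whose center lies on the $x$-axis. Let $g(r)$ be the $y$-coordinate at which $\gamma_r$ meets the vertical line $x = x_0$. The center $c_r$ depends continuously on $r$ via the explicit parametrization $c_r = \tfrac{1}{2}(p_1 + p_2) + L^{-1}\sqrt{r^2 - L^2/4}\,(y_2 - y_1,\, x_1 - x_2)$, where $L = |p_2 - p_1|$, so $g$ is continuous. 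Moreover $g$ is injective, since $g(r) = g(r')$ with $r \neq r'$ would yield a third common point of the distinct circles $C_r, C_{r'}$, contradicting the two-point bound. Hence $g$ is strictly monotonic. As $r \to \infty$, the curvature $1/r$ of $C_r$ vanishes and $g(r)$ tends to the chord's height at $x_0$, namely $y_1 + (y_2 - y_1)(x_0 - x_1)/(x_2 - x_1)$; on the other hand, for every finite admissible $r$, part (1) gives that $\gamma_r$ is strictly concave on $[x_1, x_2]$ and agrees with the chord at the endpoints, so $\gamma_r$ lies strictly above the chord on $(x_1, x_2)$ and $g(r)$ strictly exceeds this limit. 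Therefore $g$ is strictly decreasing, and $r_1 < r_2$ gives $g(r_1) > g(r_2)$, establishing that $\gamma_1$ lies above $\gamma_2$ on every vertical line separating $p_1$ and $p_2$. The main obstacle is verifying the limit of $g(r)$ as $r \to \infty$, which I expect to handle by a short Taylor expansion applied to the explicit formula for $g$ obtained from the parametrization of $c_r$ above.
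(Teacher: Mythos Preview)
Your proof is correct and follows essentially the same approach as the paper's: both arguments consider the one-parameter family of upper arcs through $p_1,p_2$ as the radius varies (equivalently, as the center slides along the perpendicular bisector of $p_1p_2$), observe that the arcs degenerate to the chord $p_1p_2$ as $r\to\infty$, and use concavity to place every arc above the chord. Your version is in fact more rigorous: the paper simply asserts that ``since $r_1<r_2$, the arc $\gamma_2$ lies between $\gamma_1$ and the segment $p_1p_2$,'' whereas you justify this monotonicity via the two-point intersection bound (injectivity of $g$) plus continuity. The acknowledged gap---the limit $g(r)\to$ chord height---is indeed routine: using $y_1 - b_r = \sqrt{r^2-(x_1-a_r)^2}$ one gets $g(r)-y_1 = \dfrac{(x_1-x_0)(x_1+x_0-2a_r)}{(y_1-b_r)+\sqrt{r^2-(x_0-a_r)^2}}$, and since $b_r \sim -r(x_2-x_1)/L$ and $a_r \sim r(y_2-y_1)/L$, both numerator and denominator are $\Theta(r)$ with ratio tending to $(x_0-x_1)(y_2-y_1)/(x_2-x_1)$, as required.
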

\begin{proof}
(1) For every $i\in \{1,2\}$, the center of $C_i$ is below the $x$-axis, and so the leftmost and rightmost points of $C_1$ are also below the $x$-axis. The leftmost and rightmost points partition $C_i$ into two halfcircles, one above the center and one below the center. Both halfcircles are $x$-monotone: the lower halfcircle is convex curve and the upper halfcircle is concave. Since $\gamma_i$ lies entirely above the $x$-axis, it is contained in the upper halfcirlce, which is $x$-monotone and concave. 

\noindent
(2) The locus of centers of circles that contain both $p_1$ and $p_2$ is the orthogonal bisector of the line segment $p_1p_2$, that we denote by $(p_1p_2)^\perp$.  
Note that $p_1p_2$ is not vertical (or else $(p_1p_2)^\perp$ would be a horizontal line above the $x$-axis, and the centers of $C_1$ and $C_2$ would also be above the $x$-axis). As the center a circle containing $p_1$ and $p_2$ continuously moves from the center of $C_1$ down to $y=-\infty$, the circular arc between $p_1$ and $p_2$ deforms continuously from $\gamma_1$ to the line segment $p_1p_2$. 
Since $\gamma_1$ is concave, it lies above the segment $p_1p_2$. Since $r_1<r_2$, the arc $\gamma_2$ lies between the arc $\gamma_1$ and the segment $p_1$ and $p_2$. Consequently, $\gamma_2$ lies below $\gamma_1$, as claimed.
\end{proof}

\begin{lemma}\label{lem:hulls}
    For every finite set $P \subset \RR^2$ above the $x$-axis, the following holds:
\begin{enumerate}\itemsep0pt
\item if $0<s<t$, then for every disk $D\in \mathcal{D}_s$ of radius $s$, the intersection $D\cap (\partial \hull_t(P))$ is connected (possibly empty);
\item suppose that $p\in P$ lies on the curve $\partial \hull_t(P)$ for some $t>0$. Then  there is a radius $r_p\in (0,t)$ such that $p$ is also on $\partial \hull_s(P)$ for all $s\in [r_p,t]$, but $p$ is below $\partial \hull_s(P)$ for all $s\in [0,r_p)$.
\end{enumerate}
\end{lemma}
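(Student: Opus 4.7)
For part (1), the plan is a proof by contradiction. Assume $D\cap \partial \hull_t(P)$ has at least two components. Because $\partial \hull_t(P)$ is an $x$-monotone curve above the $x$-axis (Lemma~\ref{lem:hull}(1)--(2)), we can pick $p_1, p_2\in \partial D\cap \partial \hull_t(P)$ with $p_1$ to the left of $p_2$ so that the sub-arc $\gamma$ of $\partial \hull_t(P)$ from $p_1$ to $p_2$ leaves $D$; since both $\gamma$ and the chord $p_1p_2$ lie above the $x$-axis, $\gamma$ must in fact lie strictly above the upper arc $\gamma_D$ of $\partial D$ between $p_1$ and $p_2$, except at the two endpoints. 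The key step is to construct an auxiliary $t$-disk $D^*\in \mathcal{D}_t$ through $p_1$ and $p_2$: its center lies on the perpendicular bisector of $p_1p_2$, on the same side as $c_D$ but strictly further from the chord. Since $c_D$ lies on or below the $x$-axis and $t>s$, the center of $D^*$ lies strictly below $c_D$ and hence strictly below the $x$-axis, so $D^* \in \mathcal{D}_t$. Lemma~\ref{lem:circles}(2) applied to the two arcs $\gamma_D$ (radius $s$) and the upper arc $\gamma^*$ of $\partial D^*$ (radius $t$) joining $p_1$ and $p_2$ above the $x$-axis gives $\gamma^*$ strictly below $\gamma_D$, and hence strictly below $\gamma$. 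On the other hand, Lemma~\ref{lem:hull}(3) applied to $D^*$ forces $D^* \cap \partial \hull_t(P)$ to be connected, so by $x$-monotonicity the unique sub-arc of $\partial \hull_t(P)$ joining $p_1$ and $p_2$---namely $\gamma$---must be contained in $D^*$. This contradicts $\gamma$ lying above $\gamma^*$, i.e., outside $D^*$. The only nontrivial point is picking the correct side of the perpendicular bisector for the center of $D^*$ so that $D^* \in \mathcal{D}_t$ and so that Lemma~\ref{lem:circles}(2) applies.

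For part (2), I plan to characterize $p \in \partial \hull_s(P)$ by the existence of an admissible disk center. A disk $B(c,s) \in \mathcal{D}_s$ is $P$-free with $p$ on its boundary iff $|c-p|=s$, $c_y \le 0$, and $|c - p'| \ge |c - p|$ for every $p'\in P\setminus\{p\}$; the last condition says exactly that $c$ lies in the closed Voronoi cell $\mathrm{Vor}(p)$ of $p$ in $P$. Setting $V_p := \mathrm{Vor}(p)\cap \{y \le 0\}$, the set $V_p$ is closed and convex, nonempty by hypothesis (the center $c_p$ of some $P$-free tangent $t$-disk at $p$ lies there), and does not contain $p$ since $p_y > 0$. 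Let $r_p := \min_{c \in V_p}|c - p|$; then $r_p>0$ and $r_p\le t$. Because $V_p$ is connected, its image under the continuous map $c \mapsto |c - p|$ is an interval $[r_p, R_p] \subseteq [0, \infty)$ containing $t$. Hence for every $s\in[r_p, t]$ there is some $c \in V_p$ with $|c - p| = s$, which gives $p \in \partial \hull_s(P)$. For $s < r_p$, no such $c$ exists, so $p \notin \partial \hull_s(P)$; combined with $p \in P \subseteq \hull_s(P)$, this places $p$ strictly inside $\hull_s(P)$, matching the ``below the curve'' wording of the lemma. For the strict inequality $r_p < t$, pick $c_p$ with $c_{p,y}<0$ strictly (possible whenever $p_y<t$): the segment $[p,c_p]\subseteq V_p$ meets the $x$-axis at a point $c'\in V_p$ with $|c'-p| = t\, p_y/(p_y - c_{p,y}) < t$. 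The main subtlety is the Voronoi-cell characterization of $\partial \hull_s(P)$ together with securing $r_p<t$ strictly.
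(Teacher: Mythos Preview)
Both parts are correct and take genuinely different routes from the paper. For part~(1), the paper picks an intermediate point $q'$ on $\partial\hull_t(P)$ strictly between the two components, takes a witnessing disk $D'\in\mathcal{D}_t$ tangent at $q'$, and argues that the two arcs of $\partial D'$ from $q'$ down to the $x$-axis must each cross $\partial D$, producing two intersection points in $\partial D\cap\partial D'$ above the $x$-axis with $\partial D$ lying above $\partial D'$ between them---contradicting Lemma~\ref{lem:circles}(2). You instead build a single $t$-disk $D^*$ through the gap endpoints $p_1,p_2$ themselves and invoke Lemma~\ref{lem:hull}(3) directly to force $\gamma\subset D^*$; this trades the paper's two-crossings counting for a clean appeal to the already-established connectedness property, and is arguably tidier.

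For part~(2), the paper moves a witnessing $t$-disk through two explicit continuous motions (a dilation from $p$ bringing the center onto the $x$-axis, then a slide along the $x$-axis toward the foot of the perpendicular from $p$) and declares $r_p$ to be the terminal radius. Your Voronoi-cell characterization $V_p=\mathrm{Vor}(p)\cap\{y\le 0\}$ with $r_p=\min_{c\in V_p}|c-p|$ is more robust: it gives the exact equivalence $p\in\partial\hull_s(P)\iff s$ lies in the image of $c\mapsto|c-p|$ on $V_p$, so the interval statement follows at once from convexity, and in particular the ``$s<r_p$'' direction is fully justified---something the paper's specific one-parameter motion does not actually establish (other centers, off that particular path, are not ruled out there). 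One small slip in your write-up: you say $[p,c_p]\subseteq V_p$, but $p\notin V_p$; you mean $[p,c_p]\subseteq\mathrm{Vor}(p)$, and then the crossing point $c'$ with the $x$-axis lies in $V_p$. Your caveat that $r_p<t$ requires a witness with $c_{p,y}<0$, available whenever $p_y<t$, is accurate; when $p_y=t$ one in fact gets $r_p=t$, so the lemma's open interval $(0,t)$ should really be $(0,t]$---a wrinkle in the statement rather than in your argument.
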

\begin{proof}
(1) Let $D\in \mathcal{D}_s$. Suppose, to the contrary, that the
intersection $D\cap (\partial \hull_t(P))$ has two or more components. 
By \Cref{lem:hull}(2), the $x$-coordinates of the components form
disjoint intervals, and the components have a natural left-to-right ordering.
Let $q_1$ be the rightmost point in the first component, and
let $q_2$ be the leftmost point in the second component. Clearly
$q_1,q_2\in \partial D$. Let $q'$ be an arbitrary point in $\partial
\text{hull}(A)$ between $q_1$ and $q_2$. Then $q'$ lies on the boundary
of some disk $D'$ of radius $t$ whose center is below the $x$-axis,
and whose interior is disjoint from $P$. In particular, neither $q_1$
nor $q_2$ is in the interior of $D'$. Since the center of $D'$ is
below the $x$-axis, $\partial D'$ contains two interior-disjoint
circular arcs between $q$ and the $x$-axis; and both arcs must cross
$\partial D$. We have found two intersection points $p_1,p_2\in \partial D\cap
\partial D'$ above the $x$-axis. Furthermore, between $p_1$ and $p_2$, 
the circular arc $\partial D$ lies above the circular arc $\partial D'$, contradicting
\Cref{lem:circles}(2). This completes the proof of Property~1.
\\
(2) Consider a point $p\in P$ that lies on the curve $\partial \hull_t(P)$ for some $t>0$. Then there exists a disk $D\in\mathcal{D}_t$ of radius $t$ centered at some point $c$ below the $x$-axis such that $p\in \partial D$. 
Let $c_1$ be the intersection point of the $x$-axis the line $cp$, and $c_2$ the orthogonal projection of $p$ to the $x$-axis. We describe two continuous motions, where the disk $D$ continuously changes while $p$ is in the circle $\partial D$ and there is no point in $P$ in the interior of $D$: first, a central dilation from center $p$ continuously moves $D$ to a disk $D_1$ centered at $c_1$. Second, the center of $D$ moves from $c_1$ towards $c_2$ continuously until its center reaches $c_2$ or a point $c_3$ where $\partial D$ contains both $p$ and another point $p'\in P$. Let $r_p$ be radius of $D$ at that time. 
The continuous motion shows that $p\in \partial \hull_s(P)$ for all $s\in [r_p,t]$, 
but it is not in $\partial \hull_s(P)$ for all $s<r_p$.
\end{proof}

Note that \Cref{lem:hulls}(1) is not symmetric for $s<t$: for a disk $D\in \mathcal{D}_t$ of radius $t$, the intersection $D\cap(\partial\hull_s(P))$ is not necessarily connected; see \Cref{fig:hull2} for an example.

\smallskip\noindent
\textbf{Reduction.}  
We can reduce the \textsl{Online Hitting Set} problem for a finite set $P\subset \mathbb{R}^2$ and disks of bounded radii in the separated setting, to the \textsl{Online Hitting Set} problem for a finite subset of integer points and objects with the lowest-point property. We achieve the reduction in two steps:
\begin{enumerate}
\item[{\rm (1)}] we choose a subset $Q\subseteq P$ of points that are relevant for a hitting set (\Cref{lem:reduction1}); and 
\item[{\rm (2)}]  we map the points in $P$ into a set of integer points $P'\subset [0,n]^2\cap \mathbb{Z}^2$ (\Cref{lem:reduction2}).
\end{enumerate}

For a finite point set $P$ in the plane above the $x$-axis, let $Q=Q(P)$ be the set of points $p\in P$ such that $p\in \partial \hull_t(P)$ for some $t>0$.

\begin{lemma}\label{lem:reduction1}
For a finite point set $P$ in the plane above the $x$-axis, $Q=Q(P)$ has the following property: for every disk $D$ centered below the $x$-axis, if $D\cap P\neq \emptyset$, then $D\cap Q\neq \emptyset$.
\end{lemma}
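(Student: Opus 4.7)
}
The plan is to derive the statement essentially directly from Lemma~\ref{lem:hull}(4) together with the definition of $Q$. Given an arbitrary disk $D$ centered below the $x$-axis, let $t>0$ be its radius. Then $D\in\mathcal{D}_t$, so the hypothesis of Lemma~\ref{lem:hull}(4) applies to $D$ with this particular $t$.

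Assuming $D\cap P\neq\emptyset$, I would invoke Lemma~\ref{lem:hull}(4) to produce a point $p\in D\cap P$ that also lies on the curve $\partial\hull_t(P)$. This witness is exactly what the definition of $Q$ asks for: $p\in P$ and $p\in\partial\hull_t(P)$ for some $t>0$, hence $p\in Q$. Since $p\in D$ as well, we conclude $D\cap Q\neq\emptyset$, as required.

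There is no real obstacle here; the content of the lemma is purely to package the combination of Lemma~\ref{lem:hull}(4) (valid for each fixed radius) with the fact that $Q$ is the union, over all positive radii $t$, of the points of $P$ lying on $\partial\hull_t(P)$. The substantive work has already been done in setting up the $t$-hull and establishing Lemma~\ref{lem:hull}; this lemma simply extracts the implication that, for the purposes of hitting disks centered below the $x$-axis, the subset $Q\subseteq P$ loses no coverage compared with $P$, which is what the subsequent reduction to integer points and bottomless rectangles (Lemma~\ref{lem:reduction2}) will rely on.
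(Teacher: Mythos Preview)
Your proposal is correct and matches the paper's own proof essentially line for line: take $t$ to be the radius of $D$, apply Lemma~\ref{lem:hull}(4) to obtain a point of $D\cap P$ on $\partial\hull_t(P)$, and observe that this point lies in $Q$ by definition. There is nothing to add.
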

\begin{proof}
Let $D$ be a disk of radius $t>0$ centered below the $x$-axis. By \Cref{lem:hull}(4), $D\cap P$ contains a point in $\partial\hull_t(P)$. By the definition of $Q$, this point is in $Q$.
\end{proof}

We may assume that the points in $P$ have distinct $x$-coordinates (if two or more points in $P$ have the same $x$-coordinate, w.l.o.g.\ a minimum hitting set would contain only the point with the smallest $y$-coordinate, 
so we can remove any other points vertically above it from $P$).
Sort $P$ by increasing $x$-coordinates such that $P=\{p_0,\ldots , p_{n-1}\}$. 
For every point $q\in Q$, let $t(q)>0$ be the maximum radius such that $q\in \partial\hull_{t(q)}(P)$. 
Consider the set of radii $T=\{t(q): q\in Q\}$. Sort the radii in $T$ in decreasing order as $t_0>t_1>\ldots > t_{|T|-1}$. We can now define the function {$\pi:P\rightarrow [0,n)^2\cap \mathbb{Z}^2$}. 
For every $p_i\in Q$, let $\pi(p_i)=(i,j) \mbox{ \rm if and only if } t(p_i)=t_j$,
that is, the first coordinate of $\pi(p_i)$ corresponds to the index $i$ of $p_i$ (the $x$-order of all points in $Q$), and the second coordinate of $\pi(p_i)$ corresponds to index $j$ of the radius $t_j=t(p_i)$. For every $p_i\in P\setminus Q$, let $\pi(p_i)=(i,|T|)$; see \Cref{fig:bijection} for an illustration.
Finally, let $P'=\pi(P)=\{\pi(p_i): p_i\in P\}$ and $Q'=\pi(Q)=\{\pi(p_i): p_i\in Q\}$. 
Note that the points in $P'\setminus Q'$ lie above all points in $Q'$.
Since $\pi$ is injective, then it is a bijection between $P$ and $P'$. Note also that $|T|\leq |Q|\leq |P|=n$, consequently {$P'\subset [0,n]^2\cap \mathbb{Z}^2$}. 

\begin{figure}
    \centering
    \includegraphics[page=4,scale=0.75]{Algorithm.pdf}
    \caption{Example for the bijection $\pi$. Left: $\partial \hull_1(P)$ is orange arcs,   $\partial \hull_2(P)$ is dashed green arcs, and $\partial \hull_3(P)$ is dash-dot blue arcs.
    Right: the grid points $\pi(p_0),\ldots , \pi(p_9)$ corresponding to $p_0,\ldots ,p_9$.}
    \label{fig:bijection}
\end{figure}

\begin{lemma}\label{lem:reduction2}
For a set $P$ of $n$ points in the plane above the $x$-axis and for every disk $D$ centered below the $x$-axis, the set $\pi(D\cap P)$ has the lowest-point property.
\end{lemma}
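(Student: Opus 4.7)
The plan is a proof by contradiction exploiting the $x$-monotonicity and connectedness properties of hull curves established in Lemmas~\ref{lem:hull} and~\ref{lem:hulls}. Unpacking $\pi$ and the lowest-point property, the claim reduces to the following: if $I$ is an integer interval with $I \subset \spn(\pi(D\cap Q)) = [i_1, i_k+1)$ (where $i_1 < \cdots < i_k$ are the indices of the points in $D \cap Q$) and $I$ contains at least one of those indices, and if $q^* = q_{i^*}$ with $i^* \in I$ attains the maximum $t$-value $t^* := \max\{t(q_j) : q_j \in Q,\ j \in I\}$, then $q^* \in D$. If $i^* \in \{i_1, i_k\}$ then $q^* \in D$ trivially, so I focus on the case $i_1 < i^* < i_k$; since points of $Q$ have distinct $x$-coordinates, $x(q_{i_1}) < x(q^*) < x(q_{i_k})$, so $x(q^*)$ lies in the $x$-projection of $D$.

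For the main argument I would assume $q^* \notin D$ for contradiction, which forces $q^*$ to lie strictly above the upper circular cap $\alpha$ of $D$, and then compare $\alpha$ with the hull curve $\Gamma := \partial\hull_{t^*}(P)$. By Lemma~\ref{lem:hull}, $\Gamma$ is $x$-monotone and passes through $q^*$ (since $t^* = t(q^*)$). Because $P \subset \hull_{t^*}(P)$ places every point of $P$ on or above $\Gamma$, and because $q_{i_1}, q_{i_k} \in D$ places them on or below $\alpha$, we get $\Gamma \le \alpha$ at $x(q_{i_1})$ and at $x(q_{i_k})$, while $\Gamma(x(q^*)) = y(q^*) > \alpha(x(q^*))$. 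By the intermediate value theorem, $\Gamma - \alpha$ has a zero at some $x_L \in (x(q_{i_1}), x(q^*))$ and another at some $x_R \in (x(q^*), x(q_{i_k}))$, yielding two points of $\Gamma \cap \partial D \subset D \cap \Gamma$. When $r \le t^*$, with $r$ the radius of $D$, Lemma~\ref{lem:hulls}(1) says $D \cap \Gamma$ is connected, so its $x$-projection is a single interval containing $x_L$ and $x_R$, hence also $x(q^*)$; this forces $q^* \in D$, contradicting the assumption. When $r > t^*$ but $r \le T(q^*) := \sup\{s : q^* \in \partial\hull_s(P)\}$, the same argument works with $\Gamma$ replaced by $\partial\hull_r(P)$, using Lemma~\ref{lem:hull}(3) for connectedness.

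The hard part will be the remaining case $r > T(q^*)$, in which $q^*$ lies strictly above every hull curve for which connectedness against $D$ is available from our lemmas. Here the direct IVT-plus-connectedness argument no longer applies, and I would look for a more delicate geometric argument: for example, leveraging that $q^* \notin \partial\hull_r(P)$ implies every disk in $\mathcal{D}_r$ with $q^*$ on its boundary contains some point of $P$ (hence of $Q$, by Lemma~\ref{lem:reduction1}) in its interior, then combining this obstruction with the arc-comparison from Lemma~\ref{lem:circles}(2) to find a $Q$-point in $D$ whose index lies in $I$ and whose $t$-value is at least $t^*$; this would either place $q^*$ itself in $D$, or realize the maximum $t^*$ by a different member of $D \cap Q$, either way yielding the lowest-point property. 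Making this precise is, in my view, the technical crux of the proof.
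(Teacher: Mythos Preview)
Your approach coincides with the paper's at its core: pick two points of $D\cap Q$ that bracket $q^*$ in the $x$-direction, drop them vertically to the hull curve $\partial\hull_t(P)$ for an appropriate $t$, observe that these projections remain in $D$ (since a disk with center below the $x$-axis is downward-closed along vertical lines above the $x$-axis), and then invoke connectedness of $D\cap\partial\hull_t(P)$ to trap $q^*$ in $D$. The paper brackets with the extreme points of $D\cap Q$ whose indices lie in $I$, while you use the global extremes $q_{i_1},q_{i_k}$; this difference is immaterial, because the only property of the bracketing points that is actually used is $P\subset\hull_t(P)$, which puts every $P$-point weakly above the curve for any $t$.

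The case you single out as ``the technical crux'', namely $r>T(q^*)$, is precisely the gap the paper glosses over. The paper appeals to ``Lemma~\ref{lem:hulls}(2)'' for connectedness of $D\cap\partial\hull_t(P)$ --- almost certainly a typo for Lemma~\ref{lem:hulls}(1) --- but that lemma requires the radius of $D$ to be at most the hull parameter $t$, and this hypothesis is never verified. Your suspicion that the argument cannot simply be completed is in fact correct: the statement fails in that regime. Take $P=\{(-5,0.1),\,(0,1),\,(5,0.1)\}$; all three points lie in $Q$, with $t(\pm5,0.1)=0.1$ and $t(0,1)=1$, so $\pi$ sends them to $(0,1),(1,0),(2,1)$ respectively. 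For the disk $D$ of radius $20.9$ centered at $(0,-20)$ one checks $D\cap Q=\{(-5,0.1),(5,0.1)\}$, hence $\pi(D\cap Q)=\{(0,1),(2,1)\}$ with span $[0,3)$; the lowest $P'$-point over that span is $(1,0)=\pi(0,1)\notin\pi(D\cap Q)$. Thus the lowest-point property fails outright, and no refinement of the arc-comparison idea you sketch can rescue the case $r>T(q^*)$. Your two handled subcases ($r\le t^*$ via Lemma~\ref{lem:hulls}(1), and $t^*<r\le T(q^*)$ via Lemma~\ref{lem:hull}(3)) are argued correctly and match what the paper does when its connectedness citation is legitimate.
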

\begin{proof}
Let $D$ be a disk centered below the $x$-axis. We rephrase the lowest-point property in terms of $D\cap P$. Recall that the points in $P$ are sorted by $x$-coordinates. 
Suppose that $s=(s_x,s_y)$ is in $D\cap P$ and $s_x\in I\subset \spn(D\cap P)$.  Consider the point sets $P(I):=\{p=(p_x,p_y) \in D\cap P: p_x\in I\}$. By \Cref{lem:reduction1}, we know that $D\cap Q\neq\ \emptyset$; let $t$ be the largest radius in $T$ such that $Q(I)\cap \partial \hull_t(P)\neq \emptyset$. We need to show that $D$ contains all points in $P(I)\cap \partial \hull_t(P)$. 

Let $q_{\rm left}$ and $q_{\rm right}\in P(I)$, resp., be the leftmost and rightmost points in $P(I)\cap Q$; and let $L_{\rm left}$ and $L_{\rm right}$ be the vertical lines through
$q_{\rm left}$ and $q_{\rm right}$. By the definition of $Q$, we have $q_{\rm left}\in \partial \hull_{t(q_{\rm left})}(P)$ and $q_{\rm right}\in \partial \hull_{t(q_{\rm right})}(P)$, and $t\geq \max\{t(q_{\rm left}),t(q_{\rm right})\}$ by the definition of $t$.
Consequently, the intersection point $\ell:=L_{\rm left}\cap \partial \hull_t(P)$ lies at or below $q_{\rm left}$, the intersection point $r:=L_{\rm right}\cap \partial \hull_t(P)$ lies at or below $q_{\rm right}$. Since $q_{\rm left},q_{\rm right}\in D$, then $D$ contains both $\ell$ and $r$. We know that $\partial\hull_t(P)$ is an $x$-monotone curve by \Cref{lem:hull}(2), and $D\cap \partial\hull_t(P)$ is connected by \Cref{lem:hulls}(2). Since $D$ contains both $\ell$ and $r$, then $D$ contains the sub-curve of $\partial \hull_t(P)$ between $r$ and $\ell$. Since all points in $P(I)$ are between the vertical lines $L_{\rm left}$ and $L_{\rm right}$, then $D$ contains all points in $P(I)\cap \partial \hull_t(P)$, as required. 
\end{proof}

\smallskip\noindent
\textbf{Online algorithm for disks in the separated setting.} We can now complete the reduction.
\begin{theorem}\label{thm:gbottomless}
For the \textsl{Online Hitting Set} problem for a set $P\subset \mathbb{R}^2$ of $n$ points above the $x$-axis and disks centered on or below the $x$-axis, there is an 
$O(\log n)$-competitive algorithm. 
\end{theorem}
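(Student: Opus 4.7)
The plan is to feed the transformed problem into algorithm $\alg_0$ of Theorem~\ref{thm:lowestpoint}. As offline preprocessing, compute $Q=Q(P)$ together with the bijection $\pi\colon Q\to P'\subseteq[0,n)^2\cap\mathbb{Z}^2$ defined before Lemma~\ref{lem:reduction2}. When the $i$-th disk $D_i$ arrives online, compute $S_i=\pi(D_i\cap Q)$, which has the lowest-point property by Lemma~\ref{lem:reduction2}, and feed it to $\alg_0$; whenever $\alg_0$ adds a point $\pi(q)\in P'$ to its transformed hitting set, our algorithm adds $q\in Q\subseteq P$ to $H_i$. Correctness is immediate from Lemma~\ref{lem:reduction1}: whenever $D_i\cap P\ne\emptyset$ we have $S_i\ne\emptyset$, so the preimage of a point that $\alg_0$ uses to cover $S_i$ lies in $D_i\cap Q\subseteq D_i$.

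For the competitive analysis, Theorem~\ref{thm:lowestpoint} gives $|H|\le O(\log n)\cdot|\opt_Q|$, where $\opt_Q$ is the minimum hitting set of $\mathcal{C}$ restricted to points of $Q$. To convert this into the desired bound against the unrestricted optimum $\opt\subseteq P$, I would prove the following domination lemma: writing $q^-(p),q^+(p)\in Q$ for the closest points of $Q$ on either side of $p$ in the sorted $x$-order (with the obvious convention at the extremes and with $q^-(p)=q^+(p)=p$ when $p\in Q$), every disk $D$ centered below the $x$-axis with $p\in D$ contains $q^-(p)$ or $q^+(p)$. Replacing each $p\in\opt$ by the pair $\{q^-(p),q^+(p)\}$ then yields a $Q$-hitting set of size at most $2|\opt|$, so $|\opt_Q|\le 2|\opt|$ and the stated $O(\log n)$ competitive ratio follows.

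The main obstacle will be establishing the domination lemma. In the easy sub-case, write $t$ for the radius of $D$, use Lemma~\ref{lem:hull}(4) to extract some $q\in D\cap Q\cap\partial\hull_t(P)$, and assume WLOG $q_x\le p_x$. By Lemma~\ref{lem:hull}(3) the arc $\gamma=D\cap\partial\hull_t(P)$ is connected; it contains $q$ and also the unique point of $\partial\hull_t(P)$ lying on the vertical line through $p$, which lies in $D$ because $p$ itself lies in $D\cap\hull_t(P)$. Hence the $x$-range of $\gamma$ covers the abscissa $q^-(p)_x$. If $t\ge r_{q^-(p)}$, Lemma~\ref{lem:hulls}(2) places $q^-(p)$ on $\partial\hull_t(P)$, and the $x$-monotonicity in Lemma~\ref{lem:hull}(2) identifies $q^-(p)$ with the unique point of $\gamma$ at that abscissa, so $q^-(p)\in D$. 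The truly subtle sub-case is $t<r_{q^-(p)}$, where $q^-(p)$ sits strictly above $\partial\hull_t(P)$ at $x=q^-(p)_x$ and one must show directly that $q^-(p)$ nonetheless lies below the upper semicircle of $D$; I would combine the concavity of that semicircle at $x=q^-(p)_x$ (using that $q,p\in D$ push the semicircle above the chord $\overline{qp}$ there) with the structural constraint on $q^-(p)_y$ coming from the radius-$r_{q^-(p)}$ witness disk for $q^-(p)\in\partial\hull_{r_{q^-(p)}}(P)$, which should pin $q^-(p)_y$ under the chord $\overline{qp}$ at that abscissa and hence place $q^-(p)$ inside $D$.
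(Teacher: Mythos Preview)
Your algorithmic reduction via $\pi$ and $\alg_0$ is exactly the paper's approach, and you go further than the paper by explicitly addressing the comparison between the $Q$-restricted optimum and the unrestricted optimum $\opt\subseteq P$; the paper's one-line handling of this (writing $\opt'=\pi^{-1}(\opt)$) is at best a notational slip. Your domination lemma is a natural device to close that gap.

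However, your proof of the domination lemma does not go through as written. In the ``easy'' sub-case you invoke Lemma~\ref{lem:hulls}(2) to conclude that $t\ge r_{q^-(p)}$ forces $q^-(p)\in\partial\hull_t(P)$, but the lemma only guarantees $q\in\partial\hull_s(P)$ for $s$ in an interval $[r_q,t_0]$ with $t_0$ a given witness radius; it does \emph{not} assert this for all $s\ge r_q$, and in fact that stronger statement is false. For instance, with $P=\{(0,1),(-3,0.1),(3,0.1)\}$ and $q=(0,1)$ one has $q\in\partial\hull_1(P)$ (witness $B((0,0),1)$), yet every radius-$10$ disk through $q$ with centre on or below the $x$-axis contains $(-3,0.1)$ or $(3,0.1)$ in its interior, so $q\notin\partial\hull_{10}(P)$. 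The correct dichotomy is therefore whether $q^-(p)$ actually lies on $\partial\hull_t(P)$, not whether $t\ge r_{q^-(p)}$; and the case where it does not---which now absorbs part of what you labelled ``easy'' as well as all of your ``subtle'' case---is where the real work lies. Your sketch for that case (pinning $q^-(p)_y$ below the chord $\overline{qp}$ via the radius-$r_{q^-(p)}$ witness disk) is not a proof: you have not shown why that witness disk constrains $q^-(p)_y$ relative to the chord, and the concavity of the upper arc of $D$ alone does not close the argument. The domination lemma may well be true, but it needs a complete proof before your competitive analysis is sound.
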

\begin{proof}
We are given a set $P\subset \mathbb{R}^2$ of $n$ points above the $x$-axis, and we receive a sequence $\mathcal{C}=(D_1,\ldots, D_m)$ of disks centered on or below the $x$-axis in an online fashion. Let $\opt\subseteq P$ be a minimum hitting set for $\mathcal{C}$. 

Initially, we compute the set $P'\subset [0,n]^2\cap \mathbb{Z}^2$ as defined above  \Cref{lem:reduction2}.
When a disk $D_i$ arrives, we compute the set $S_i=\pi(D_i\cap P)$, which has the lowest-point property by \Cref{lem:reduction2}. The bijection $\pi$ maps $\opt$ to a set $\opt'=\pi(\opt)\subseteq P'$, where $|\opt|=|\opt'|$. Here, $\opt'$ is a hitting set for the sets $\mathcal{C}'=(S_1,\ldots , S_m)$. 

We run the online algorithm $\alg_0$ described in \Cref{ssec_lowestpoint} for the point set $P'$ and the sequence $\mathcal{C}'$ of sets.
By \Cref{thm:bottomless}, $\alg$ returns a hitting set $H'\subseteq P'$ of size $|\opt'|\cdot O(\log n)$.  By \Cref{lem:reduction1}, $H=\pi^{-1}(H')\subset P$ is a hitting set for $\mathcal{C}$, and its size is bounded by $|H|=|H'|\leq |\opt'|\cdot O(\log n)=|\opt|\cdot O(\log n)$, as required.
\end{proof}

\section{Disks of Bounded Radii: General Setting}
\label{sec:disks}

In this section, we consider the \textsl{Online Hitting Set} problem, where $P$ is a finite set (given in advance) in the plane; and the objects are disks with radii in the interval $[1, M)$, where $M>1$ is a constant.

\smallskip\noindent
\textbf{Distinguishing layers of disks, according to their radii.}
We partition the disks of radii in the interval $[1,M)$ into $\lfloor \log M\rfloor+1$ layers as follows: for each $j\in\{0,1,\ldots ,\lfloor \log M\rfloor\}$, let \emph{layer} $L_j$ be the set of disks of radii in the interval $[2^j,2^{j+1})$.
The index of each layer $L_j$ is denoted by $j$.

\smallskip\noindent
\textbf{Tiling of the plane for each layer index $\mathbf{j}$.}
For every $j\in\{0,1,\ldots ,\lfloor \log M\rfloor\}$, let $\Lambda_j=\{\alpha_1 \mathbf{v}_1+\alpha_2 \mathbf{v}_2: (\alpha_1,\alpha_2)$ $\in \mathbb{Z}^2\}$ be a two-dimensional lattice spanned by vectors $\mathbf{v}_1=2^{j-1/2}\mathbf{e}_1$ and $\mathbf{v}_2=2^{j-1/2}\mathbf{e}_2$, where $\mathbf{e}_1=(1,0)$ and $\mathbf{e}_2=(0,1)$ are the standard basis vectors. 
Let $\tau_j=\left[0,2^{j-1/2}\right]^2$ be a square of side length $2^{j-1/2}$ with lower-left corner at the origin. Translates of $\tau_j$ (\emph{tiles}), with translation vectors in the lattice $\Lambda_j$, form the \emph{tiling} $\mathcal{T}_j$. Let $\mathcal{L}_j$ denote the set of axis-parallel lines spanned by the sides of the tiles in $\mathcal{T}_j$. 

\begin{figure}[htbp]
         \centering
    \includegraphics[width=65 mm]{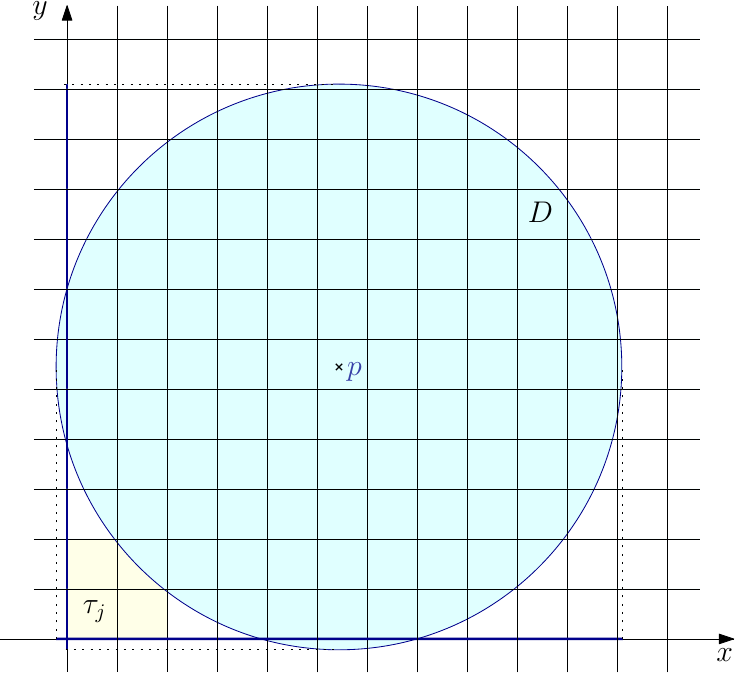}
   \caption{A section of the tiling $\mathcal{T}_j$, the tile $\tau_j$ of side length $2^{j-1/2}$, and a disk $D$ of radius $2^{j+2}$.}
    \label{fig:disk}
\end{figure}

We observe two key properties of the construction of layers and the tilings. 
\begin{observation}\label{obs_1}
    For every $j\in\{0,1,\ldots ,\lfloor \log M\rfloor\}$, if $\sigma\in L_j$ and the center of $\sigma$ is in a tile $\tau\in \mathcal{T}_j$, then $\tau\subset \sigma$; see \Cref{fig:disk}. 
\end{observation}
\begin{proof}
Since $\sigma \in L_j$, the radius of the disk $\sigma$ is in at least $2^j$.
The tile $\tau$ is a translate of $\tau_j=[0,2^{j-1/2}]^2$, and so its diameter is $\sqrt{2}\cdot 2^{j-1/2}=2^j$. If the center $c$ of $\sigma$ is in $S$, then every $p\in \tau$ is within distance $2^j$ from $c$, which implies~$\tau\subset \sigma$.
\end{proof}

\begin{observation}\label{obs_2}
    For every $j\in\{0,1,\ldots ,\lfloor \log M\rfloor\}$, every disk $D$ of radius at most $2^{j+2}$ intersects at most 24 lines in $\mathcal{L}_j$: at most 12 horizontal and 12 vertical lines. 
\end{observation}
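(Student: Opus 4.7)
The plan is to bound separately the number of horizontal lines of $\mathcal{L}_j$ that meet $\sigma$, and then invoke symmetry for the vertical lines. First I would unpack the quantitative data of the layer and the grid: since $\sigma$ belongs to $L_j$, its radius is strictly less than $2^{j+1}$, so its diameter satisfies $d<2^{j+2}$. The horizontal lines of $\mathcal{L}_j$ are exactly the lines $y=k\cdot 2^{j-1}\sqrt{2}$ for $k\in\mathbb{Z}$, hence they are equally spaced with gap $s=2^{j-1}\sqrt{2}$.

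Next, I would project $\sigma$ orthogonally onto the $y$-axis: the horizontal lines that meet $\sigma$ are in bijection with the lines $y=k\cdot 2^{j-1}\sqrt{2}$ that fall in this projection, which is a closed interval of length $d$. A standard counting argument shows that an interval of length $d$ contains at most $\lfloor d/s\rfloor+1$ points of an arithmetic progression with common difference $s$. Here
\[
\frac{d}{s}\;<\;\frac{2^{j+2}}{2^{j-1}\sqrt{2}}\;=\;\frac{2^{3}}{\sqrt{2}}\;=\;4\sqrt{2}\;<\;6,
\]
so $\lfloor d/s\rfloor+1\le 6\le 8$, and therefore at most $8$ horizontal lines of $\mathcal{L}_j$ meet $\sigma$. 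The argument for vertical lines is identical (project onto the $x$-axis instead), yielding another bound of $8$, for a grand total of at most $16$ lines.

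I do not expect a real obstacle; the proof is essentially a one-line dimensional comparison between the layer-dependent diameter $2^{j+2}$ and the layer-dependent spacing $2^{j-1}\sqrt{2}$. The only thing to double-check is the sharp form of the count $\lfloor d/s\rfloor+1$ for the worst placement of the interval relative to the grid, which is why the conservative bound $8$ (rather than $6$) is chosen in the statement and is comfortably absorbed in the later reductions.
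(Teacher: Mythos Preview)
Your proposal is correct and follows essentially the same approach as the paper: both arguments reduce to comparing the diameter of a layer-$j$ disk with the grid spacing $2^{j-1}\sqrt{2}$ and counting how many parallel grid lines can meet the disk in each direction. The paper fixes the center in the base cell $S_j$, encloses $\sigma$ in an axis-aligned square, and counts the grid lines meeting that square (arriving at $2+3+3=8$ per direction), whereas you project directly and use $\lfloor d/s\rfloor+1$; your route is slightly more direct and even yields the tighter constant $6$, but the underlying idea is identical.
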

\begin{proof}
Let $D$ be a disk of radius $2^{j+2}$; see \Cref{fig:disk}. The orthogonal projection of $D$ to the $x$-axis (resp., $y$-axis) is an interval of length at most $2^{j+3}$. Since the distance between any two consecutive vertical (resp., horizontal) lines in $\mathcal{L}_j$ is $2^{j-1/2}$, then 
$D$ intersects at most $\lceil 2^{j+3} / 2^{j-1/2}\rceil = \lceil 2^{7/2}\rceil = 12$ horizontal and at most 12 vertical lines in $\mathcal{L}_j$. 
\end{proof}

\smallskip\noindent
\textbf{Subproblem for a directed line $\mathbf{L}$.} For a directed line $L$, we denote by $L^-$ and $L^+$ the closed half-plane on the left and right of $L$, respectively. Given a directed line $L$ and the input $(P,\mathcal{C}$) of the \textsl{Online Hitting Set} problem, where $P$ is a set of points, and $\mathcal{C}$ is a sequence of disks in the plane, we define a subproblem $(P_L,\mathcal{C}_L)$ as follows.
Let $P_L=P\cap L^-$, and let $\mathcal{C}_L$ be the subsequence of disks $\sigma_i\in \mathcal{C}$ such that the center of $\sigma_i$ is in $L^+$ and $\sigma_i$ contains at least one point in $P_L$. Now for each subproblem $(P_L,\mathcal{C}_L)$, we can run the online algorithm $\alg_0$ described in \Cref{thm:gbottomless}, which was developed for the separated setting in \Cref{sec:separated}. 
Let $\alg_0(L)$ denote the online algorithm, where we run the online algorithm $\alg_0$ on the subproblem $(P_L,\mathcal{C}_L)$.

\smallskip\noindent
\textbf{Online algorithm.} We can now present our online algorithm \alg.
In the current algorithm, we use the online algorithm $\alg_0(L)$ as a subroutine. For each $j\in\mathbb{N}\cup\{0\}$, let layer $L_j$ be the set of disks of radii in the interval $[2^j,2^{j+1})$. 
The algorithm maintains a hitting set $H\subseteq P$ for the disks presented so far. Upon the arrival of a new disk $\sigma$ with radius $r$, if it is already hit by a point in $H$, then do nothing. Otherwise, proceed as follows.
\begin{itemize}
    \item First, find the layer $L_j$, where $j=\lfloor \log r\rfloor$, in which $\sigma$ belongs.
    \item Find the tile $\tau\in \mathcal{T}_j$ that contains the center of~$\sigma$. 
    \begin{itemize}
        \item If $P\cap \tau\neq \emptyset$, then choose an arbitrary point $p\in P\cap \tau$ and add it to $H$.
        \item Otherwise, for every line $L\in \mathcal{L}_j$ that intersects $\sigma$, direct $L$ such that $L^+$ contains the center of $\sigma$, feed the disk $\sigma$ to the online algorithm $\alg_0(L)$, and add any new hitting point chosen by $\alg_0(L)$ to $H$.
    \end{itemize}
\end{itemize}

\smallskip\noindent
\textbf{Competitive analysis.} We now prove that $\alg$ is $O(\log M\log n)$-competitive.

\begin{theorem}\label{thm:disks}
    For the \textsl{Online Hitting Set} problem for a set $P$ of $n$ points in the plane and a sequence $\mathcal{C}=(\sigma_1,\ldots, \sigma_m)$ of disks of radii in the interval $[1, M]$, the online algorithm $\alg$ has a competitive ratio of $O(\log M\log n)$.
\end{theorem}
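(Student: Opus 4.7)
The plan is to split the points added by $\alg$ into Case~1 contributions (a single point of $P\cap C$ inserted when $\sigma$'s center lies in a cell $C$ with $C\cap P\neq\emptyset$) and Case~2 contributions (points inserted by some subroutine $\alg_0(L)$), and to charge each inserted point to a pair $(p,j)$ with $p\in\opt$ and $j$ the layer of the triggering disk, aiming for $O(\log n)$ charges per pair.

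For Case~1, a disk $\sigma\in\mathcal{C}_{p,j}$ contains $p$ and has radius at most $2^{j+1}$, so its center lies within distance $2^{j+1}$ of $p$ and therefore in one of $O(1)$ cells of $\mathcal{P}_j$. Because each cell has diameter $2^j$ and a layer-$j$ disk has radius at least $2^j$, the first Case~1 insertion into a cell $C$ leaves every subsequent layer-$j$ disk centered in $C$ already hit, so each cell contributes at most one Case~1 point. Hence $(p,j)$ receives $O(1)$ Case~1 charges, totaling $O(|\opt|\log M)$ Case~1 points.

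For Case~2, for every triggering Case~2 disk $\sigma$ of layer $j$ with center in cell $C$ and every fixed $p\in\opt\cap\sigma$, I would choose a canonical line $L^*\in\mathcal{L}_j$ bounding $C$ with $p\in(L^*)^-$ when $L^*$ is directed so that $(L^*)^+$ contains $C$; such a line exists because $C\cap P=\emptyset$ forces $p\notin C$, separating $p$ from $C$'s interior by at least one of the four bounding lines. Then $L^*$ intersects $\sigma$, since the distance from $\sigma$'s center to $L^*$ is at most $2^{j-1}\sqrt{2}<2^j\le\mathrm{rad}(\sigma)$, so $\alg$ feeds $\sigma$ to $\alg_0(L^*)$, and $p\in\sigma\cap P_{L^*}$ forces $\sigma\in\mathcal{C}_{L^*}\cap\mathcal{C}_p$. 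The crucial point is that the Case~2 hypothesis $\sigma\cap H=\emptyset$ propagates to $\alg_0(L^*)$'s internal hitting set (a subset of $H$, since $\alg$ records every point chosen by $\alg_0(L^*)$), so $\alg_0(L^*)$ must insert a new point of $\sigma\cap P_{L^*}$. Consequently every $\sigma$ assigned to the pair $(p,L^*)$ belongs to the subsequence of $\mathcal{C}_{L^*}\cap\mathcal{C}_p$ on which $\alg_0(L^*)$ adds a hitting point, whose length the per-point analysis inherited from Theorem~\ref{thm:bottomless} via Theorem~\ref{thm:gbottomless} bounds by $O(\log n)$. Combined with the bound of $16\cdot 2=32$ Case~2 points per disk (Observation~\ref{obs_2} together with the two-point cap of $\alg_0$) and the fact that the $O(1)$ relevant cells of $\mathcal{P}_j$ near $p$ yield only $O(1)$ candidate canonical lines $L^*$ per $(p,j)$, this gives $O(\log n)$ Case~2 charges per $(p,j)$ and hence $O(|\opt|\log M\log n)$ Case~2 points overall.

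The two bounds combine to give $|\alg|=O(|\opt|\log M\log n)$. The main obstacle is the canonical-line construction itself---establishing the chain $p\notin C\Rightarrow L^*$ exists $\Rightarrow L^*$ intersects $\sigma\Rightarrow\alg_0(L^*)$ receives $\sigma\Rightarrow\alg_0(L^*)$ must insert a fresh point when $\sigma\cap H=\emptyset$, so that the per-point bound of $\alg_0(L^*)$ applied to $p\in P_{L^*}$ captures every charge---after which the competitive ratio follows by routine bookkeeping over the $|\opt|\cdot O(\log M)$ relevant $(p,L^*)$ pairs.
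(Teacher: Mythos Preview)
Your proof is correct and follows essentially the same approach as the paper: for each $(p,j)$ with $p\in\opt$, bound the Case-1 contribution by $O(1)$ via the cell-diameter argument, and bound the Case-2 contribution by $O(\log n)$ via the per-point analysis of $\alg_0(L)$ on $O(1)$ directed lines in $\mathcal{L}_j$ near $p$. Your explicit canonical-line construction---picking a bounding line $L^*$ of the center cell that separates $p$ from $C$, so that $p\in P_{L^*}$ is guaranteed before invoking the per-point bound of Theorem~\ref{thm:gbottomless}---is actually a bit more careful than the paper's own phrasing, which applies the $O(\log n)$ bound on all $32$ directed lines without checking that $p\in P_L$; your bound-the-triggering-disks-then-multiply-by-$32$ bookkeeping is a harmless variant of the paper's direct summation of $|H_p^j(L)|$.
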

\begin{proof}
Let $\mathcal{C}$ be a sequence of disks.
For each $j\in\{0,1,\ldots,\lfloor \log M\rfloor\}$, let $\mathcal{C}^j$ be the collection of disks in $\mathcal{C}$ with radii in the interval $\left[2^j, 2^{j+1}\right)$. Let $H$ and $\opt$, resp., be the hitting set returned by the online algorithm $\alg$ and an (offline) minimum hitting set for $\mathcal{C}$. For every point $p\in \opt$, let $\mathcal{C}_p$ be the set of disks in $\mathcal{C}$ containing $p$. For each $j\in\{0,1,\ldots,\lfloor \log M\rfloor\}$, let $\mathcal{C}^j_p$ be the set of disks in $\mathcal{C}^j$ containing $p$, i.e., $\mathcal{C}^j_p=\mathcal{C}^j\cap \mathcal{C}_p$. 
Let $H^j_p\subseteq H$ be the set of points that $\alg$ adds to $H$ in response to hit objects in $\mathcal{C}^j_p$. It is enough to show that for every $j\in\{0,1,\ldots,\lfloor \log M\rfloor\}$ and $p\in \opt$, we have $|H^j_p|\leq O(\log n)$. 
 
Let $\tau$ be the tile in $\mathcal{T}_j$ that contains $p$, and let $\mathcal{C'}_p^j \subseteq \mathcal{C}^j_p$ be the subset of disks whose centers are located in $\tau$. To hit the first disk $\sigma\in \mathcal{C'}_p^j$, our algorithm adds a point from $P\cap \tau$ to $H$. By \Cref{obs_1}, any point in $P\cap \tau$ hits $\sigma$, as well as any subsequent disks in $\mathcal{C'}_p^j$. Our algorithm adds at most 1 point to $H$ to hit all the disks in $\mathcal{C'}_p^j$.

It remains to bound the number of points our algorithm adds for disks in $\mathcal{C}_p^j \setminus\mathcal{C'}_p^j$.
Notice that a disk $D_0$ centered at $p$ of radius $2^{j+1}$ contains all the centers of the disks in $\mathcal{C}_p^j \setminus\mathcal{C'}_p^j$. By the triangle inequality, a disk $D$ centered at $p$ of radius $2^{j+2}$ contains all disks in $\mathcal{C}_p^j \setminus\mathcal{C'}_p^j$.
For any disk $\sigma\in \mathcal{C}_p^j \setminus\mathcal{C'}_p^j$, our algorithm uses algorithm $\alg_0(L)$ for a line $L\in \mathcal{L}_j$, directed such that $L^+$ contains the center of $\sigma$. 
According to \Cref{obs_2}, the disk $D$ intersects at most 24 lines in $\mathcal{L}_j$. However, depending on the location of the center of $\sigma$,
each line may be used in either direction for $\alg_0(L)$.
As a result, for all disks in $\mathcal{C}_p^j \setminus\mathcal{C'}_p^j$, 
algorithm $\alg_0(L)$ is called with at most 48 directed lines $L$.

For each directed line $L$, the online algorithm $\alg_0(L)$ maintains a hitting set $H(L)$ for the disks fed into this algorithm. For the point $p$, let $H_p^j(L)$ denote the set of points that algorithm $\alg_0(L)$ adds to $H(L)$ in response to a disk in $\mathcal{C}_p^j \setminus\mathcal{C'}_p^j$ that it receives as input. By \Cref{thm:gbottomless}, we have $|H_p^j(L)|\leq 
O(\log |\mathcal{C}_p^j \setminus\mathcal{C'}_p^j|)\leq O(\log n)$ for every directed line $L$. This yields 
$|H^j_p| \leq 1+ 48 \cdot O(\log n) =O(\log n)$, as required.
 
By construction, we have $H=\bigcup_{j=0}^{\lfloor\log M\rfloor}\bigcup_{p\in\opt}H^j_p$.
We have shown that $|H^j_p|  =O(\log n)$, for all $j\in\{0,1,\ldots,\lfloor \log M\rfloor\}$ and $p\in \opt$. Consequently, we obtain 
\[
    |H|\leq \sum_{j=0}^{\lfloor\log M\rfloor}\sum_{p\in\opt} O(\log n) =(\lfloor\log M\rfloor+1)\, |\opt|\, O(\log n) =O(\log M\log n)|\opt| . 
    \qedhere
\]
\end{proof}

For disks of radii in $[1, 1 + \eps]$ where $\eps>0$ is constant, 
 \Cref{thm:disks} implies the following.
\begin{corollary}\label{corollary:disk}
    For the \textsl{Online Hitting Set} problem for a set $P$ of $n$ points in the plane and a sequence $\mathcal{C}=(\sigma_1,\ldots, \sigma_m)$ of disks of radii in the interval $[1, 1+\eps]$, where $\eps>0$ is a constant, the online algorithm $\alg$ is $O(\log n)$-competitive.
\end{corollary}

\section{Generalization to Positive Homothets of a Convex Body}
\label{sec:homothet}

In this section, we generalize \Cref{thm:disks} for positive homothets of an arbitrary convex body $C$ in the plane. A set $C\subset \mathbb{R}^2$ is a \emph{convex body} if it is convex and has a nonempty interior; and it is \emph{centrally symmetric} (w.r.t.\ the origin) if $C=-C$, where $-C=\{-p: p\in C\}$.

The key components of our $O(\log n)$-competitive algorithm for disks of comparable sizes were an $O(\log n)$-competitive online algorithm in the line-separated setting and a grid tiling that allowed a reduction to the line-separated setting. Specifically, \Cref{obs_1} and \Cref{obs_2} formulate the two essential properties of a tiling: if a center of disk $\sigma$ lies in a tile $\tau$, then $\tau\subset \sigma$ (\Cref{obs_1}); and every disk intersects $O(1)$ grid lines (\Cref{obs_2}). 

We discuss the requirements for generalizing the line-separated setting from disks to other convex bodies (\Cref{ssec_separated}). Then we show how to handle centrally symmetric convex bodies (\Cref{ssec_symmetry}). It is not difficult to generalize these properties from a disk to a centrally symmetric body $C$. However, a further generalization to an arbitrary convex body $C$ is more challenging and requires new geometric insights. For arbitrary convex bodies, we establish a new geometric structure property (\Cref{thm:body} in \Cref{ssec_key}). Finally, we use this key property to generalize our \textsl{Online Hitting Set} algorithm from disks to positive homothets of an arbitrary convex body (\Cref{thm:homothets} in \Cref{ssec_tiling}).

\subsection{Separated Setting for Convex Bodies}
\label{ssec_separated}

In \Cref{sec:separated}, we considered the \textsl{Online Hitting Set} problem with a point set $P$ above the $x$-axis, and disks with centers below the $x$-axis. We reduced this problem to the \textsl{Online Hitting Set} problem with integer points and objects with the lowest-point property, for which we gave an $O(\log n)$-competitive algorithm in \Cref{sec:bottomless}. The reduction readily generalizes (with the same proof) from disks to a broader family of objects. In this section, we formulate sufficient conditions for a generalization. 

When we replace disks with a convex body $C$, we use a \emph{reference point} $r(C)\in C$ instead of the center; and use an arbitrary line $L$ instead of the $x$-axis. Specifically, we work with the following setting. Let $C$ be a convex body in the plane with a reference point $r(C)\in C$, and let $L$ be a directed line. We consider the \textsl{Online Hitting Set} problem for a set $P$ of $n$ points on the left of $L$, and a sequence $\mathcal{C}$ of homothets $\sigma_i=a_iC+b_i$, $a_i\geq 1$, where the reference point $r(\sigma_i)=r(C)+b_i$ is on the right of $L$. 

An arc $\gamma:[0,1]\to \mathbb{R}^2$ is \emph{$L$-monotone} if every line perpendicular to $L$ intersects the arc in a connected set (i.e., the intersection is empty, a single point, or a vertical segment). In particular, an arc is $x$-monotone if it is $L$-monotone when the line $L$ is the $x$-axis.
The intersection of $C$ with the line $L'$ parallel to $L$ that contains the reference point $r(C)$ is a line segment that we denote by $pq$. Points $p$ and $q$ decompose the boundary $\partial C$ into two arcs: let $\gamma_L^+(C)$ and $\gamma_L^-(C)$, resp., lying in the left and right half-plane bounded by $L'$. The machinery in \Cref{sec:separated} generalizes to this setting if $\gamma_L(C)$ is $L$-monotone. In this case, Conroy and T\'oth~\cite{ConroyT23} shows that the concept of $\hull_t(P)$ generalizes, and the boundary $\partial \hull_t(P)$ is $L$-monotone (cf.~\Cref{lem:hull}). It is easy to see that \Cref{lem:circles}--\Cref{lem:reduction2} also generalize with identical proofs. 

\begin{corollary}\label{cor:general}
    Let $C$ be a convex body with reference point $r(C)\in C$, and let $L$ be a directed line. If $\gamma_L^+(C)$ is $L$-monotone, then there is an $O(\log n)$-competitive algorithm for the \textsl{Online Hitting Set} problem for a set $P$ of $n$ points on the left of $L$, and a sequence $\mathcal{C}$ of positive homothets of $C$ with reference points on the right of $L$. 
\end{corollary}

Importantly, our online algorithm in Section~\ref{sec:disks} uses the line-separated setting for the same grid lines with \emph{both} directions. Therefore, we look for reference points $r(C)$ and lines $L$ such that \emph{both} $\gamma_L^+(C)$ and $\gamma_L^-(C)$ are $L$-monotone. 

\smallskip\noindent
\textbf{Choosing reference points.}
Given a convex body $V$ and a line $L$, it is not difficult to choose a reference point $r(C)\in C$
such that $\gamma_L^+(C)$ and $\gamma_L^-(C)$ are $L$-monotone. 

Let $p,q\in \partial C$ be points on two tangent lines of $C$ orthogonal to $L$, and let the reference point $r(C)$ be any point in the line segment $pq$; see \Cref{fig_pq_partition}. 

\begin{figure}[htbp]
\hfill
     \begin{subfigure}[b]{0.32\textwidth}
          \centering
         \includegraphics[page=1,width=45mm]{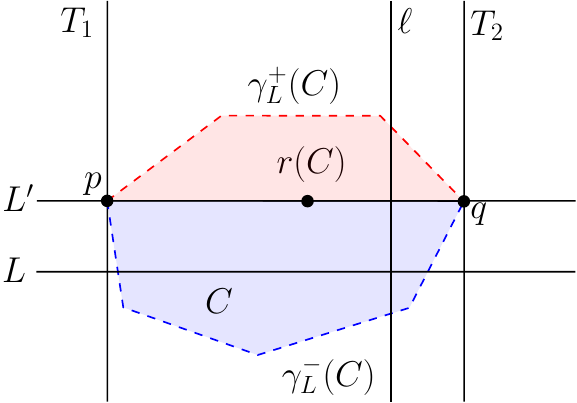}
    \subcaption{}
    \label{fig_pq_partition}
     \end{subfigure}
  \centering
     \begin{subfigure}[b]{0.32\textwidth}
          \centering
         \includegraphics[page=2,width=45mm]{p_q_partition.pdf}
    \subcaption{}
    \label{fig_monotone_1}
     \end{subfigure}
     \hfill
     \begin{subfigure}[b]{0.32\textwidth}
          \centering
         \includegraphics[page=3,width=45mm]{p_q_partition.pdf}
    \subcaption{}
    \label{fig_monotone_2}
     \end{subfigure}
       \caption{(a) A convex body $C$, vertical tangents $T_1$ and $T_2$, points of tangency $p=C\cap T_1$ and $q=C\cap T_2$, and a reference point $r(C)\in pq$. (b--c) The segment $pq$ may lie on the boundary $\partial C$.\label{fig:pqparition}} 
       \end{figure}

\begin{lemma}\label{lem:xmonotone}
    The points $p,q\in \partial C$ decompose $\partial C$ into two $L$-monotone arcs.  
\end{lemma}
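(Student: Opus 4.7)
The plan is to use the convexity of $C$ to write $\partial C$ explicitly as the union of an upper graph, a lower graph, and the two vertical tangent chords, and then verify directly that the two arcs cut out by $p$ and $q$ meet every vertical line in a connected set.

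First I would note that since $C$ is a convex body, $\partial C$ is a Jordan curve, so removing the two distinct points $p$ and $q$ automatically produces exactly two arcs. I would then set $T_1\colon x=a$ and $T_2\colon x=b$ (with $a<b$ since $C$ has nonempty interior), so that $C\subseteq [a,b]\times \mathbb{R}$. For each $x_0 \in [a,b]$ the vertical line $L_{x_0}$ meets the compact convex set $C$ in a closed segment (possibly a single point), and I can define $y^-(x_0)\le y^+(x_0)$ with $L_{x_0} \cap C = \{x_0\}\times[y^-(x_0), y^+(x_0)]$. For $x_0\in (a,b)$ the line $L_{x_0}$ meets $\interior(C)$, hence $y^-(x_0) < y^+(x_0)$, and the only boundary points on $L_{x_0}$ are $(x_0, y^{\pm}(x_0))$. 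This yields the decomposition of $\partial C$ into the upper graph $\Gamma^+ = \{(x, y^+(x)) : x \in [a,b]\}$, the lower graph $\Gamma^-$ defined symmetrically, and the two vertical tangent chords $T_1 \cap C$ and $T_2 \cap C$ (either of which may degenerate to a single point).

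Next, I would split $\partial C$ at $p$ and $q$ into an \emph{upper arc}---the concatenation of the sub-segment of $T_1 \cap C$ from $p$ up to $(a, y^+(a))$, then $\Gamma^+$, then the sub-segment of $T_2 \cap C$ from $(b, y^+(b))$ down to $q$---and a \emph{lower arc} defined symmetrically through $\Gamma^-$. These two arcs together cover $\partial C$ and are disjoint except at $p$ and $q$. To verify $x$-monotonicity of (say) the upper arc, I would compute its intersection with an arbitrary vertical line $L_{x_0}$: the intersection is empty for $x_0 \notin [a,b]$, the single point $(x_0, y^+(x_0))$ for $x_0 \in (a,b)$, and for $x_0 \in \{a,b\}$ it equals the relevant sub-segment of the tangent chord contained in the arc. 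In every case this set is connected, so the arc is $x$-monotone; the lower arc is analogous.

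The only genuine subtlety will be that an arc may actually contain a vertical segment (as in Figures~\ref{fig_monotone_1}--\ref{fig_monotone_2}), which is precisely why the paper's definition of $x$-monotone permits a connected intersection with a vertical line rather than merely a single point; degenerate placements such as $p = (a, y^+(a))$ cause no trouble, since the collapsed sub-segment is still a (trivially connected) point.
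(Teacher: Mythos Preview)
Your proposal is correct and follows essentially the same approach as the paper's proof: both argue that any vertical line strictly between the two tangents meets $\partial C$ in exactly two points, one on each arc, and that the tangent chords contribute only connected vertical pieces. The paper phrases the split as ``the arc of $\partial C$ on or above $pq$'' versus ``on or below $pq$'' and argues via the segment $pq$ meeting $\interior(C)$, whereas you make the same decomposition explicit through the functions $y^{\pm}$ and the graphs $\Gamma^{\pm}$; the underlying argument is the same.
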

\begin{proof}
Assume w.l.o.g.\ that $L$ is the $x$-axis. If a vertical line $\ell$ intersects $C$, then it is in the vertical slab between the two tangent lines of $C$. The two vertical tangent lines each intersect $C$ in a connected set (a point or a vertical segment). Any vertical line $\ell$ strictly between the vertical tangents crosses the segment $pq$ in $C$; see \Cref{fig:pqparition}(a). Since $C$ is compact, $\ell$ intersects $\partial C$ at least once on or below $pq$ and at least once on or above $pq$. These intersections are distinct because either $\ell\cap pq\in {\rm int}(C)$ or $\ell$ enters the interior of $C$ at $\ell\cap pq$. In any case, $\ell$ intersects ${\rm int}(C)$, and by convexity $\ell$ crosses $\partial C$ exactly twice. It follows that $\ell$ crosses the arc of $\partial C$ on or above $pq$ exactly once, and the arc of $\partial C$ on or below $pq$ exactly once. 
This shows that both arcs are $x$-monotone, as required.
\end{proof}



We can now state the generalization of \Cref{thm:gbottomless} to positive homothets in this setting. 

\begin{theorem}\label{thm:separated2}
Let $C$ be a convex body with a reference point $r(C)$ such that the portion of $\partial C$ lying above the horizontal line through $r(C)$ is $x$-monotone. Then there is an $O(\log n)$-competitive algorithm for the \textsl{Online Hitting Set} problem with a set $P\subseteq \mathbb{R}^2$ of $n$ points above the $x$-axis, and positive homothets of $C$ with reference points below the $x$-axis.
\end{theorem}

\subsection{Centrally Symmetric Convex Bodies}
\label{ssec_symmetry}

When $C$ is a centrally symmetric convex body (where $C=-C$, hence the center of symmetry is the origin $o$), then we can set its reference point to the origin. Any line $L$ passing through the origin partitions $\partial C$ into two arcs that are monotone w.r.t.\ some direction. We can apply a shear transformation which is the identity transformation on $L$ and maps the tangent lines of $C$ at $L\cap\partial C$ to lines orthogonal to $L$. Such a nondegenerate linear transformation maintains central symmetry, and now $L$ partitions $\partial C$ into two arcs that are monotone w.r.t.\ $L$. Consequently, the reduction to objects with the lowest-point property and \Cref{thm:separated2} generalizes; and it yields an $O(\log n)$-competitive algorithms for the \textsl{Online Hitting Set} problem in the line-separated setting. 

For a suitable tiling, it is enough to transform $C$ into a fat convex body using an affine transformation. By John's ellipsoid theorem~\cite[Chap.~21]{GeoApprox11}, for every convex body $C$ in the plane, with center of mass at the origin, there exists an ellipse $E$ such that $E\subseteq C\subseteq 2E$.
A nondegenerate affine transformation maps $E$ to a disk of unit diameter and preserves the central symmetry of $C$. Consequently, we may assume that $C$ satisfies $B\left(o,\frac12\right)\subseteq C\subseteq B(o,1)$, where $B(c,r)$ denotes a ball with center $c$ and radius $r$. Now let $\mathcal{C}$ be a sequence of positive homothets $\sigma_i=a_iC+b_i$, where $a_i\in [1,2)$; and let $\mathcal{T}$ be a tiling of $\mathbb{R}^2$ with congruent grid squares of side length $\frac{\sqrt{2}}{4}$ (and diameter $\frac12$). Then, \Cref{obs_1} and \Cref{obs_2} readily generalize (with almost identical proofs): If a tile $\tau\in \mathcal{T}$ contains the center of a homothet $\sigma_i\in \mathcal{C}$, then $\tau\subset \sigma_i$; and every homothet $aC+b$ with scaling factor $4$ intersects at most $2\cdot \lceil 4\, {\rm diam}(C)/\frac{\sqrt{2}}{4}\rceil =
2\cdot \lceil 16\sqrt{2}\rceil =46$ grid lines. 
Overall, \Cref{thm:disks} and its proof easily generalize and yield the following:
\begin{theorem}\label{thm:central}
Given any centrally symmetric convex body $C\subset \IR^2$ and a parameter $M\geq 1$, there is an online algorithm with competitive ratio of $O(\log M\log n)$ for the \textsl{Online Hitting Set} problem for a set $P$ of $n$ points in the plane and a sequence $\mathcal{C}=(\sigma_1,\ldots, \sigma_m)$ of positive homothets $\sigma_i=a_iC+b_i$, where $a_i\in [1, M]$.
\end{theorem}

In the remainder of \Cref{sec:homothet}, we further generalize \Cref{thm:disks} and \Cref{thm:central} to the case where $\sigma$ is an arbitrary convex body in the plane. 

\subsection{Monotonicity in Two Directions}
\label{ssec_key}
In this section, we consider the object $C$ to be an arbitrary convex body.
Note that if $C$ is centrally symmetric (centered at the origin), then for any line $L$, the tangent lines of $C$ orthogonal to $L$ intersect $\partial C$ in two antipodal points, $p\in \partial C$ and $q=-p$, and the segment $pq$ passes through the center of $C$. However, for an arbitrary convex body $C$, the tangency points $p,q\in \partial C$ are not necessarily  antipodal. 
For two nonparallel lines, $L_1$ and $L_2$, let $p_1,q_1\in \partial C$ and $p_2,q_2\in \partial C$ be points on two tangent lines of $C$ orthogonal to $L_1$ and $L_2$, respectively. Now \Cref{lem:xmonotone} guarantees that if we choose $r(C)=p_1q_1\cap p_2q_2$, then both $\gamma_L^-(C)$ and $\gamma_L^+(C)$ are $L$-monotone for $L\in \{L_1,L_2\}$. However, the resulting reference point $r(C)$ may be on the boundary of $C$ (or very close to the boundary), and \Cref{obs_1} would not hold for the tiling generated by $L_1$ and $L_2$; see \Cref{fig:pqparition}(b-c) for illustrations. 

We formulate a weaker condition (\Cref{def:goodpair}) 
that allows the argument in \Cref{sec:separated,sec:disks} to go through, after a preprocessing step that ensures that the convex body $C$ is ``fat''. 



\smallskip\noindent
\textbf{Preprocessing step.}
Given a convex body $C$, we first consider an inscribed triangle of the maximum area. We then apply an area-preserving (unary) affine transformation to transform $C$ so that this inscribed triangle of the maximum area becomes an equilateral triangle. (This is similar to mapping the minimum enclosing ellipse of $C$ into a circle, or assuming that $C$ is fat after a suitable affine transformation.) In the remainder of \Cref{ssec_key}, we assume that a triangle inscribed with the maximum area of $C$ is equilateral. We may further assume, by scaling, that the inscribed circle of this triangle has a unit diameter.

\smallskip\noindent
\textbf{Finding a good pair of lines.} We can now present the properties we require for two nonparallel lines. 
 
\begin{definition}\label{def:goodpair}
    Let $C$ be a convex body in the plane such that an inscribed triangle of the maximum area is an equilateral triangle $T_{\rm in}$, and the circle inscribed in $T_{\rm in}$ is a circle of unit diameter. A pair of lines $\{\ell_1, \ell_2\}$ is a \emph{good pair for $C$} if they satisfy the following properties\footnote{No attempts were made to optimize the constants $\pi/15$ and $\frac{1}{50}$ in \Cref{def:goodpair}.}:
    \begin{enumerate}\itemsep 0pt
    \item The angle between the two lines is bounded from below by $\angle (\ell_1,\ell_2)\geq \pi/15$.
    
    \item For $i\in \{1,2\}$, there exist points $p_i,q_i\in \partial C$ such that the two lines tangent to $C$ parallel to $\ell_i$ contain $p_i$ and $q_i$, respectively; 
    furthermore, $C$ contains the disk $B(x,\frac{1}{50})$ of diameter $\frac{1}{25}$ centered at the intersection point $x=p_1q_1\cap p_2q_2$. 
    \end{enumerate}
\end{definition}

We prove below (\Cref{thm:body}) that every convex body
$C$ specified in \Cref{def:goodpair} admits a good pair of lines. 
    We introduce some notation. 
    Let $T_{\rm in}=\Delta(p_1 p_2 p_3)$ 
    be a maximum-area inscribed triangle of $C$ (where $p_1$, $p_2$ and $p_3$ are in counterclockwise order); refer to \Cref{fig:nbd_tri}. Assume w.l.o.g.\ that the center of $T_{\rm in}$ is the origin $o$. Then, by assumption, the inscribed disk of $T_{\rm in}$ is $B\left(o,\frac12\right)$ (the disk centered at $o$ with diameter $1$). Let $L_1$, $L_2$ and $L_3$ be the lines passing through $p_1$, $p_2$ and $p_3$, resp., and parallel to the opposite side of $T_{\rm in}$; and let $T_{\rm out}$ be the equilateral triangle with the three sides contained in $L_1$, $L_2$ and $L_3$. 

   \begin{figure}[htbp]
  \centering
     \begin{subfigure}[b]{0.48\textwidth}
          \centering
        \includegraphics[page=1,scale=0.6]{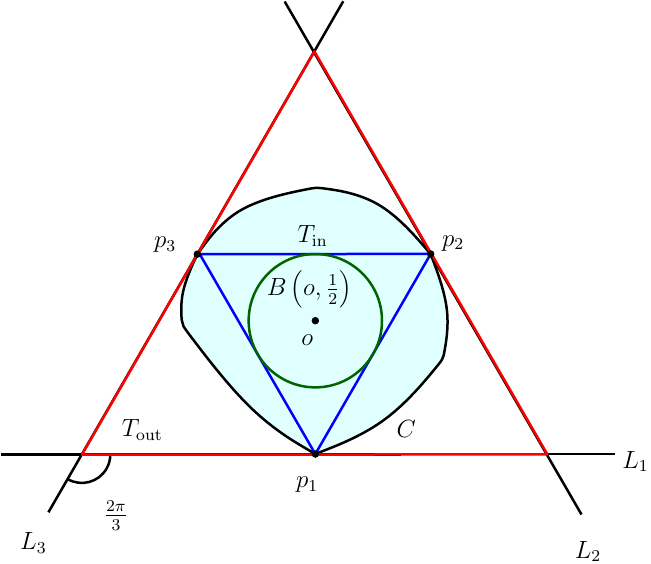}
    \subcaption{}
    \label{fig:nbd_tri}
     \end{subfigure}
     \hfill
     \begin{subfigure}[b]{0.48\textwidth}
          \centering
        \includegraphics[page=1,scale=0.6]{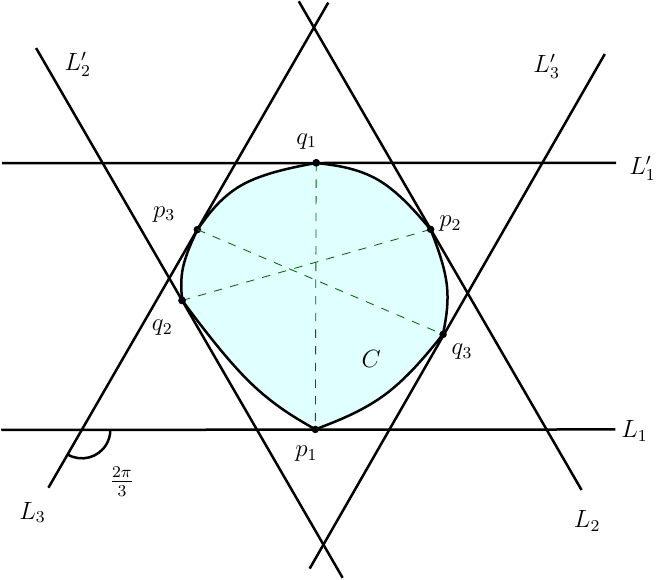}
    \subcaption{}
    \label{fig:tri_nbd}
     \end{subfigure}
       \caption{(a) Triangles $T_{\rm in}$ and $T_{\rm out}$. 
        (b) Lines $L_1'$, $L_2'$ and $L_3'$, and segments $p_1 q_1$, $p_2 q_2$ and $p_3 q_3$.} 
       \end{figure}

\begin{lemma}\label{lem:containment} 
The following containments hold:    
    \begin{equation}\label{eq:T12}
B\left(o,\frac12\right)\subseteq T_{\rm in}\subseteq C\subseteq T_{\rm out}.
    \end{equation}
\end{lemma}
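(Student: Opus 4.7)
\medskip

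The plan is to verify the three containments in \eqref{eq:T12} in order, with essentially all of the work concentrated in the third.

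The first containment, $B(o,1)\subseteq T_{\rm in}$, is immediate from the normalization: by hypothesis, $T_{\rm in}$ is an equilateral triangle whose inscribed circle is the unit-diameter disk centered at the origin $o$, and the inscribed disk of a triangle is contained in the triangle. The second containment, $T_{\rm in}\subseteq C$, follows because the vertices $p_1,p_2,p_3$ of $T_{\rm in}$ lie in $C$ by construction, and $C$ is convex; hence $T_{\rm in}=\conv\{p_1,p_2,p_3\}\subseteq C$.

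The main step is $C\subseteq T_{\rm out}$. For each $i\in\{1,2,3\}$, let $H_i$ denote the closed half-plane bounded by $L_i$ that contains $T_{\rm in}$ (equivalently, that contains the side of $T_{\rm in}$ opposite to $p_i$). By construction $T_{\rm out}=H_1\cap H_2\cap H_3$, so it suffices to show $C\subseteq H_i$ for every $i$. I will argue this by contradiction using the maximality of $T_{\rm in}$. Suppose some point $q\in C$ lies strictly outside $H_i$; without loss of generality $i=1$. Since $L_1$ is parallel to the side $p_2p_3$ and passes through $p_i$, the (unsigned) distance from $q$ to the line $\ell_{23}$ through $p_2$ and $p_3$ strictly exceeds the distance from $p_1$ to $\ell_{23}$. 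Consider the triangle $T'=\conv\{q,p_2,p_3\}$. Because $C$ is convex and $q,p_2,p_3\in C$, we have $T'\subseteq C$, so $T'$ is an inscribed triangle of $C$. Its area equals $\tfrac12\,|p_2p_3|\cdot d(q,\ell_{23})$, which strictly exceeds $\tfrac12\,|p_2p_3|\cdot d(p_1,\ell_{23})=\mathrm{area}(T_{\rm in})$, contradicting the choice of $T_{\rm in}$ as a maximum-area inscribed triangle of $C$. Hence $C\subseteq H_1$, and symmetrically $C\subseteq H_2\cap H_3$, which gives $C\subseteq T_{\rm out}$.

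I do not anticipate any real obstacle: the only subtle point is making sure the contradiction uses the correct ``height'' (distance to the opposite side), which is why defining $L_i$ as the line through $p_i$ parallel to the opposite side is exactly the right setup. No additional properties of $C$ beyond convexity and the maximality of $T_{\rm in}$ are invoked, and the equilateral-ness of $T_{\rm in}$ plays no role in this particular lemma (it will matter only in subsequent lemmas that exploit the geometry of $T_{\rm out}$).
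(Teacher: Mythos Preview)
Your proposal is correct and follows essentially the same approach as the paper: both argue the nontrivial containment $C\subseteq T_{\rm out}$ by contradiction, picking a point of $C$ outside one of the bounding half-planes and exhibiting an inscribed triangle of strictly larger area than $T_{\rm in}$. Your choice of the replacement triangle $\conv\{q,p_2,p_3\}$ (keeping the base $p_2p_3$ and moving the apex further from it) is in fact the cleaner formulation of this standard argument.
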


\begin{proof} 
As per the construction of $T_{\rm in}$, we have $B\left(o,\frac12\right)\subseteq T_{\rm in}$. We have $T_{\rm in}\subseteq C$ since $T_{\rm in}$ is an inscribed triangle of $C$. To prove the third containment, suppose to the contrary that $C\not\subseteq T_{\rm out}$. Then there is a point $p_4\in C\setminus T_{\rm out}$. We may assume w.l.o.g.\ that $p_4$ and $T_{\rm in}$ lie on opposite sides of the line $L_1$. Since $L_1$ is parallel to $p_2p_3$, then ${\rm dist}(p_2 p_3 ,p_1)<{\rm dist}(p_2 p_3,p_4)$. The convexity of $C$ yields $\Delta(p_1 p_2 p_4)\subset C$, and 
the inequality ${\rm dist}(p_2 p_3 ,p_1)<{\rm dist}(p_2 p_3,p_4)$ now implies that ${\rm area}(\Delta(p_1 p_2 p_3))<{\rm area}(\Delta(p_4 p_2 p_3))$, contradicting the assumption that $T_{\rm in}$ has maximum area. 
\end{proof}

For every $i\in \{1,2,3\}$, let $L_i'$ be a line parallel to $L_i$ such that $C$ lies in the parallel strip between $L_i$ and $L_i'$; see \Cref{fig:tri_nbd}. Furthermore, we select intersection points $q_1\in C\cap L_1'$, $q_2\in C\cap L_2'$ and $q_3\in C\cap L_3'$ (ties are broken arbitrarily).

Consider the lines spanned by $p_1 q_1$, $p_2 q_2$, and $p_3 q_3$. Ideally, two of these lines form a good pair for $C$. However, this is not always the case. 
We distinguish between several cases based on the relative position of these lines. 
Since $T_{\rm in}$ is equilateral, the angles $\angle op_1 q_1$, $\angle op_2 q_2$ and $\angle op_3 q_3$ are each in the range $[0,\pi/6]$. To distinguish between cases, we define three \emph{types} of a segment $p_iq_i$; see \Cref{fig:type1}. 
For $i\in \{1,2,3\}$, line $p_i q_i$ is 
\begin{enumerate}\itemsep 0pt
\item \emph{central} if $\angle q_i p_i o \leq \frac{2\pi}{15}$;
\item \emph{left} if $\angle q_i p_i p_{i-1} < \frac{\pi}{30}$;
\item \emph{right} if $\angle q_i p_i p_{i+1} < \frac{\pi}{30}$. 
\end{enumerate}
Note that each segment $p_iq_i$ belongs to exactly one type as $\angle p_{i-1}p_ip_{i+1}=\pi/3$ for all $i\in \{1,2,3\}$ in the regular triangle $\Delta p_1 p_2 p_3$. 
In the remainder of this section, we represent the unit direction vectors (for short, \emph{direction}) as follows. If the unit direction vector of a line $L$ is $\mathbf{u}=(1,\varphi)$ in polar coordinates,  where $\varphi\in [0,\pi)$, we say that the direction of $L$ is $\varphi$. 
We may further assume w.l.o.g.\ that the point $p_1$ is on the $y$-axis below the origin. With this assumption, the directions of the lines $op_1$, $op_2$ and $op_3$ are $\pi/2$, $\pi/6$ and $5\pi/6$, respectively; see \Cref{fig:type2}.

\begin{figure}[htbp]
  \centering
     \begin{subfigure}[b]{0.48\textwidth}
          \centering
        \includegraphics[page=5,scale=0.55]{p_1q_1p_3q_3.pdf}
    \subcaption{}
    \label{fig:type1}
     \end{subfigure}
    \hfill
    \begin{subfigure}[b]{0.48\textwidth}
          \centering
    \includegraphics[page=6,width=55 mm]{Convex_case.pdf}
    \subcaption{}
    \label{fig:type2}
     \end{subfigure}
       \caption{(a) The segment $p_2q_2$ is \emph{central} when it lies in the yellow region; \emph{left} when it lies in the orange region; and \emph{right} when it lies in the green region.
       (b)  The position of lines $op_1$, $op_2$ and $op_3$ w.r.t.\ the $x$-axis.} 
       \end{figure}
       
\begin{lemma}\label{lem:center}
If any two lines in $\{p_1 q_1, p_2 q_2, p_3 q_3\}$ are central, then they form a good pair.
\end{lemma}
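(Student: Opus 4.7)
The plan is to verify both conditions of Definition~\ref{def:goodpair} for the pair $\{\ell_1,\ell_2\}$ obtained by choosing $\ell_i$ parallel to $L_i$ — the tangent line to $C$ at $p_i$ that is parallel to the side of $T_{\rm in}$ opposite $p_i$. With this choice the two tangent lines to $C$ parallel to $\ell_i$ are precisely $L_i$ and $L_i'$, and these contain the construction points $p_i$ and $q_i$ respectively, so the tangent-witness requirement of condition~(2) is met by the segments $p_iq_i$ themselves, and the intersection point named in that condition is exactly $x=p_1q_1\cap p_2q_2$. Condition~(1) is then immediate: $\ell_1\parallel L_1$ and $\ell_2\parallel L_2$ are parallel to two sides of the equilateral triangle $T_{\rm in}$, which meet at interior angle $\pi/3$, so $\angle(\ell_1,\ell_2)=\pi/3\geq \pi/15$. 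Note that this part of the argument does not use the centrality hypothesis; that hypothesis will be reserved for the disk condition.

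The substance of the lemma is therefore the disk condition $B(x,\tfrac{1}{50})\subset C$, i.e., $\mathrm{dist}(x,\partial C)\geq \tfrac{1}{100}$. Since $T_{\rm in}\subset C$ by Lemma~\ref{lem:containment}, it suffices to prove $x\in T_{\rm in}$ with $\mathrm{dist}(x,\partial T_{\rm in})\geq \tfrac{1}{100}$. I will place $o$ at the origin with $p_1=(0,1)$ and $p_2=(-\tfrac{\sqrt 3}{2},-\tfrac{1}{2})$, and parametrize each central chord $p_iq_i$ by the signed angle $\alpha_i\in[-\tfrac{2\pi}{15},\tfrac{2\pi}{15}]$ it makes with the median direction $p_io$; the centrality hypothesis is exactly the bound $|\alpha_i|\leq \tfrac{2\pi}{15}$. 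Solving the two linear chord equations gives $x$ explicitly, and the three barycentric coordinates $(\lambda_1,\lambda_2,\lambda_3)$ of $x$ with respect to $T_{\rm in}$ become rational functions of $\sin\alpha_i,\cos\alpha_i$. At $\alpha_1=\alpha_2=0$ both chords pass through $o$, giving $x=o$ and $\lambda_i=\tfrac{1}{3}$, which anchors the perturbation and shows $x\in T_{\rm in}$ near the base point.

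The main obstacle is the uniform lower bound $\min_i\lambda_i(\alpha_1,\alpha_2)\geq \tfrac{1}{150}$ on the square $[-\tfrac{2\pi}{15},\tfrac{2\pi}{15}]^2$; since the altitude of $T_{\rm in}$ equals $\tfrac{3}{2}$, this translates to $\mathrm{dist}(x,\partial T_{\rm in})\geq \tfrac{1}{100}$ and simultaneously certifies $x\in T_{\rm in}$. I plan to establish it by monotonicity: each $\lambda_i$ is a smooth rational function on this small region and I expect it to have no interior critical point there, so its minimum is attained at one of the four corners $(\pm\tfrac{2\pi}{15},\pm\tfrac{2\pi}{15})$; a direct numerical check at the four corners shows the minimum barycentric value stays well above $\tfrac{1}{150}$. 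The constants $\tfrac{2\pi}{15}$ and $\tfrac{1}{50}$ in Definition~\ref{def:goodpair} appear to be calibrated with substantial slack for precisely this purpose, so if proving monotonicity rigorously over the full square becomes delicate, I can fall back on a crude Lipschitz estimate around $\alpha=0$ using that $x(\alpha_1,\alpha_2)$ is $C^\infty$ and $x(0,0)=o$, and still close the argument comfortably.
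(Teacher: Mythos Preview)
Your reading of Definition~\ref{def:goodpair} differs from the paper's intended one, and this creates a gap in your treatment of condition~(1). The lemma asserts that the \emph{chord lines} $p_1q_1,p_2q_2$ themselves form the good pair, so $\ell_i$ should be the line spanned by $p_iq_i$, not a line parallel to the tangent $L_i$. The paper's proof accordingly bounds $\angle(p_1q_1,p_2q_2)$: since $p_iq_i$ is central, its slope lies within $2\pi/15$ of the median $p_io$, and the two median directions differ by $\pi/3$, so the chord directions differ by at least $\pi/3-2\cdot\tfrac{2\pi}{15}=\tfrac{\pi}{15}$. Your observation that ``this part does not use the centrality hypothesis'' is thus a symptom of having picked the wrong pair; with the correct $\ell_i$, centrality is exactly what forces the $\pi/15$ lower bound, and your $\pi/3$ computation proves a different (though related) statement. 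Note that the literal wording of Definition~\ref{def:goodpair}(2) is admittedly awkward under the chord reading---the tangent lines at $p_i,q_i$ are parallel to each other but not to the chord $p_iq_i$---so your alternate reading is understandable; but the lemma statement, the paper's proof, and the downstream use in Section~\ref{ssec:tiling} (where the grid directions must pass through the reference point $x$) all confirm that the chords are meant.

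For condition~(2) your plan is sound but more elaborate than necessary, and you leave the key steps as intentions rather than arguments. The paper bypasses any critical-point analysis by a purely geometric observation: the locus of possible intersection points $x$ is the intersection of two cones (apex $p_i$, aperture $4\pi/15$, axis $p_io$), which is a convex quadrilateral $Q\subset \interior T_{\rm in}$. Since the distance to each side of $T_{\rm in}$ is an affine function of $x$, its minimum over $Q$ is attained at a vertex of $Q$; by the reflective symmetry of $Q$ about $op_3$ one identifies the worst vertex $v$ (given by $\angle p_2p_1v=\angle p_3p_2v=\pi/30$), and a single law-of-sines computation gives $\mathrm{dist}(v,p_2p_3)=\sqrt{3}\,\sin(\pi/30)/\sin(2\pi/3)\cdot\sin(\pi/30)>0.021>\tfrac{1}{50}$. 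This is essentially what your ``check the four corners'' step would discover, but the cone/quadrilateral picture makes the absence of interior extrema immediate and turns the corner check into one explicit triangle calculation rather than a numerical sweep or a Lipschitz estimate.
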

\begin{proof} 
Assume w.l.o.g.\ that $p_1 q_1$ and $p_2 q_2$ are central. We show that $p_1 q_1$ and $p_2 q_2$ form a good pair. By assumption, the directions of $p_1o$ and $p_2o$ are $\frac{\pi}{2}$ and $\frac{\pi}{6}$, respectively. Since $p_1 q_1$ is central, its direction is in the interval $[\frac{\pi}{2}-\frac{2\pi}{15}, \frac{\pi}{2}+\frac{2\pi}{15}] = [\frac{11\pi}{30},\frac{19\pi}{30}]$. Similarly, the direction of $p_2q_2$ is in the interval 
 $[\frac{\pi}{6}-\frac{2\pi}{15}, \frac{\pi}{6}+\frac{2\pi}{15}] = [\frac{\pi}{30},\frac{9\pi}{30}]$. Consequently, the angle between $p_1q_1$ and $p_2q_2$ is at least 
 $\angle  (p_1q_1, p_2q_2)\geq \frac{11\pi}{30}-\frac{9\pi}{30}=\frac{\pi}{15}$, as required.

 \begin{figure}[htbp]
         \centering
\includegraphics[page=4,width=50 mm]{p_1q_1p_3q_3.pdf}
    \caption{If both $p_1q_1$ and $p_2 q_2$ are central, then point $x = p_1q_1 \cap p_2p_3$ lies in the intersection of the two cones, which is a convex quadrilateral $Q$ in the interior of $T_{\rm in}$.}
    \label{fig_cones}
     \end{figure}

Since $p_1q_1$ is central, it lies in the cone with apex $p_1$, aperture angle $\frac{4\pi}{15}$, and symmetry axis $p_1o$. Similarly, $p_2q_2$ lies in the cone with apex $p_2$, aperture $\frac{4\pi}{15}$, and symmetry axis $p_2o$. The point $x=p_1q_1\cap p_2q_2$ is in the intersection of the two cones, which is a convex quadrilateral $Q$ in the interior of $T_{\rm in}$; see \Cref{fig_cones}. Note that $Q$ has a reflection symmetry in the line $op_3$ (the reflection exchanges the two cones). The distance between $Q$ and the boundary of $T_{\rm in}$ is minimized for a vertex of $Q$ and a side of $T_{\rm in}$. Specifically, it is minimized for the vertex $v$, defined by $\angle p_1 p_2 v=\frac{\pi}{30}$ and $\angle p_3 p_1 v=\frac{\pi}{30}$, and side $p_1p_3$. The law of sines for the triangle $\Delta p_1p_2v$ yields 
\[    |p_1 v| = |p_1p_2| \cdot \frac{\sin(\pi/30)}{\sin(2\pi/3)}
              =  \frac{\sqrt{3}\cdot \sin(\pi/30)}{\sin(2\pi/3)} \geq \frac15.
\]
Now $\angle p_3 p_1 v=\frac{\pi}{30}$ implies that 
\[  {\rm dist}(v, p_1p_3)
    =\frac{\sqrt{3} \cdot \sin(\pi/30)}{\sin(2\pi/3)}\cdot \tan(\pi/30) 
    >0.021 >\frac{1}{50}.
\]
We conclude that $B(x,\frac{1}{50})\subset T_{\rm in}\subset C$, as required.
\end{proof}


\begin{figure}[htbp]
    \centering
     \begin{subfigure}[b]{0.495\textwidth}
         \centering
\includegraphics[page=5,scale=0.6]{Convex_case.pdf}
    \subcaption{}
    \label{fig_T_1-}
     \end{subfigure}
 \hfill
     \begin{subfigure}[b]{0.495\textwidth}
          \centering
\includegraphics[page=4,scale=0.6]{Convex_case.pdf}
    \subcaption{}
    \label{fig_T_1+}
     \end{subfigure}
     \hfill
       \caption{Illustration of (a) $T_1^{-}$, $T_2^{-}$, $T_3^{-}$, $s_1^-$, $s_2^-$ and $s_3^-$; and (b) $T_1^{+}$, $T_2^{+}$, $T_3^{+}$, $s_1^+$, $s_2^+$ and $s_3^+$.} 
       \label{fig_left_right}
       \end{figure}

If $p_1 q_1$ is left (resp., right), then we show that point $q_1$ must be close to $p_3$ (resp., $p_2$). To specify the possible locations of $q_1$, $q_2$, and $q_3$, we define six small triangles now. Let $T_1^-$ be the triangle bounded by the lines $L_1$, $p_1p_2$, and the line $\ell_2^-$ passing through $p_2$ such that $\angle (\ell_2^-,p_2 p_1)=\pi/30$; see \Cref{fig_T_1-}. Similarly, let $T_1^+$ be the triangle bounded by the lines $L_1$, $p_1p_3$, and the line $\ell_3^+$ passing through $p_3$ such that $\angle (\ell_3^+,p_3p_1)=\pi/30$; see \Cref{fig_T_1+}. The definition of triangles $T_2^-$, $T_2^+$, $T_3^-$ and $T_3^+$ is analogous. The following two lemmas state key properties of left (resp., right) lines; see \Cref{fig_left_right}.

In the remainder of the section, we assume arithmetic modulo 3 on the indices $i\in \{1,2,3\}$.  
\begin{lemma}\label{lem:noname}
For $i\in \{1,2,3\}$, if $p_i q_i$ is left (resp., right), then $q_i\in T_{i-1}^-$ (resp.,  $q_i\in T_{i+1}^+$).
\end{lemma}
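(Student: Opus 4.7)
The plan is to reduce the six cases of the lemma (three choices of $i$ and the pair ``left''/``right'') to a single one using the symmetries of the equilateral triangle $T_{\rm in}$: the rotation by $2\pi/3$ cycles the indices $i$, while the reflection through the median of $T_{\rm in}$ at $p_i$ simultaneously exchanges $p_{i+1}$ with $p_{i-1}$, interchanges ``left'' with ``right'', and swaps the roles of $T^-$ and $T^+$. After this reduction it suffices to prove, say, the ``left'' case for $i=1$: assuming $\angle q_1 p_1 p_2 < \pi/30$, show that $q_1 \in T_{i-1}^-$.

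I would then fix coordinates with the centroid $o$ of $T_{\rm in}$ at the origin and $p_1$ on the positive $y$-axis, so that $p_1=(0,1)$, $p_2=(-\sqrt{3}/2,-1/2)$, $p_3=(\sqrt{3}/2,-1/2)$, and $L_1$ is the horizontal line $y=1$. By Lemma~\ref{lem:containment} we have $C \subseteq T_{\rm out}$, so the parallel support line $L_1'$ is a horizontal line $y=c_0$ with $c_0 \in [-2,-1/2]$. The ``left'' condition confines $q_1$ to the narrow cone $K$ from $p_1$ of aperture $\pi/15$ about the ray $p_1 p_2$; combined with $q_1 \in L_1' \cap C \subseteq L_1' \cap T_{\rm out}$, this pins $q_1$ to a concrete one-parameter family of segments parametrized by $c_0$.

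Next I would show that this family is contained in $T_{i-1}^-$ by checking the three half-plane inclusions dictated by its bounding lines, namely $L_{i-1}$, the line through the edge $p_{i-1}p_i$, and the line $\ell_i^-$ through $p_i$ making angle $\pi/30$ with $p_i p_{i-1}$. The key geometric fact is that the outer bounding ray of $K$ (the one farther from the median $p_1 o$) makes angle exactly $\pi/30$ with the edge $p_1 p_2$, matching the angle used to construct $\ell^\pm$; the inclusion with respect to $L_{i-1}$ follows from $L_{i-1}$ being a side of $T_{\rm out} \supseteq C \ni q_1$; and the inclusion with respect to the extended edge follows because $L_1'$ lies on the opposite side of that edge from $K$ for every admissible~$c_0$.

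The main obstacle will be the careful bookkeeping of the cyclic indexing of the vertices $p_i$, the lines $L_i$ and $L_i'$, and the thin triangles $T_i^\pm$, together with pinning down the intended one of the two candidate lines through $p_{i+1}$ satisfying the angle condition $\pi/30$ (the condition fixes the line only up to reflection across the edge). Once this is disentangled, each of the three half-plane checks reduces to an elementary trigonometric inequality, and the extremal case is when $c_0$ approaches $-2$ so that $L_1'$ touches the vertex of $T_{\rm out}$ opposite $L_1$; this is where the cone $K$ just grazes the side of $T_{\rm out}$ corresponding to~$L_{i-1}$, and where the three inclusions are tightest.
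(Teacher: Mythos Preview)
Your approach is essentially the paper's: reduce by symmetry to one case, use the narrow cone at $p_1$ coming from the angle hypothesis, and combine it with $q_1\in C\subseteq T_{\rm out}$ from Lemma~\ref{lem:containment}. The paper's execution is considerably shorter, though: rather than setting up coordinates, parametrizing by the height $c_0$ of $L_1'$, and verifying three half-plane inequalities (with an extremal analysis at $c_0=-2$), the paper simply notes that the segment $p_1q_1$, lying in the cone, must cross the side of the target triangle lying on the line $p_2p_3$; it then cannot leave through the side on $L_2$ because $q_1\in T_{\rm out}$, nor through the third side because that side is the cone's own boundary. This dispenses with coordinates and the one-parameter family entirely.

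One point you should revisit is your ``key geometric fact'' that the outer boundary ray of $K$ coincides with the line $\ell_i^-$ bounding $T_{i-1}^-$. With the paper's definitions, $\ell_1^-$ is the line through $p_1$ making angle $\pi/30$ with $p_1p_{0}=p_1p_3$, \emph{not} with $p_1p_2$; so the claimed match does not hold as you have written it, and the three half-plane checks you outline (against $L_{i-1}$, the edge $p_{i-1}p_i$, and $\ell_i^-$) would place the target triangle on the $p_3$-side of the figure while your cone $K$ points toward $p_2$. The paper's own proof describes the sides of the target triangle as lying on $p_2p_3$ and $L_2$, which does sit on the $p_2$-side and does admit the cone argument; so part of the difficulty is that the paper's labels for the $T_i^\pm$ and $\ell_i^\pm$ are easy to misread. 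Before carrying out your half-plane checks you need to pin down, from the figure or by re-deriving the intended triangle, which of the six thin triangles actually lies beyond $p_2$ along the ray $p_1p_2$---that is the one you must land in, and its $\ell$-side \emph{will} then match a boundary ray of $K$.
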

\begin{proof}
Assume w.l.o.g.\ that $i=1$ and $p_1q_1$ is left. Then the segment $p_1q_1$ lies in the cone with apex $p_1$, and bounded by the lines $p_1p_3$ and $\ell_1^-$; see \Cref{fig_T_1-}. In particular, $p_1q_1$ intersects $p_2p_3$ along one side of the triangle $T_3^-$. By \Cref{lem:containment}, $q_1$ is contained in $T_{\rm out}$, and so $p_1q_1$ cannot cross the side of $T_3^-$ along $L_3$. Thus, the endpoint $q_1$ of $p_1q_1$ lies in $T_3^{-}$. 
\end{proof}

The triangles $T_i^-$ and $T_i^+$ are intuitively ``small''. The next lemma gives quantitative bounds on their size. For $i\in \{1,2,3\}$, let $s_i^-=L_i\cap \ell_{i+1}^-$ and $s_i^+=L_i\cap \ell_{i-1}^+$. Note that $s_i^-$ (resp., $s_i^+$) is a vertex of $T_i^- $ (resp., $T_i^+$); see \Cref{fig_left_right}.
\begin{lemma}\label{lem:angles}
For every $i\in \{1,2,3\}$, we have 
$0.223<|p_i s_i^-|<0.23$, 
$0.193< {\rm dist}(s_i^-,p_ip_{i-1})< 0.2$, and 
$\angle s_i^- o p_i < \frac{\pi}{15}$; 
and similarly, 
$0.223<|p_i s_i^+|<0.23$, 
$0.193< {\rm dist}(s_i^+,p_ip_{i+1})< 0.2$, and 
$\angle s_i^+ o p_i < \frac{\pi}{15}$.
\end{lemma}

\begin{proof}
By symmetry, it is enough to prove the last three statements. 
Assume w.l.o.g.\ that $i=1$ and $q\in T_1^+$. 
%
Recall that $T_{\rm in}=\Delta (p_1p_2p_3)$ is an equilateral triangle with side length $\sqrt{3}$ centered at the origin, and its inscribed circle has unit diameter. Furthermore, we have $|op_i|=1$ for all $i\in \{1,2,3\}$.  
The law of sines in the triangle $\Delta(p_1 p_3 s_1^+)$ (see \Cref{fig:sine}) yields
\[  |p_1s_1^+| 
    = |p_1 p_3|\cdot \frac{\sin(\pi/30)}{\sin(\pi-\pi/30-2\pi/3)}
    =\frac{\sqrt{3}\cdot \sin(\pi/30)}{\sin(3\pi/10)}
    \approx 0.2237. 
\]
Since $\angle p_2 p_1 s_1^+=\pi/3$, we 
obtain ${\rm dist}(s_1^+,p_1p_2) =|p_1 s_1^+| \sin(\pi/3) \approx 0.1938$. 
    
The law of sines in the triangle $\Delta(op_1s_1^+)$ gives
$\sin (\angle p_1 o s_1^+)
    = \sin (\angle o s_1^+ p_1) |p_1 s_1^+|/|op_1|
    = \cos (\angle p_1 o s_1^+) |p_1 s_1^+|$.
Thus, $\tan (\angle p_1 o s_1^+) = |p_1s_1^+|<0.23$, which implies $\angle p_1 o s_1^+ < \pi/12$. 
\end{proof}

 \begin{figure}[htbp]
         \centering
\includegraphics[page=10,width=50 mm]{Convex_case.pdf}
         \caption{Triangle $\triangle p_1p_3s_1^+$.} 
            \label{fig:sine}
       \end{figure}

\smallskip\noindent
\textbf{Strictly convex body $C$.}
We can now prove the main result of this section in the special case that $C$ is a strictly convex body\footnote{A body $C$ is \emph{strictly convex} if every tangent line of $C$ has a unique intersection point with $C$.}. We generalize \Cref{thm:strict} to arbitrary convex bodies in \Cref{thm:body} at the end of \Cref{ssec_key}.
 In particular, if $C$ is strictly convex, then the parallel tangent lines $L_i$ and $L_i'$ uniquely determine $p_i$ and $q_i$ for $i\in \{1,2,3\}$. Furthermore, the points $p_i$ and $q_i$ continuously depend on (the direction of) the line $L_i$.

\begin{theorem}\label{thm:strict}
    Let $C$ be a strictly convex body in the plane such that its maximum-area inscribed triangle is an equilateral triangle $T_{\rm in}$, and the inscribed circle of $T_{\rm in}$ is a circle of unit diameter. Then, there exists a good pair of lines.
\end{theorem}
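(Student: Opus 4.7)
The plan is to classify each of the three canonical segments $p_1q_1,p_2q_2,p_3q_3$ (associated to the tangent directions $\theta_1,\theta_2,\theta_3$ parallel to the sides of $T_{\rm in}$) as \emph{central}, \emph{left}, or \emph{right} per the definitions given before Lemma~\ref{lem:center}. If at least two of them are central, Lemma~\ref{lem:center} immediately yields a good pair, so I may assume that at most one of the three is central, which means at least two are left or right. For this remaining case, strict convexity would enable a continuous deformation argument that goes beyond the three canonical directions.

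Specifically, for each direction $\theta\in[0,\pi)$, strict convexity yields unique tangent points $p(\theta),q(\theta)\in\partial C$ on the two lines tangent to $C$ parallel to $\theta$, and these vary continuously with $\theta$. I would introduce a continuous signed function $F(\theta)$ measuring the directed distance from the origin $o$ (the center of $T_{\rm in}$) to the chord $p(\theta)q(\theta)$; then the sign of $F(\theta_i)$ corresponds to whether $p_iq_i$ is left or right, with $|F(\theta_i)|$ small exactly when $p_iq_i$ is central. Since a consistent swap of the two tangent points reverses the sign of $F$, the average of $F$ over the period $[0,\pi)$ vanishes, which forces at least two sign changes of $F$. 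Two applications of the intermediate value theorem then produce distinct directions $\theta_a^{\ast},\theta_b^{\ast}\in[0,\pi)$ at which the corresponding chords pass through $o$.

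Because the three canonical directions are equally spaced by $\pi/3$ on $[0,\pi)$ and at most one of them has $F(\theta_i)\approx 0$, the sign pattern of $F$ at the $\theta_i$ forces $\theta_a^{\ast}$ and $\theta_b^{\ast}$ to lie in two different open sub-arcs bounded by consecutive canonical directions. The angular gap between these sub-arcs, together with the quantitative estimates in Lemma~\ref{lem:angles} controlling how far into a sub-arc the sign change of $F$ can occur, would show that $|\theta_a^{\ast}-\theta_b^{\ast}|\geq \pi/15$, verifying condition~1 of Definition~\ref{def:goodpair}. For condition~2, both chords pass through $o$, so I may take $x=o$; by Lemma~\ref{lem:containment}, $B(o,\tfrac{1}{2})\subseteq T_{\rm in}\subseteq C$, and in particular $B(x,\tfrac{1}{50})\subseteq C$.

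The main obstacle will be making the sign-change count rigorous: I must pin down a consistent labeling of the two tangent points in direction $\theta$ so that $\int_0^\pi F\,d\theta=0$ is literally true, and control the location of the intermediate zeros of $F$ with enough precision to conclude the $\pi/15$ angular separation. An auxiliary subtlety is the boundary case in which exactly one of the three canonical segments is already central; in that case I would take the canonical chord itself as one member of the good pair and use a single IVT application on an appropriate sub-arc to produce the second chord, relying on the slack in the constants of Lemma~\ref{lem:angles} to verify both conditions of Definition~\ref{def:goodpair}.
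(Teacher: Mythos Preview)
Your high-level strategy---classify the canonical chords, dispatch the two-central case via Lemma~\ref{lem:center}, and otherwise rotate tangent-line pairs continuously to find chords through $o$ via the intermediate value theorem---is exactly what the paper does. The gap is in your global mechanism for producing \emph{two} such chords. The antisymmetry ``swapping tangent points reverses the sign'' yields $F(\theta+\pi)=-F(\theta)$ and hence $\int_0^{2\pi}F=0$, but this says nothing about $\int_0^{\pi}F$; your claim that the half-period integral vanishes is unjustified. Concretely, when all three canonical chords are \emph{left} (the paper's Case~3), $F$ has the same sign at $\theta_1,\theta_2,\theta_3$, so the sign pattern at the six equally spaced directions on $[0,2\pi)$ is $+,+,+,-,-,-$; this forces two sign changes, but they are an antipodal pair and represent a \emph{single} undirected direction. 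Your argument therefore does not establish two zeros of $F$ on $[0,\pi)$, and the ``at most one central'' fallback you sketch does not verify condition~2 when $x\neq o$.

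The paper avoids this by a finer case analysis that does not insist both members of the good pair pass through $o$. In Case~1 and Subcase~2(a), one member is a canonical chord $p_iq_i$ and the intersection point $x$ lies in the interior of $T_{\rm in}$ but away from $o$; a separate computation (using that $q_i$ is confined to one of the small triangles $T_j^{\pm}$ via Lemma~\ref{lem:noname}) is needed to certify $B(x,\tfrac{1}{50})\subset C$. In the remaining cases the paper does find two chords through $o$, but only after importing an extra geometric hypothesis (for instance, that $L_2'$ meets the triangle $T_3^+$) which pins the moving tangent points to short arcs of $\partial C$; it is this localization, not a global averaging identity, that makes the two IVT applications land in non-antipodal arcs and delivers the $\pi/15$ separation. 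Your proposal would need to supply an analogous localization to close the argument.
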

\begin{proof}
If two or more lines in $\{p_1 q_1, p_2 q_2, p_3 q_3\}$ are central, then the proof is complete by \Cref{lem:center}. Therefore, we may assume that at most one of the lines in $\{p_1 q_1, p_2 q_2, p_3 q_3\}$ is central. By permuting the labels in $\{1,2,3\}$ and applying a reflection if necessary, we may assume that $p_1q_1$ is left. As noted above, we also assume that $p_1$ is on the $y$-axis below the origin; hence the directions of the lines $op_1$, $op_2$, and $op_3$ are $\pi/2$, $\pi/6$, and $5\pi/6$, resp.; see \Cref{fig:type2}.
We distinguish between two cases: 

 \begin{figure}[htbp]
 \hfill
     \begin{subfigure}[b]{0.495\textwidth}
          \centering
\includegraphics[page=7,width=68 mm]{Convex_case.pdf}
    \subcaption{}
    \label{fig_case_1a}
     \end{subfigure}
      \centering
     \begin{subfigure}[b]{0.495\textwidth}
         \centering
\includegraphics[page=20,width=68 mm]{Convex_case.pdf}
    \subcaption{}
    \label{fig_case_1b}
     \end{subfigure}
     \hfill
    \centering
     \caption{Illustration of Case~1: (a) in general; (b) in an extremal case $q_1=p_3$ and $q_2=s_3^+$.} 
    \label{fig_case_1}   
       \end{figure}

\smallskip\noindent
\textbf{Case~1: Line $L_2'$ does not intersect the triangle $T_3^+$}; see \Cref{fig_case_1}.
In this case, we show that $p_1q_1$ and $p_2q_2$ form a good pair. 
Recall that $L_2'$ is parallel to $p_1p_3$, and the triangles $T_3^+$ and $T_1^-$ are symmetric about the orthogonal bisector of $p_1p_3$.
Therefore, $L_2'$ intersects neither $T_3^+$ nor $T_1^-$. This implies, by \Cref{lem:noname}, that $p_2q_2$ is central. 

Since $p_1q_1$ is left, it lies in the cone with apex $p_1$ and aperture $\frac{\pi}{30}$ bounded by the ray $\overrightarrow{p_1p_3}$; refer to the red cone in \Cref{fig_case_1a}. Since $p_2q_2$ is central, it lies in the cone with apex $p_2$ and aperture $\frac{4\pi}{15}$ with symmetry axis $p_2o$; refer to the green cone in \Cref{fig_case_1a}. The point $x=p_1q_1\cap p_2q_2$ is in the intersection of the two cones, which is a convex quadrilateral $Q$ contained in $T_{\rm in}$; see \Cref{fig_case_1a}.

Since $x\in Q\subset \conv\{p_1,p_2,p_3,q_2\}\subseteq C$, it is enough to give a lower bound for the distance between $Q$ and the boundary of the convex quadrilateral $\conv\{p_1,p_2,p_3,q_2\}$. 
The distance between $Q$ and the side $p_1p_2$ of $T_{\rm in}$ is minimized for vertex $b$ of $Q$, which are defined by $p_1p_2b=\angle p_3p_1b$. The law of sines for the triangle $\Delta p_1p_2 b$ yields $|p_1b|=|p_1p_2| \cdot \sin\angle p_1p_2b/\sin \angle p_2 b p_1=\sqrt{3} \sin\frac{\pi}{30}/\sin \frac{2\pi}{3}>0.2$. The distance between $b$ and $p_1p_2$ is $|b p_1|\sin \angle p_2p_1b=\sqrt{3} (\sin\frac{\pi}{30}/\sin \frac{2\pi}{3})\cdot \sin\frac{3\pi}{10} > 0.16> \frac{1}{50}$.

Similarly, the distance between $Q$ and the side $p_2p_3$ of $T_{\rm in}$ is minimized for the vertex $c$ of $Q$, specified by $\angle p_3p_1c=\angle p_3p_2c=\pi/30$. 
The law of sines for the triangle $\Delta p_1p_2 c$ yields $|p_2c|=|p_1p_2| \cdot \sin\angle p_2p_1c/\sin \angle p_2 c p_1=\sqrt{3} \sin\frac{3\pi}{10}/\sin \frac{2\pi}{5}$. Now the distance between $c$ and $p_2p_3$ is $\sqrt{3}(\sin\frac{3\pi}{10}/\sin \frac{2\pi}{5})\cdot \sin \frac{\pi}{30} > 0.15> \frac{1}{50}$.

It remains to show that the point $x=p_1q_1\cap p_2q_2$ is at a distance at least $\frac{1}{50}$ from the boundary of $C$. 
Consider ${\rm dist}(x,p_1 q_2)$; the case of ${\rm dist}(x, q_2 p_3)$ is analogous. This distance is minimized when $x\in Q\cap p_1p_3$ (that is, when $q_1=p_3$), and so ${\rm dist}(x,p_1 q_2)\geq {\rm dist}(q_2,p_1 p_3)\sin \angle p_3 p_1 q_2$. Since $L_2'$ intersects neither $T_3^+$ nor $T_1^-$, then ${\rm dist}(q_2,p_1p_3) \geq {\rm dist}(s_1^-, p_1p_3)> 0.19$ by \Cref{lem:angles}. Since $q_2\in T_{\rm out}$, then the angle $\angle p_3 p_1 q_2$ is minimized for $q_2=s_3^+$; see \Cref{fig_case_1b}. Therefore, we have $\angle p_3 p_1 q_2 \geq \angle p_3 p_1 s_3^+$. By construction, we have $\angle p_3 p_2 s_3^+ = \pi/30$. Since $s_3^+$ is in the exterior of the minimum enclosing circle of $\Delta(p_1,p_2,p_3)$ and $p_1$ is on this circle, then $\angle p_3 p_1 s_3^+ > \angle p_3 p_1 s_3^+=\pi/30$. Consequently, we obtain 
\[ {\rm dist}(x,p_1q_2) 
    \geq {\rm dist}(q_2,p_1 p_3)\sin \angle p_3 p_1 q_2
    > 0.193\cdot \sin\frac{\pi}{30}
    > 0.0201> \frac{1}{50}.
\]
Overall, $x$ is at distance more than $\frac{1}{50}$ from the boundary of ${\rm conv}\{p_1,p_2,p_3, q_2\}$. We conclude that $B(x,\frac{1}{50})\subset{\rm conv}\{p_1,p_2,p_3, q_2\} \subset C$, as required.

\begin{figure}[htbp]
 \hfill     
 \begin{subfigure}[b]{0.495\textwidth}
\centering
\includegraphics[page=8,scale=0.55]{Convex_case.pdf}
    \subcaption{}
    \label{fig_p_1q_1_without_rotation}
     \end{subfigure}
      \centering
     \begin{subfigure}[b]{0.495\textwidth}
         \centering
\includegraphics[page=9,scale=0.55]{Convex_case.pdf}
    \subcaption{}
    \label{fig_p_y_q_y}
     \end{subfigure}
     \hfill
       \caption{Case 2: illustration for a counterclockwise rotation. (a) Before the rotation, we have $p_yq_y=p_1q_1$; and (b) after the rotation, lines $L_y$ and $L_y'$ are vertical.} 
       \label{fig_case_2_ini}
       \end{figure}

\smallskip\noindent
\textbf{Case~2: Line $L_2'$ intersects triangle $T_3^+$.} 
We rotate $p_1q_1$ counterclockwise and apply the intermediate value theorem. 
Specifically, we continuously rotate a pair of parallel lines $(L_y,L_y')$ tangent to $C$ with tangency points $p_y=C\cap L_y$ and $q_y=C\cap L_y'$. Start with $(L_y,L_y')=(L_1,L_1')$ and rotate counterclockwise until $L_y$ and $L_y'$ become vertical. 
Initially, we have $(p_y, q_y)=(p_1,q_1)$, which means that $p_yq_y=p_1q_1$ is a left segment and the triangle $\Delta(p_y q_y o)=\Delta(p_1 q_1 o)$ is oriented clockwise; see \Cref{fig_case_2_ini} for an illustration. 

We claim that $\Delta(p_y q_y o)$ is oriented counterclockwise at the end of the rotation. At that time, $p_y$ and $q_y$ are the rightmost and leftmost points of $C$, respectively. 
By assumption, $L_2'$ intersects the triangle $T_3^+$.
Consequently, the leftmost point of $C$ lies in the triangle $T_{\rm left}$
bounded by $L_3$, $s_1^- s_3^+$, and the vertical line passing through $p_3$; see \Cref{fig_T_LR}. 
\Cref{lem:noname} implies that $L_3'$ intersects the triangle $T_2^-$. 
This means that the rightmost point of $C$ lies in the triangle $T_{\rm right}$
bounded by $L_2$, $s_1^+ s_2^-$, and the vertical line passing through $p_2$; see \Cref{fig_T_LR}. 
By symmetry, $s_1^+ s_2^-$ is parallel to $p_1p_2$, and $s_1^- s_3^+$ is parallel to $p_1p_3$. Using \Cref{lem:angles}, the vertical sides of $T_{\rm left}$ and $T_{\rm right}$ each have length at most $2\, |p_3s_3^+|\sin (\pi/3) < \sqrt{3}\cdot 0.23 <\frac12$. As both $p_2$ and $p_3$ are on the horizontal line $y=\frac12$, this implies that both $T_{\rm left}$ and $T_{\rm right}$ lie in the open halfplane above the $x$-axis. The segment between the leftmost and rightmost points of $C$ passes above the origin, and so $\Delta(p_y q_y o)$ is oriented counterclockwise at the end of the continuous motion, as claimed. 

        \begin{figure}[ht]
         \centering
           \includegraphics[page=11,scale=0.7]{Convex_case.pdf}
           \caption{Triangles $T_{\rm left}$ (violet) and $T_{\rm right}$ (gold); the leftmost and rightmost points of $C$ are marked with hollow circles.}
      \label{fig_T_LR}
       \end{figure}
       
By the intermediate value theorem, there exists a position (during the continuous rotation) in which the vertices of $\Delta(p_y q_y  o)$ are collinear. Let $p_4 q_4$ be such an intermediate position. Then, $p_4 q_4$ passes through the origin; see \Cref{fig_p_y_q_y}. 

We can now estimate the direction of the line $p_4q_4$. Recall that the direction of $op_3$ is $\frac{5\pi}{6}$, and that $q_y$ continuously moves counterclockwise from $q_1$ to the leftmost point of $C$ along $\partial C$. This arc of $\partial C$ lies in $T_3^- \cup T_{\rm left}$. Therefore, $q_4\in T_3^-\cup T_{\rm left}$. If $q_4\in T_3^-$, then by 
\Cref{lem:angles} the direction of $oq_4$ is in the interval $[\frac{5\pi}{6}-\frac{\pi}{12},\frac{5\pi}{6}] = [\frac{3\pi}{4},\frac{5\pi}{6}]$. 
If $q_4\in T_{\rm left}$, the direction of $o q_4$ is in the interval $[\frac{5\pi}{6}, \pi)$ since $T_{\rm left}$ is above the $x$-axis. In both cases, the direction of $p_4q_4$ is in the interval $[\frac{3\pi}{4},\frac{5\pi}{6}]\cup [\frac{5\pi}{6}, \pi) = [\frac{3\pi}{4}, \pi)$.

We further distinguish between four subcases.

\begin{figure}[htbp]
\centering
     \begin{subfigure}[b]{0.495\textwidth}
         \centering
    \includegraphics[page=12,scale=0.55]{Convex_case.pdf}
     \subcaption{}
    \label{fig_case2a}
     \end{subfigure}
     \hfill
      \centering
     \begin{subfigure}[b]{0.495\textwidth}
         \centering
    \includegraphics[page=13,scale=0.55]{Convex_case.pdf}
     \subcaption{}
    \label{fig_2a_final}
     \end{subfigure}
       \caption{Illustration of Subcase 2(a): (a) The initial position of $p_yq_y=q_3p_3$; and (b)  the position of $p_yq_y$ after rotation such that $p_yq_y=p_4q_4$ passes through $o$.} 
       \label{fig_case2_a}
       \end{figure}


\smallskip\noindent
\textbf{Subcase~2(a): $p_2q_2$ is central and $p_3 q_3$ is left.} Recall that we rotated the pair of parallel lines $(L_y,L_y')$ from $(L_y,L_y')=(L_1,L_1')$ until they become vertical. Consider the initial portion of the continuous motion until $(L_y,L_y')=(L_3',L_3)$ and $p_y q_y= q_3 p_3$.

When $p_yq_y=q_3p_3$, the triangle $\Delta(p_yq_y o)$ is already oriented counterclockwise (assuming that $p_3q_3$ is left).
By the intermediate value theorem, $p_y q_y$ passes through the origin before that time. Therefore, we may assume that $o\in p_4q_4$ and $q_4 \in T_3^-$, and so the direction of $p_4q_4$ is in the interval $[\frac{23\pi}{30},\frac{5\pi}{6}]$. 

In this case, we show that $p_2q_2$ and $p_4 q_4$ form a good pair. Since $p_2q_2$ is central, its direction is in the interval $[\frac{\pi}{3}-\frac{2\pi}{15},\frac{\pi}{3}+\frac{2\pi}{15}]=[\frac{3\pi}{5},\frac{7\pi}{15}]$. Comparing the intervals of possible directions of $p_2q_2$ and $p_4q_4$, we see that $\angle(p_2q_2,p_4q_4)\geq \frac{23\pi}{30}-\frac{7\pi}{15}=\frac{3\pi}{10}>\frac{\pi}{15}$. 

Consider the intersection points $x=p_2q_2\cap p_4q_4$. On the one hand, segment $p_2q_2$ lies in the cone with apex $p_2$ and aperture $\frac{4\pi}{15}$ with symmetry axis $p_2o$. On the other hand, $p_4q_4$ lies in a double wedge with apex $o$ bounded by the lines $op_3$ and $os_3^-$, with aperture less than $\pi/12$ by \Cref{lem:angles}.
Point $x$ lies in the intersection of these regions, which is the union of two triangles incident to the origin; see \Cref{fig_2a_final}. 
The distance between these triangles and the boundary of $T_{\rm in}$ is minimized between the vertex $v$ specified by $\angle p_2p_3v=\pi/6$ and $\angle p_1p_2 v=\pi/30$, and the side $p_1p_2$ of $T_{\rm in}$. The law of sines for the triangle $\Delta p_2p_3v$ yields $|p_2v|=|p_2p_3| \cdot \sin(\angle p_2p_3v)/\sin (\angle p_3 vp_2)=\sqrt{3} \sin\frac{\pi}{6}/\sin \frac{8\pi}{15}$. Now the distance between $v$ and $p_1p_2$ is $\sqrt{3}(\sin\frac{\pi}{6}/\sin \frac{8\pi}{15})\cdot \tan \frac{\pi}{30} > 0.09> \frac{1}{50}$. We conclude that $B(x,\frac{1}{50})\subset T_{\rm in}\subset C$, as required.

\begin{figure}[htbp]
\centering
     \begin{subfigure}[b]{0.495\textwidth}
         \centering
    \includegraphics[page=14,scale=0.55]{Convex_case.pdf}
   \subcaption{}
    \label{fig_case2b}
     \end{subfigure}
     \hfill
      \centering
     \begin{subfigure}[b]{0.495\textwidth}
         \centering
    \includegraphics[page=15,scale=0.55]{Convex_case.pdf}
 \subcaption{}
    \label{fig_2b_final}
     \end{subfigure}
       \caption{Illustration of Case 2(b): (a) initially, we have $p_zq_z=p_3q_3$; (b) after a suitable rotation, $p_zq_z=p_5q_5$ passes through~$o$.} 
       \label{fig_case2_b}
       \end{figure}
\smallskip\noindent
\textbf{Subcase~2(b): $p_2q_2$ is central and $p_3 q_3$ is right.} 
In this case, we rotate the line $p_3 q_3$ as follows. Rotate a pair of parallel lines $(L_z,L_z')$ tangent to $C$ with tangency points $p_z=C\cap L_z$ and $q_z=C\cap L_z'$. Start with $(L_z, L_z')=(L_3,L_3')$ and rotate clockwise until $L_z$ and $L_z'$ become orthogonal to $p_1p_2$. Analogously to rotating $p_yq_y$ in Case~2(a), the intermediate value theorem produces a line $p_5q_5$ that passes through the origin, such that its direction is in the interval 
$[\frac{\pi}{2}-\frac{\pi}{6}, \frac{\pi}{2}+\frac{\pi}{12}]  =  [\frac{\pi}{3}, \frac{7\pi}{12}]$; see \Cref{fig_2b_final}.

Now it is easy to show that $p_4q_4$ and $p_5q_5$ form a good pair (recall that $p_4q_4$ was defined at the beginning of the discussion on Case~2): both lines pass through the origin, so $x=p_4p_4\cap p_5q_5=o$. This immediately implies that $B(x,\frac{1}{50}) \subset B\left(o,\frac12\right)\subset T_{\rm in}\subset C$. Comparing the intervals of possible directions for $p_4q_4$ and $p_5 q_5$, we see that $\angle q_4 o q_5 \geq \frac{3\pi}{4}-\frac{7\pi}{12} = \frac{\pi}{6}$.

\begin{figure}[ht]
\centering
     \begin{subfigure}[b]{0.495\textwidth}
         \centering
    \includegraphics[page=16,scale=0.55]{Convex_case.pdf}
    \subcaption{}
    \label{fig_case2c}
     \end{subfigure}
     \hfill
      \centering
     \begin{subfigure}[b]{0.495\textwidth}
         \centering
    \includegraphics[page=17,scale=0.55]{Convex_case.pdf}
 \subcaption{}
    \label{fig_2c_final}
     \end{subfigure}
       \caption{Illustration of Case 2(c): (a) the initial position of $p_zq_z=q_1p_1$; (b) the position of $p_zq_z$ after a counterclockwise rotation, where it passes through~$o$.} 
       \label{fig_case2_c}
       \end{figure}
 
\noindent
\textbf{Subcase~2(c): $p_2q_2$ is right.} In this case, we rotate $p_2q_2$ counterclockwise as follows; see \Cref{fig_case2_c}. Rotate a pair of parallel lines $(L_z,L_z')$ tangent to $C$ with $p_z=C\cap L_z$ and $q_z=C\cap L_z'$. We start with $(L_z, L_z')=(L_2,L_2')$ and rotate counterclockwise until $(L_z,L_z')=(L_1,L_1')$. The intermediate value theorem yields a line $p_6q_6$ passing through the origin. 
Since $p_1q_1$ is left and $p_2q_2$ is right, then $q_6\in T_3^-\cup T_3^+$. 
By \Cref{lem:angles}, the direction of $p_6q_6$ is in $[\frac{5\pi}{6}-\frac{\pi}{12}, \frac{5\pi}{6}+\frac{\pi}{12}]  =  [\frac{3\pi}{4}, \frac{11\pi}{12}]$.

Now it is clear that $p_5q_5$ and $p_6q_6$ form a good pair:
Both lines pass through the origin, so $x=p_5q_5\cap p_6q_6=o$. This immediately yields $B(x,\frac{1}{50}) \subset B\left(o,\frac12\right)\subset T_{\rm in}\subset C$. Comparing intervals of possible directions for $p_5q_5$ and $p_6 q_6$, we see that $\angle q_5 o q_6 \geq \frac{3\pi}{4}-\frac{7\pi}{12}=\frac{\pi}{6}$.

\begin{figure}[htbp]
\centering
     \begin{subfigure}[b]{0.495\textwidth}
         \centering
    \includegraphics[page=18,scale=0.55]{Convex_case.pdf}
     \subcaption{}
    \label{fig_case3}
     \end{subfigure}
     \hfill
      \centering
     \begin{subfigure}[b]{0.495\textwidth}
         \centering
    \includegraphics[page=19,scale=0.55]{Convex_case.pdf}
    \subcaption{}
    \label{fig_3_final}
     \end{subfigure}
       \caption{Illustration of Case 2(d): (a) the initial position of $p_yq_y=p_1q_1$ and $p_zq_z=p_2q_2$; (b) the position of $p_yq_y$ and $p_zq_z$ after rotation such that both $p_yq_y=p_7q_7$ and $p_zq_z=p_8q_8$ passes through~$o$.} 
       \label{fig_case_3_total}
       \end{figure}

\smallskip\noindent
\textbf{Subcase~2(d): $p_2q_2$ is left.} 
Recall that in Case~2, we assume that the line $L_2'$ intersects the triangle $T_3^+$.
If $L_3'$ does not intersect the triangle $T_1^+$ or if $p_3q_3$ is center or right, then we can complete the proof similarly to Case~1 or Cases~2(a--c), arguing for line $p_2q_2$ in place of $p_1q_1$. Therefore, we may assume that all three of $p_1q_1$, $p_2q_2$, and $p_3q_3$ are left. In this case, we rotate any two of these lines clockwise as follows; see \Cref{fig_case_3_total}.

Rotate a pair of parallel lines $(L_y,L_y')$ tangent to $C$ with tangency points $p_y=C\cap L_y$ and $q_y=C\cap L_y'$. We start with $(L_y, L_y')=(L_1,{L_1}')$ and rotate the lines continuously clockwise until $(L_y,L_y)=({L_2}',L_2)$. By the intermediate value theorem, there is a position where the segment $p_y q_y$ passes through the origin. Let $p_7q_7$ be a position where $o\in p_7 q_7$. Note that the clockwise arc of $\partial C$ from $p_1$ to $q_2$ lies in the triangle $T_1^+$, hence $p_7\in T_1^+$. By \Cref{lem:angles}, the direction of $p_7q_7$ is in the interval $[\frac{\pi}{2}-\frac{\pi}{12},\frac{\pi}{2}] = [\frac{5\pi}{12},\frac{\pi}{2}]$.


Similarly, rotate a pair of parallel lines $(L_z,L_z')$ tangent to $C$ with tangency points $p_z=C\cap L_z$ and $q_z=C\cap L_z'$. We start with $(L_z, L_z')=(L_2,{L_2}')$ and rotate the lines continuously clockwise until $(L_z,L_z')=({L_3}',L_3)$. By the intermediate value theorem, there is a position where the segment $p_z q_z$ passes through the origin. Let $p_8q_8$ be a position where $o\in p_8 q_8$. Note that the clockwise arc of $\partial C$ from $p_1$ to $q_3$ lies in the triangle $T_2^+$, hence $p_8\in T_2^+$. By \Cref{lem:angles}, the direction of $p_7q_7$ is in the interval $[\frac{\pi}{6}-\frac{\pi}{12},\frac{\pi}{6}] = [\frac{\pi}{12},\frac{\pi}{6}]$.

We show that $p_7q_7$ and $p_8 q_8$ form a good pair of lines. Comparing the intervals of directions, we see that $\angle (p_7q_7, p_8q_8)\geq \frac{5\pi}{12}-\frac{\pi}{6}=\frac{\pi}{4}>\frac{\pi}{15}$. By construction, the two lines intersect at the origin: $x=p_7q_7\cap p_8q_8=o$. It is now clear that $B(x,\frac{1}{50})\subset B\left(o,\frac12\right)\subset T_{\rm in}\subset C$.
\end{proof}

It remains to address the case where $C$ is not necessarily strictly convex.

\begin{theorem}\label{thm:body}
    Let $C$ be a convex body in the plane such that its maximum-area inscribed triangle is an equilateral triangle $T_{\rm in}$, and the inscribed circle of $T_{\rm in}$ is a circle of unit radius. Then there exists a good pair of lines.
\end{theorem}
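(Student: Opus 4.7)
The plan is to reduce Theorem~\ref{thm:body} to the strictly convex case (Theorem~\ref{thm:strict}) by an approximation argument. I would construct a sequence $\{C_n\}_{n\in\mathbb{N}}$ of smooth strictly convex bodies with $C_n \to C$ in Hausdorff distance and with $T_{\rm in}$ still a maximum-area inscribed triangle of $C_n$; equivalently, $T_{\rm in} \subseteq C_n \subseteq T_{\rm out}$, which by Lemma~\ref{lem:containment} makes $T_{\rm in}$ maximum-area inscribed in $C_n$. Given such $C_n$, Theorem~\ref{thm:strict} applied to $C_n$ yields a good pair $\{\ell_1^n, \ell_2^n\}$ with intersection $x_n$, angle at least $\pi/15$, and $B(x_n, \tfrac{1}{50}) \subseteq C_n$, and I would then pass to a convergent subsequence.

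Concretely, by compactness of line directions in $S^1$ and boundedness of $x_n$ inside $T_{\rm in}$, I would pass to subsequences so that $\ell_i^n \to \ell_i$, $x_n \to x$, and the tangent points $p_i^n, q_i^n \in \partial C_n$ converge to points $p_i, q_i \in \partial C$ on the tangent lines of $C$ parallel to $\ell_i$. The angle condition $\angle(\ell_1, \ell_2) \geq \pi/15$ is closed and passes to the limit. For the disk inclusion: every point in the open disk of diameter $\tfrac{1}{50}$ around $x$ lies in $B(x_n, \tfrac{1}{50}) \subseteq C_n$ for all sufficiently large $n$, hence lies in $C$ by Hausdorff convergence and closedness of $C$; taking closures gives $B(x, \tfrac{1}{50}) \subseteq C$, so $\{\ell_1, \ell_2\}$ is a good pair for $C$.

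The main obstacle is producing $C_n$ that is simultaneously strictly convex, contains $T_{\rm in}$, and is contained in $T_{\rm out}$. A standard support-function smoothing, e.g.\ $h_n = h_C \ast \phi_n + \tfrac{1}{n}\, h_E$ with $\phi_n$ a smooth positive bump on $S^1$ concentrating at $0$ and $E$ a strictly convex reference body, yields smooth strictly convex $C_n$ with $C_n \to C$, but generally violates the inclusion $C_n \subseteq T_{\rm out}$ required to keep $T_{\rm in}$ maximum-area inscribed. Conversely, truncating to $T_{\rm out}$ or taking convex hulls with $T_{\rm in}$ restores the containments but destroys strict convexity. The delicate technical step is to interpolate between these, performing the smoothing while preserving the tangency conditions at $p_1, p_2, p_3$ so that the tangent line of $C_n$ at each $p_i$ remains $L_i$, and hence $C_n \subseteq T_{\rm out}$.

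An alternative that bypasses this construction is to adapt the proof of Theorem~\ref{thm:strict} directly to general $C$. Its rotating-lines/intermediate-value arguments used unique tangent points via strict convexity; in the non-strict case the tangent-point selection is only piecewise continuous, with jumps across directions tangent to a flat face of $\partial C$. Each such jump can be absorbed by broadening the one-dimensional IVT (on direction) to a two-dimensional IVT on the parameter pair (direction, position along the tangent face), which still forces the existence of a configuration with the desired line through the origin. This route avoids constructing $\{C_n\}$, but it requires rechecking all of the central/left/right case distinctions already present in the strict-case proof.
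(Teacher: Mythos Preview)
Your approach---approximate $C$ by strictly convex bodies, apply Theorem~\ref{thm:strict} to each, and extract a limit by compactness---is exactly the paper's proof, which consists of two sentences and does not address the obstacle you raise in your third paragraph any more than you do.

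One correction: your ``equivalently'' is false. The containment $T_{\rm in}\subseteq C_n\subseteq T_{\rm out}$ does \emph{not} imply that $T_{\rm in}$ is a maximum-area inscribed triangle of $C_n$ (consider $C_n$ close to $T_{\rm out}$), so Lemma~\ref{lem:containment} does not give you what you claim. Fortunately this is harmless: inspecting the proof of Theorem~\ref{thm:strict}, the hypothesis that $T_{\rm in}$ is maximum-area inscribed is used only via Lemma~\ref{lem:containment}, i.e., only through the facts that $p_1,p_2,p_3\in\partial C$, that each $L_i$ supports $C$ at $p_i$, and that $C\subseteq T_{\rm out}$. So to run the approximation you only need strictly convex $C_n\to C$ with $p_1,p_2,p_3\in\partial C_n$ and $C_n\subseteq T_{\rm out}$ (whence $L_i$ automatically supports $C_n$ at $p_i$). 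This is weaker than what you were aiming for, though still not completely trivial to arrange, and the paper does not spell it out either. Your alternative route---redoing the rotating-lines argument with a two-parameter IVT across flat faces---would also work and is arguably cleaner than building the approximating sequence.
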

\begin{proof}
    For every $\varepsilon>0$, we can approximate $C$ with a strictly convex body $C_\varepsilon$ such that the Hausdorff distance between the bodies is bounded by $d_H(C,C_{\varepsilon})<\varepsilon$~\cite{Gruber1983}. For every $\eps=\frac{1}{n}$, $n\in \mathbb{N}$, there is a good pair $(L_{1,n}, L_{2,n})$. As $n$ tends to infinity, $C_{1/n}$ converges to $C$ (in Hausdorff metric). By compactness, the sequence $\{(L_{1,n}, L_{2,n})\}_{n\in \mathbb{N}}$ has a convergent subsequence; assume w.l.o.g.\ that it converges to the pair of lines $(L_1,L_2)$. Since $\angle(L_{1,n},L_{2,n})\geq \pi/15$ for all $n\in \mathbb{N}$, then $\angle (L_1,L_2)\geq \pi/15$; and since $B(L_{1,n}\cap L_{2,n},\frac{1}{50})\subset C_{1/n}$ for all 
    $n\in \mathbb{N}$, then  $B(L_1\cap L_2,\frac{1}{50})\subset C$. We conclude that $(L_1,L_2)$ is a good pair of lines for $C$, as required. 
\end{proof}

\subsection{Tilings and an Online Hitting Set Algorithm}
\label{ssec_tiling}

In this section, we generalize \Cref{thm:disks} from disks to positive homothets of an arbitrary convex body in the plane. Recall that for a convex body $\sigma$, we obtain a positive homothetic copy $a\sigma+b$ by dilation (scaling) with factor $a>0$ and translation by vector $b\in \mathbb{R}^2$. 

Suppose that we are given a set $P$ of $n$ points in the plane, a convex body $\sigma$, and a parameter $M\geq 1$. We present an $O(\log n\log M)$-competitive algorithm for the \textsl{Online Hitting Set} problem for a sequence $\mathcal{C}=(\sigma_1,\ldots , \sigma_m)$ of positive homothets of $\sigma$ with scaling factors in $[1,M]$. 

\smallskip\noindent
\textbf{Distinguishing between layers of convex bodies according to the scaling factor $a\in[1,M]$.}
Similar to disks of radii in $[1,M]$, we partition the homothetic convex bodies with scaling factor in $[1,M]$ into $\lfloor \log M\rfloor+1$ layers. For every $j\in\{0,1,\ldots ,\lfloor \log M\rfloor\}$, let \emph{layer} $L_j$ be the set of homothets $\sigma_i\in \mathcal{C}$ such that $\sigma_i=a_i\sigma +b_i$, where $a_i\in [2^j,2^{j+1})$. 

\smallskip\noindent
\textbf{Tiling of the plane for each layer index $\mathbf{j}$.}
Recall that for the case of disks we have tiled the plane with congruent square tiles (\Cref{sec:disks}). We replace the square tiling with a rhombic tiling as follows. Given a convex body $\sigma$, we compute a maximum-area inscribed triangle $T_{\rm in}$ of $\sigma$. Applying an affine transformation to $P$ and $\sigma$, we may assume that $T_{\rm in}$ is a regular triangle of side length $\sqrt{3}$ centered at the origin $o$ (with inscribed disk $B(o,\frac12)$). By \Cref{thm:body}, $\sigma$ admits a good pair of lines, $L_1$ and $L_2$, and we let the \emph{reference point} of $\sigma$ be $r(\sigma)=L_1\cap L_2$. Recall \Cref{def:goodpair}: lines $L_1$ and $L_2$ each partition the boundary $\partial \sigma$ into monotone arcs, we have $\alpha:=\angle(L_1,L_2)\geq \pi/15$, and $\sigma$ contains the disk $B(r(\sigma),\varrho)$ of radius $\varrho:=\frac{1}{50}$.  

For every $j\in\{0,1,\ldots ,\lfloor \log M\rfloor\}$, let $\Lambda_j=\{\alpha_1 \mathbf{v}_1+\alpha_2 \mathbf{v}_2: (\alpha_1,\alpha_2) \in \mathbb{Z}^2\}$ be the lattice spanned by the vectors $\mathbf{v}_1$ and $\mathbf{v}_2$ of length $2^j\varrho/(2\cos(\alpha/2))\geq 2^j/(100\cos(\pi/30))$, parallel to the lines $L_1$ and $L_2$, respectively. A fundamental cell of the lattice $\Lambda_j$ is a rhombus $\tau_j$ of side lengths $2^j\varrho/(2\cos(\alpha/2))$ and angle $\alpha/2$ at the origin, hence ${\rm diam}(\tau_j)= 2^j\cdot \varrho$.
The translates $\tau_j+\mathbf{v}$, $\mathbf{v}\in \Lambda_j$, form a tiling $\mathcal{T}_j$. Let $\mathcal{L}_j$ denote the set of lines parallel to $L_1$ and $L_2$ spanned by the sides of the rhombi in $\mathcal{T}_j$. 

The constants $\alpha\geq \pi/15$ and $\varrho\geq \frac{1}{50}$ yield the following observations (which generalize \Cref{obs_1} and \Cref{obs_2}).
\begin{observation}\label{obs_1+}
   For every $j\in\{0,1,\ldots ,\lfloor \log M\rfloor\}$, if $\sigma_i\in L_j$, $r(\sigma_i)\in \tau$, and $\tau\in \mathcal{T}_j$, then we have $\tau\subset \sigma_i$.
\end{observation}
\begin{proof}
Recall that $\sigma$ contains a disk $B(r(\sigma),\varrho)$, centered at the reference point $r(\sigma)$ with radius $\varrho=\frac{1}{50}$. Then $\sigma_i=a_i\sigma+b_i$, where $2^j\leq a_i<2^{j+1}$, contains $B(r(\sigma_i),a_i\varrho)\supset B(r(\sigma_i),2^j\varrho)$ with center $r(\sigma_i)=b_i$. The tile $\tau$ is a translate of the rhombus $\tau_j$ of side length $s=2^j\varrho/(2\cos(\alpha/2))$ and apex angle $\alpha$, and so its diameter is $2s\cos(\alpha/2)=2^j \varrho$. If $r(\sigma_i)\in \tau$, then every point $p\in \tau$ is within distance at most $\diam(\tau)=2^j\varrho$ from $r(\sigma)$, which implies that $\tau\subset B(r(\sigma_i),2^j/50)\subseteq \sigma_i$.
\end{proof}
\begin{observation}\label{obs_quad}
For every $j\in \mathbb{N}$ and translation vector $b\in \mathbb{R}^2$, the centrally symmetric convex body $2^{j+1}(\sigma-\sigma)+b$
intersects $O(1)$ lines in $\mathcal{L}_j$.
\end{observation}
\begin{proof}
By assumption, the maximum-area inscribed triangle of $\sigma$ is the regular triangle $T_{\rm in}$ of side length $\sqrt{3}$; and $\sigma$ is contained in the regular triangle $T_{\rm out}$ of side length $2\sqrt{3}$ formed by the tangent lines $L_1$, $L_2$, and $L_3$. Then, we have $\diam(T_{\rm out})=2\sqrt{3}$. Since $\sigma\subset T_{\rm out}$, we have $\sigma-\sigma\subset T_{\rm out}-T_{\rm out}$ and $\diam(\sigma-\sigma)\leq \diam(T_{\rm out} -T_{\rm out})\leq 2\cdot \diam(T_{\rm out})=4\sqrt{3}$. 

Note that $\mathcal{L}_j$ consists of two families of parallel lines. The distance between any two consecutive parallel lines in $\mathcal{L}_j$ is 
$|\mathbf{v}_1|\sin\alpha =  2^{j-1}\varrho \sin\alpha/\cos (\alpha/2)\geq 
2^{j-1}\cdot \frac{1}{50} \cdot \sin\frac{\pi}{15}/\cos\frac{\pi}{30}=\Omega(2^j)$. 
Consequently, for any vector $b\in \mathbb{R}^2$, the body
$2^{j+1}(\sigma-\sigma)+b$ intersects at most 
\[2\cdot \left\lceil \frac{\diam(2^{j+1}(\sigma-\sigma)+b)}{|\mathbf{v}_1|\sin\alpha }\right\rceil
= 2\cdot \left\lceil \frac{2^{j+1}\diam(\sigma-\sigma)}{\Omega(2^j)}\right\rceil 
\leq 2\cdot \left\lceil \frac{2^{j+1} \cdot 4\sqrt{3}}{\Omega(2^j)}\right\rceil 
=O(1)
\]
lines in $\mathcal{L}_j$, as claimed. 
\end{proof}
\smallskip\noindent
\textbf{Online algorithm.} 
Our algorithm is almost the same as in \Cref{sec:disks}. Let $\alg_0(L)$ be the online algorithm for positive homothets of $\sigma$ in the line-separated settings, which is used as a subroutine. The algorithm maintains a hitting set $H\subseteq P$ for the positive homothets presented so far. Upon the arrival of a new homotet $\sigma_i=a_i\sigma+b_i$, if it is already hit by a point in $H$, then do nothing. Otherwise, proceed as follows.
 \begin{itemize}
     \item First, find the layer $L_j$ in which $\sigma_i$ belongs. 
     \item In the tiling $\mathcal{T}_j$, find the tile $\tau\in \mathcal{T}_j$ containing the reference point $r(\sigma_i)$.
     \begin{itemize}
         \item If $P\cap \tau\neq \emptyset$, then choose an arbitrary point $p\in P\cap \tau$ and add it to $H$.
         \item Otherwise, for every line $L\in \mathcal{L}_j$ that intersects $\sigma_i$, direct $L$ such that $L^+$ contains $r(\sigma_i)$, feed $\sigma_i$ to the online algorithm $\alg_0(L)$, and add any new hitting point chosen by $\alg_0(L)$ to $H$.
     \end{itemize}
 \end{itemize}

\smallskip\noindent
\textbf{Competitive analysis.}
The competitive analysis carries over from \Cref{sec:disks}, using three key observations: (1) if $\sigma_i\in L_j$ and $r(\sigma_i)\in \tau\in \mathcal{T}_j$, then $\tau\subset \sigma_i$ (\Cref{obs_1+}), and so we can hit $\sigma_i$ with any point in  $P\cap\sigma_i$;
(2) the lines $L_1$ and $L_2$ each pass through the reference point $r(\sigma)$,
and partition the boundary $\partial \sigma$ into monotone arcs (\Cref{thm:body}). Every line $L\in \mathcal{L}_j$ is parallel to $L_1$ or $L_2$. Consequently, with either direction of $L$, 
the algorithm $\alg_0(L^+)$ is $O(\log n)$-competitive in the line-separated setting (\Cref{thm:separated2}); 
(3) finally, if $\sigma_i=a_i\sigma+b_i$ contains a point $p\in \opt$ of an optimum solution, then $r(\sigma_i)\in -a_i\sigma+b_i+(p-r(\sigma_i))$, and so $\sigma_i$ activates only $O(1)$ subroutines $\alg_0(L^+)$ (\Cref{obs_quad}). 
We conclude with the main result of this section. 


\begin{theorem}\label{thm:homothets}
Given any convex body $\sigma\subset \IR^2$ and a parameter $M\geq 1$, there is an online algorithm with a competitive ratio of $O(\log M\log n)$ for the \textsl{Online Hitting Set} problem for a set $P$ of $n$ points in the plane and a sequence $\mathcal{C}=(\sigma_1,\ldots, \sigma_m)$ of positive homothets $\sigma_i=a_i\sigma+b_i$, where $a_i\in [1, M]$.
\end{theorem}
\begin{proof}
Let $\mathcal{C}$ be a sequence of homothets $\sigma_i=a_i\sigma+b_i$ of a convex object $\sigma$.
For each $j\in\{0,1,\ldots,\lfloor \log M\rfloor\}$, let $\mathcal{C}^j$ be the collection of homothets $\sigma_i=a_i\sigma+b_i$ in $\mathcal{C}$, where $a_i\in \left[2^j, 2^{j+1}\right)$. 
Let $H$ and $\opt$, resp., be the hitting set returned by the online algorithm $\alg$ and an (offline) minimum hitting set for $\mathcal{C}$. For every point $p\in \opt$, let $\mathcal{C}_p$ be the set of homothets in $\mathcal{C}$ that contain $p$. For each $j\in\{0,1,\ldots,\lfloor \log M\rfloor\}$, let $\mathcal{C}^j_p$ be the set of homothets in $\mathcal{C}^j$ that contain $p$, i.e., $\mathcal{C}^j_p=\mathcal{C}^j\cap \mathcal{C}_p$.
Let $H^j_p\subseteq H$ be the set of points that $\alg$ adds to $H$ in response to objects in $\mathcal{C}^j_p$. It is enough to show that for every $j\in\{0,1,\ldots,\lfloor \log M\rfloor\}$ and $p\in \opt$, we have $|H^j_p|\leq O(\log n)$. 
 
Let $\tau$ be the tile in $\mathcal{T}_j$ that contains $p$, and let $\mathcal{C'}_p^j \subseteq \mathcal{C}^j_p$ be the subset of homothets whose reference points are located in $\tau$. To hit the first object $\sigma\in \mathcal{C'}_p^j$, our algorithm adds a point from $P\cap \tau$ to $H$. By \Cref{obs_1+}, any point in $P\cap \tau$ hits $\sigma$, as well as any subsequent object in $\mathcal{C'}_p^j$. Our algorithm adds at most 1 point to $H$ to hit all the homothets in $\mathcal{C'}_p^j$.

It remains to bound the number of points our algorithm adds for objects in $\mathcal{C}_p^j \setminus\mathcal{C'}_p^j$. Assume w.l.o.g.\ that the reference point $r(\sigma)$ is the origin,
and so the reference point of $\sigma_i=a_i\sigma+b_i$ is $b_i$. Notice that if $p\in a_i\sigma +b_i$, then $b_i\in -a_i\sigma+p$. Consequently, the negative homothet 
$D_0=-2^{j+1}\sigma+p$ contains the reference points of all homothets in $\mathcal{C}_p^j$; and so 
$D =2^{j+1}\sigma-2^{j+1}\sigma+p =2^{j+1} (\sigma-\sigma)+p$ contains all positive homothets of $\sigma$ in $\mathcal{C}_p^j$.
For any homothet $\sigma_i\in \mathcal{C}_p^j \setminus\mathcal{C'}_p^j$, our algorithm uses algorithm $\alg_0(L)$ for a line $L\in \mathcal{L}_j$, directed such that $L^+$ contains the center of $\sigma$. 
According to \Cref{obs_quad}, $D$ intersects $O(1)$ lines in $\mathcal{L}_j$. Each of these lines may be used with two possible directions. Overall, for all objects in $\mathcal{C}_p^j \setminus\mathcal{C'}_p^j$, 
algorithm $\alg_0(L)$ is invoked with $O(1)$ directed lines $L$. 

For each directed line $L$, the online algorithm $\alg_0(L)$ maintains a hitting set $H(L)$ for the homothets fed into this algorithm. For the point $p$, let $H_p^j(L)$ denote the set of points that algorithm $\alg_0(L)$ adds to its $H(L)$ in response to objects in $\mathcal{C}_p^j \setminus\mathcal{C'}_p^j$ that it receives as input. By \Cref{thm:separated2}, we have $|H_p^j(L)|\leq 
O(\log |\mathcal{C}_p^j \setminus\mathcal{C'}_p^j|)\leq O(\log n)$ for every directed line $L$. This yields 
$|H^j_p| \leq 1+ O(1) \cdot O(\log n) =O(\log n)$, as required.
 
By construction, we have $H=\bigcup_{j=0}^{\lfloor\log M\rfloor}\bigcup_{p\in\opt}H^j_p$.
We have shown that $|H^j_p|  =O(\log n)$ for all $j\in\{0,1,\ldots,\lfloor \log M\rfloor\}$ and $p\in \opt$. Consequently, we obtain 
\[
    |H|\leq \sum_{j=0}^{\lfloor\log M\rfloor}\sum_{p\in\opt} O(\log n) =(\lfloor\log M\rfloor+1)\cdot |\opt|\cdot O(\log n) =O(\log M\log n)|\opt|,
\]
as claimed.
\end{proof}

Due to \Cref{thm:homothets}, for positive homothets of a convex object with scaling factor in the interval $[1, 1 + \eps]$, where $\eps>0$ is a constant, we have the following corollary.
\begin{corollary}\label{corollary:homothets}
   Given any convex body $\sigma\subset \IR^2$ and constant $\eps>0$, there is an online algorithm with a competitive ratio of $O(\log n)$ for the \textsl{Online Hitting Set} problem for a set $P$ of $n$ points in the plane and a sequence $\mathcal{C}=(\sigma_1,\ldots, \sigma_m)$ of positive homothets $\sigma_i=a_i\sigma+b_i$, where $a_i\in [1, 1+\eps]$.
\end{corollary}

\section{Conclusions and Open Problems}\label{sec:con}

We revisited the \textsl{Online Hitting Set} problem for a set of $n$ points in the plane and geometric objects that arrive in an online fashion such as disks, homothets of a convex body of comparable sizes, or bottomless rectangles in the plane. In all these cases, we designed online algorithms with a competitive ratio of $O(\log n)$, which is the best possible. It remains an open problem whether our results generalize to 3- or higher dimensions. In fact, no $O(\log n)$-competitive algorithm is currently known for simple geometric objects in 3-space, for example, a set of $n$ points and a sequence of unit balls in $\mathbb{R}^3$; or a set of $n$ points $P\subset [0,n)^3\cap \mathbb{Z}^3$ and a sequence of axis-aligned cubes in $\mathbb{R}^3$.

Our results provide further evidence that there may exist $O(\log n)$-competitive algorithms for the \textsl{Online Hitting Set} problem for $n$ points in $\mathbb{R}^d$ and any sequence of objects $\mathcal{C}$ of bounded VC-dimension---an open problem raised by Even and Smorodinsky~\cite{EvenS14}; see also~\cite{KhanLRSW23}. This problem remains open: The best current lower and upper bounds are $\Omega(\log n)$ and $O(\log^2n)$~\cite{AlonAABN09}. No better bounds are known even in some of the most common geometric range spaces, for example, when $P$ is a subset of the grid $[0,n)^2\cap \mathbb{Z}^2$ and $\mathcal{C}$ is a sequence of axis-aligned rectangles in the plane; or when $P$ is a set of $n$ points in the plane and $\mathcal{C}$ is a sequence of disks of arbitrary radii.

\section*{Acknowledgment}
Research by M. De was supported by SERB MATRICS Grant MTR/2021/000584. Research by S. Singh was supported by the Research Council of Finland, Grant 363444.
Research by C.D. T{\'o}th was supported, in part, by the NSF award DMS-2154347.

\bibliographystyle{plainurl}
\bibliography{hitting}

\end{document}